\documentclass[nonacm,format=acmsmall, review=false, screen=true,]{acmart}

\setcopyright{none}
\renewcommand\footnotetextcopyrightpermission[1]{}
\acmConference{}{}{}
\acmBooktitle{}
\acmDOI{}
\acmISBN{}

\usepackage{amsmath}
\usepackage{amsfonts}
\usepackage{booktabs}

\def\ve#1{\mathchoice{\mbox{\boldmath$\displaystyle\bf#1$}}
{\mbox{\boldmath$\textstyle\bf#1$}}
{\mbox{\boldmath$\scriptstyle\bf#1$}}
{\mbox{\boldmath$\scriptscriptstyle\bf#1$}}}
\newcommand\vea{{\ve a}}

\newcommand\veb{{\ve b}}
\newcommand\vecc{{\ve c}}
\newcommand\ved{{\ve d}}
\newcommand\vece{{\ve e}}

\newcommand\veh{{\ve h}}

\newcommand\vep{{\ve p}}

\newcommand\ver{{\ve r}}
\newcommand\ves{{\ve s}}

\newcommand\vev{{\ve v}}
\newcommand\vew{{\ve w}}
\newcommand\vex{{\ve x}}
\newcommand\vey{{\ve y}}

\newcommand\vezeta{{\ve \zeta}}

\newcommand\veeps{{\ve \varepsilon}}

\newcommand\veDelta{{\ve \Delta}}
\newcommand\vemu{{\ve \mu}}
\newcommand\venu{{\ve \nu}}
\newcommand\vezero{{\ve 0}}
\newcommand\veone{{\ve 1}}
\newcommand{\R}{\mathbb{R}}
\newcommand{\Z}{\mathbb{Z}}
\newcommand{\N}{\mathbb{N}}
\newcommand{\Oh}{\mathcal{O}}
\DeclareMathOperator{\poly}{\mathrm{poly}}

\newcommand{\pref}{\ensuremath{\succ}}

\newcommand{\niceproblem}[3]{
    \begin{probbox}{#1}
        \textbf{Input:} #2 \\
        \textbf{Output:} #3
    \end{probbox}
}
\newenvironment{probbox}[1]{%
    \begin{center}
    \begin{minipage}{0.95\linewidth}
    \hrule
    \smallskip
    \textbf{#1}\smallskip\\
}{%
    \smallskip
    \hrule
    \end{minipage}
    \end{center}
}

\newenvironment{proofclaim}{\noindent{\em Proof of the claim.}}{\qedclaim}
\newcommand{\qedclaim}{\hfill $\diamond$ \medskip}

\DeclareMathOperator{\rank}{rank}
\newcommand{\CandRank}[2]{
  \rank_{#1}(#2)
}

\newcounter{lp}

\newtheorem{theorem}{Theorem}
\newtheorem{lemma}{Lemma}
\newtheorem{corollary}{Corollary}
\newtheorem{claim}{Claim}
\newtheorem{proposition}{Proposition}
\newtheorem{openproblem}{Open Problem}
\theoremstyle{remark}
\newtheorem{remark}{Remark}

\title{Continuous Computational Social Choice: A Case Study in Bribery}

\author{Martin Koutecký}
\affiliation{%
  \institution{CSI, Faculty of Mathematics and Physics, Charles University}
  \city{Prague}
  \country{Czech Republic}
}
\email{koutecky@iuuk.mff.cuni.cz}
\author{Nikolaos Melissinos}
\affiliation{%
  \institution{CSI, Faculty of Mathematics and Physics, Charles University}
  \city{Prague}
  \country{Czech Republic}
}
\email{melissinos@iuuk.mff.cuni.cz}
\author{Tung Anh Vu}
\affiliation{%
  \institution{CSI, Faculty of Mathematics and Physics, Charles University}
  \city{Prague}
  \country{Czech Republic}
}
\email{tung@iuuk.mff.cuni.cz}
\author{Lluís Sabater}
\affiliation{%
  \institution{CSI, Faculty of Mathematics and Physics, Charles University}
  \city{Prague}
  \country{Czech Republic}
}
\email{llsabater@iuuk.mff.cuni.cz}

\begin{document}

\begin{abstract}
Computational social choice seeks algorithmic answers to questions about preference aggregation, safety of elections, robustness of outcomes, stability, etc. It overwhelmingly models societies as composed of \emph{discrete} agents.

We propose to study computational social choice problems in a \emph{society continuum} setting, where a society is modeled as a distribution of infinitely many infinitesimal agents of different types.
An analogous approach has been very useful in physics (it is the basis of statistical mechanics), economics (mean field games), and other fields.

As an initial case study, we focus on \emph{election attacks} (bribery and control), which have been extensively studied in the discrete setting.
We show that a broad class of standard election attacks becomes polynomial-time solvable in the society continuum. The class contains problems that are NP-hard discretely, among them Borda- and Bucklin-\textsc{CCDV} and unit-cost Borda-\textsc{Swap Bribery}.
Furthermore, we give polynomial-time algorithms for $k$-Approval-\textsc{Swap Bribery} when $k$ is constant for general costs, and when $k$ varies and the cost function is \emph{additively separable}.
The latter result contrasts with the discrete problem, which we prove NP-complete for additively separable costs and every fixed $k\ge 2$.
In contrast, we prove that Borda-\textsc{Swap Bribery} and $k$-Approval-\textsc{Swap Bribery}, both with general costs, remain computationally hard in the society continuum.

To obtain these results, we use both continuous and discrete optimization techniques, such as the Configuration LP framework and dynamic programming.
Of particular note is the technique underlying our hardness proofs, which shows how to ``reverse the flow of hardness'' between LP formulations and pricing problems.
\end{abstract}

\maketitle

\section{Introduction}
\label{sec:intro}

Modern real-world election campaigning models a society as composed of voter segments which both vote the same and are amenable to be influenced by the same methods, at the same cost~\cite{GreenGerber2019GOTV,Issenberg2012VictoryLab}.
The natural phrasings work in percentage points:
``shift about $1\%$ of segment~$t$ from preferring $a$ to preferring $b$,'' or
``increase turnout of type~$t$ by $2\%$,'' subject to a budget and a cost.
The political science research on these topics is reported in the same terms~\cite{GreenGerber2019GOTV}.
A common observation after Brexit was that ``If [...] Remain [...] managed to persuade \emph{two in a hundred} more voters [...] it would have won.''~\cite{Cassidy2016Remain}
In short, it is common and sensible to think, talk, and write about a society as a continuum of individuals distributed over finitely many types.

This perspective change from discrete to continuous has been fruitful in many fields.
It is standard in statistical mechanics, where mean-field arguments track the fraction of the population in each state rather than individual particles~\cite{pathria2016statistical}; related distributional models appear in sociophysics~\cite{castellano2009statistical,galam2012sociophysics}, in quantitative studies of social influence~\cite{pentland2014social,acemoglu2011opinion}, in mean-field limits of large agent systems~\cite{gomes2010discrete}, and in ``geometry of voting'' approaches that replace finite electorates by distributions over preference structures~\cite{saari1994GeometryOfVoting}.

The continuum limit is not merely a \emph{succinct encoding} of a large
population (akin to high-multiplicity encodings in scheduling~\cite{hochbaumShamir1991HM} or elections~\cite{fitzsimmonsHemaspaandra2019HM}).
It changes the \emph{space of feasible solutions}, and through this, the computational complexity of problems.
In an election attack problem, for example, an attacker can now move \emph{any fraction} of mass between types, rather than being restricted to an integer number of voters.
This can make problems which are hard in the discrete variant tractable in the continuous variant.

This \emph{computational complexity} aspect of continuous modeling has not been studied before in the context of computational social choice, and we initiate its study here.
The lens applies broadly across computational social choice, giving rise to \emph{continuous} computational social choice, a mirror image of the classical discrete view that has been dominant so far.
In this paper we carry out a first case study: the complexity impact of this modeling shift for \emph{election attacks}.

\subsection{Our Contribution}

We defer the formal definitions of elections, voting, and bribery problems to Section~\ref{sec:prelim}; here we just note that the problems we consider are natural analogues of classical discrete problems where a society of $n$ discrete voters is replaced by a continuum of voters distributed over $\tau \in \N$ types.
This can be viewed as the limit of taking increasingly larger election instances with fixed ratios
of voters of each type, until the total population becomes the continuum; hence, we dub it the \emph{society continuum} model.
We use an $\infty$ subscript to denote the continuous variant of a discrete problem, e.g., we write ``Borda-\textsc{Swap Bribery}$_\infty$'' for the continuous analogue of the classical Borda-\textsc{Swap Bribery} problem.

\paragraph{Techniques}
More than just individual positive and negative results, our approach provides a general template and is deeply connected to the techniques underneath it, so we begin with them.
The key technical tool in all of our results is the \emph{Configuration LP}.
The natural linear programming formulation of many election attack problems has exponentially many variables but only polynomially many constraints.
The Configuration LP approach then considers the dual LP, which has polynomially many variables and exponentially many constraints, and solves it by the ellipsoid method with a separation oracle.
Equivalently, the primal LP is solved by column generation, where the dual separation problem becomes the \emph{pricing problem}: finding a variable of negative reduced cost.
The general template for treating our problems thus becomes to define, for the specific voting rule and attack, the primal LP, its pricing problem, and then examine its complexity.

If the pricing problem is tractable, as we show in several cases, this immediately implies tractability of the original voting attack problem.
However, the converse (hardness of pricing implying hardness of attack) does not hold in general; the easy direction is only that hardness of attack implies hardness of pricing.
In this case, our contribution is in showing what kind of arguments make this hardness transfer from pricing to the attack problem possible.
To the best of our knowledge, this technique has not been used in computational social choice before.

\paragraph{Tractable cases}
We first identify a broad class of voting rules and attack problems which become polynomial-time solvable in the society continuum (Theorem~\ref{thm:poly-lp}).
The class contains, for example, every scoring protocol, Condorcet's rule, and Bucklin's rule, under attacks such as \textsc{Shift Bribery} and Constructive Control by Deleting Voters (CCDV), and combinations thereof.
The technique behind this result is simple: each such problem can be translated into polynomially many polynomial-size linear programs.

The attacks \textsc{Bribery}, its priced variant \textsc{\$Bribery}, and \textsc{Swap Bribery} are not covered by this class.
In these problems the attacker can change a voter's preference order arbitrarily, so the set of rankings appearing in an optimal solution is no longer determined by the original types and polynomially bounded in size.
Still, for every scoring protocol we show that \textsc{\$Bribery}$_\infty$ is polynomial-time solvable, because its pricing problem reduces to simple sorting (Theorem~\ref{thm:score-dollar-bribery}).

We push beyond this boundary in two further directions.
For Borda, we show that \textsc{Shift Bribery}$_\infty$ and unit-cost \textsc{Swap Bribery}$_\infty$ are polynomial-time solvable (Corollary~\ref{cor:spwr-applications}); discretely, Borda-\textsc{Shift Bribery} is NP-hard~\cite{elkind2009swapbribery}, even with unit costs~\cite{DBLP:journals/iandc/BredereckCFNN16}.
For $k$-Approval-\textsc{Swap Bribery}$_\infty$, we give an XP algorithm parameterized by $k$ (Corollary~\ref{thm:XP:algo:k-appr}).
Its running time is $m^{\Oh(k)}\poly(L)$, so it is polynomial for every constant $k$.
We also give a polynomial-time algorithm for arbitrary $k$ when the cost function is \emph{additively separable} (Corollary~\ref{cor:k-approval-additive-poly}).
Additive separability is a natural and broad extension of unit costs: each position $j$ of a voter's ranking carries a demotion price $a_j$ and a promotion price $b_j$, and swapping the candidates at positions $j < j'$ costs $a_j + b_{j'}$; unit costs are the special case $a_j = 0$, $b_j = 1$.
What a voter charges thus depends on \emph{where} in her ranking she is asked to change her mind, not on \emph{which} candidates are involved -- the natural regime whenever persuasion cost is driven by salience rather than by the identity of the candidates.
For example, additively separable costs can model that displacing a voter's favorite is expensive while reshuffling the tail is cheap.
It also covers the case in which each position carries an ``esteem'' value $\alpha_1 \ge \cdots \ge \alpha_m$ and a swap is paid for by the esteem it destroys plus a flat fee $\beta$ per swap performed, i.e., $\alpha_j - \alpha_{j'} + \beta$.
In contrast, we prove that discrete $k$-Approval-\textsc{Swap Bribery} with additively separable costs is NP-complete for every fixed $k\ge 2$ (Theorem~\ref{thm:discrete-kapproval-separable}).

\paragraph{Hardness}
Tractability does not extend to general costs: we prove that Borda-\textsc{Swap Bribery}$_\infty$ is NP-hard, even with a single \emph{bribable} voter type (Corollary~\ref{cor:margin-to-borda}), and that no algorithm solves $k$-Approval-\textsc{Swap Bribery}$_\infty$ in time $f(k)\poly(L)$ unless $W[1]=FPT$ (Theorem~\ref{k-approval:hardness}).
So the XP algorithm above is unlikely to be improvable to FPT.

Our results in fact concern the \emph{margin} version of these problems, in which $c^*$ must beat each opponent $c$ by a prescribed margin $\Delta_c$ rather than merely tie it. This is a natural generalization -- a campaign manager rarely wants a bare win -- and it is the main component which makes our hardness transfer work.
With general costs, the margin version actually collapses back to the original problem: we show how to encode the margin requirement using gadgets composed of unbribable voter types (Corollary~\ref{cor:margin-to-borda}).

Remarkably, a slight generalization of unit cost Borda-\textsc{Swap Bribery}$_\infty$ to \emph{uniform} costs, where each type $i$ has a single per-swap cost $q_i$, turns out to be a natural yet apparently so far unstudied optimization problem about permutations.
We state it as Open Problem~\ref{op:uniform} and consider it an important open problem of independent interest.

\newcommand{\tocont}{\,\raisebox{0.1ex}{$\scriptstyle\triangleright$}\,}
\newcommand{\cont}[1]{\textbf{#1}}
\newcommand{\PNP}{\ensuremath{\mathrm{P}^{\mathrm{NP}}_{\|}}}

\begin{table*}[t]
\centering
\small
\setlength{\tabcolsep}{3.5pt}
\resizebox{\textwidth}{!}{%
\begin{tabular}{@{}l|ccc|ccc|c@{}}
\toprule
 & & & & \multicolumn{3}{c|}{\textsc{Swap Bribery}, by cost model} & \\
\cmidrule(lr){5-7}
rule & \textsc{\$Bribery} & \textsc{Shift Bribery} & \textsc{CCDV} & unit & add.\ separable & general & winner \\
\midrule
scoring protocols   & NP-h$^{a}$\tocont\cont{P}$^{1}$ & NP-h$^{b}$\tocont\cont{P}$^{2}$ & NP-h$^{a}$\tocont\cont{P}$^{2}$ & NP-h$^{c}$\tocont ? & NP-h$^{c}$\tocont ? & NP-c$^{t}$\tocont\cont{NP-h}$^{3}$ & P\tocont P \\
\quad Borda         & \multicolumn{1}{c}{$\uparrow$} & \multicolumn{1}{c}{$\uparrow$} & \multicolumn{1}{c|}{$\uparrow$} & NP-c$^{c}$\tocont\cont{P}$^{2}$ & NP-c$^{c}$\tocont ? & \multicolumn{1}{c|}{$\uparrow$} & P\tocont P \\
\quad $k$-Approval  & \multicolumn{1}{c}{$\uparrow$} & P$^{b}$\tocont\cont{P}$^{2}$ & \multicolumn{1}{c|}{$\uparrow$} & P$^{d}$\tocont\cont{P}$^{2}$ & NP-c$^{8}$\tocont\cont{P}$^{4}$ & NP-c$^{d}$\tocont\cont{XP$_k$}$^{5}$, \cont{no FPT}$^{6}$ & P\tocont P \\
Condorcet           & NP-c$^{e}$\tocont\cont{P}$^{2}$ & NP-c$^{h}$\tocont\cont{P}$^{2}$ & NP-c$^{f}$\tocont\cont{P}$^{2,g}$ & NP-c$^{h}$\tocont\cont{P}$^{2,g}$ & NP-c$^{h}$\tocont ? & NP-c$^{h}$\tocont ? & P\tocont P \\
Bucklin             & NP-c$^{l}$\tocont ? & P$^{i}$\tocont\cont{P}$^{2}$ & NP-c$^{j}$\tocont\cont{P}$^{2}$ & ?\tocont ? & ?\tocont ? & NP-c$^{l}$\tocont ? & P\tocont P \\
Dodgson, Young      & \multicolumn{6}{c|}{\PNP-h$^{q}$\tocont ?} & \PNP-c$^{o}$\tocont P$^{g}$ \\
Kemeny, Slater      & \multicolumn{6}{c|}{\PNP-h$^{p,q}$\tocont\PNP-h$^{7}$} & \PNP-c$^{k}$\tocont\PNP-c$^{7}$ \\
\bottomrule
\end{tabular}%
}
\caption{The bribery landscape, discrete\tocont continuum. Each cell gives the complexity of the
discrete problem, then of its society-continuum analogue; \cont{bold} marks a continuum entry
proved here. \emph{Superscript numbers point to our results, letters to the literature.}
$\uparrow$: covered by the row above. \,?\,: open, to our knowledge.
$\PNP$: parallel access to NP, a class containing NP.\\
\textbf{Ours:} $1$~Thm.~\ref{thm:score-dollar-bribery}; $2$~Cor.~\ref{cor:spwr-applications};
$3$~Cor.~\ref{cor:margin-to-borda}; $4$~Cor.~\ref{cor:k-approval-additive-poly};
$5$~Cor.~\ref{thm:XP:algo:k-appr}; $6$~Thm.~\ref{k-approval:hardness};
$7$~homogeneity (Section~\ref{sec:related}); $8$~Thm.~\ref{thm:discrete-kapproval-separable}.\\
\textbf{Prior:} $a$~\citealp{hemaspaandraSchnoor2016dichotomy} ($k \ge 3$);
$b$~\citealp{elkind2009swapbribery}; $c$~\citealp{BaumeisterHR19};
$d$~\citealp{dornSchlotter2012multivariate};
$e$~via Voter Replacement~\citealp{elkindFaliszewskiSlinko2012HammingDR};
$f$~\citealp[Prop.~1]{faliszewskiKarpovObraztsova2022groupSeparable};
$g$~homogeneous variant already polynomial~\citealp{young1977ExtendingCondorcet,fishburn1977Condorcet,rotheSpakowskiVogel2003YoungExactComplexity};
$h$~\citealp{bartholdiToveyTrick1989Dodgson,DBLP:journals/jacm/HemaspaandraHR97};
$i$~\citealp{schlotterFaliszewskiElkind2017campaign};
$j$~\citealp{erdelyiFellowsRotheSchend2015bucklin};
$k$~\citealp{DBLP:journals/tcs/HemaspaandraSV05,DBLP:conf/stacs/Lampis22};
$l$~\citealp[Thms.~4.1 and~5.4]{faliszewskiReischRotheSchend2015bucklin};
$o$~\citealp{DBLP:journals/jacm/HemaspaandraHR97,rotheSpakowskiVogel2003YoungExactComplexity};
$p$~\citealp[Thm.~4.8]{brandtBrillHemaspaandraHemaspaandra2015bypassing};
$q$~at budget~$0$ each attack \emph{is} winner determination, and Young-\textsc{CCDV} is in fact
$\Sigma^p_2$-complete~\citealp[Thm.~9]{fitzsimmonsHemaspaandraHooverNarvaez2019control};
$t$~except plurality and veto~\citealp{elkind2009swapbribery,betzlerDorn2010possibleWinner,baumeisterRothe2012finalStep}.}
\label{tab:landscape}
\end{table*}

\subsection{Related Work}
\label{sec:related}

\paragraph{Geometry of voting}
Saari has developed a ``geometry of voting'' program~\cite{saari1994GeometryOfVoting,saari1995BasicGeometry,saari2001DecisionsElections}, in which he views the simplex as the space of possible election profiles, and views voting rules as inducing a partition of this space into regions won by each candidate.
For this to be well-defined, the voting rule needs to be \emph{homogeneous}: cloning each voter any number of times should not change the outcome.
Most rules, e.g.\ scoring protocols and Condorcet's, are homogeneous; Dodgson's and Young's are not, see below.
Saari's work is \emph{explanatory and axiomatic} -- continuous representations diagnose paradoxes and compare voting rules -- and is not concerned with computational complexity.

\paragraph{Homogeneity and ``homogenized'' Condorcet extensions}
The lack of homogeneity has been viewed as a defect for several Condorcet-style
rules.
Already \citet{young1977ExtendingCondorcet} remarks that the possibility
that cloning an electorate can change the winner is an ``absurdity''~\cite{mccabeDansted2008DodgsonAbsurdity}.
This triggered a line of work that studies ``homogeneous'' variants of the
``closest-to-Condorcet'' rules, in particular Fishburn's homogeneous variant of
Dodgson's rule (often called \emph{Dodgson Clone})~\cite{fishburn1977Condorcet}.
This ``homogeneous Dodgson'' is what we would call ``Dodgson in the society continuum''.
\citet{rotheSpakowskiVogel2003YoungExactComplexity} show that winner and ranking for Young's rule are
$\mathrm{P}^{\mathrm{NP}}_{||}$-complete, but that homogeneous
Dodgson is polynomial-time solvable via
relaxing the ILP formulation for \textsc{Dodgson Score} of
\citet{bartholdiToveyTrick1989Dodgson}.

The other two well-known $\mathrm{P}^{\mathrm{NP}}_{||}$-complete voting rules are Kemeny~\cite{DBLP:journals/tcs/HemaspaandraSV05} and Slater~\cite{DBLP:conf/stacs/Lampis22}.
Both are homogeneous, and thus remain hard even in the society continuum.

\paragraph{Attack problems}
Within the distance-rationalizability framework~\cite{meskanenNurmi2008DistRational}, \citet{elkindFaliszewskiSlinko2012HammingDR} isolate a ``voter replacement rule'' (VRR): the winner minimizes the number of voters that must be replaced by arbitrary votes so that the candidate becomes a Condorcet winner.
Young's, Dodgson's, and VR rules are each defined by the minimum cost of an attack which makes a candidate a Condorcet winner: the minimum number of deleted voters (Condorcet-\textsc{CCDV}), of adjacent swaps (Condorcet-\textsc{Swap Bribery}), and of fully replaced voters (Condorcet-\textsc{Bribery}), respectively.
In the society continuum, these three attacks fall under Theorem~\ref{thm:poly-lp}.

\paragraph{Homogeneous attack models}
In another sense, we study \emph{homogenized} attack
problems.
These have a different motivation from the one for homogenizing Dodgson/Young, because (non-homogeneous) bribery is not ``absurd'' in the same sense.
Indeed, cloning every voter $q$ times and cloning the attack with them again makes $c^*$ win.
What can change under cloning is the \emph{minimum cost} of achieving the goal:
in the ``up-scaled'' election (with cloned voters), there may exist a cheaper attack that exploits the larger electorate.

The society continuum captures the large-scale (or mean-field) limit of this phenomenon.
Fix a continuous type distribution $\vemu$ and let $\mathrm{OPT}_q(\vemu)$ be the minimum cost of an integer attack in an electorate of size $q$ realizing $\vemu$; for large electorates the relevant quantity is the asymptotics of $\mathrm{OPT}_q(\vemu)/q$.
Studying this limit removes rounding artefacts and can, for example, show which hardness results are driven by the discreteness of voters and which stem from a different source, e.g., the discreteness of the set of candidates.

\paragraph{Other fractional objects in computational social choice}
Continuity has also entered computational social choice from several directions.
\citet{xiaConitzerProcaccia2010scheduling} let each manipulator cast a convex combination of rankings and obtain a polynomial-time algorithm for coalitional manipulation under scoring rules; their electorate stays discrete, and the divisibility of the non-manipulators' votes is immaterial to the results.
\citet{freemanBrillConitzer2015tiebreaking} extend homogeneous voting rules from integer profiles to all of $\R_{\ge 0}^{m!}$ to study tie-breaking.
\citet{meir2015plurality} studies plurality dynamics in a model with no finite set of voters, where a real fraction of each type is assigned to each vote; he asks about equilibrium and convergence.
In each case one ingredient of the setting becomes continuous while the others stay discrete.

\paragraph{Separation and optimization as a source of hardness}
Using the separation/optimization equivalence ``in the negative'' goes back to \citet{GLS1981}, who used it to show hardness of computing the weighted fractional chromatic number.
Since then, we are only aware of three further examples: the hardness of fractional Steiner tree packing~\cite{JainMahdianSalavatipour2003packing}, of equilibrium computation in security games~\cite{Xu2016security}, and of signaling in Bayesian zero-sum games~\cite{bhaskarChengKoSwamy2016signaling}.

\section{Preliminaries}
\label{sec:prelim}

With small exceptions, we follow the standard definitions and terminology, see e.g. the textbook~\cite{HandbookComSoC}.

\paragraph{Elections}

An \emph{election~$E=(C,V)$} consists of a set $C$ of \emph{candidates} and a set~$V$ of \emph{voters}, who indicate their preferences over the candidates in $C$.
Usually $n=|V|$ and $m=|C|$.
We use the ordinal model: each voter $v$'s preferences are a \emph{preference order} $\pref_v$, a total order over $C$, or $\emptyset$ if the voter is inactive.
Let $\mathcal{R}$ be the set of rankings (linear orders) of $C$.
For any $\rho\in \mathcal{R}$ and $c\in C$, $\textrm{rank}(c,\rho)$ denotes the rank of the candidate $c$ in the ranking $\rho$; the most preferred candidate has rank 1 and the least preferred one rank $|C|$; equivalently, $c$ sits in \emph{position} $\textrm{rank}(c,\rho)$ of $\rho$.
For distinct candidates~$c,c'\in C$, we write $c\pref_\rho c'$ if $\rho$ ranks~$c$ before~$c'$.
When displaying a ranking as a chain, we put the most preferred candidate on the left.

A \emph{voting rule} is a function which maps an election $E$ to a set of winners $W \subseteq C$.
A candidate $c$ is a \emph{Condorcet winner} if for every other candidate $c'$, the number of voters who prefer $c$ to $c'$ is greater than the number of voters who prefer $c'$ to $c$; a \emph{weak Condorcet winner} (or \emph{co-winner}) is defined with ``at least'' in place of ``greater than''.
Throughout the paper we use the \emph{co-winner} convention: making $c^*$ win means making it a (co-)winner under non-strict inequalities, for Condorcet as well as for scoring protocols.
The convention costs little: For the linear winner conditions used here, the strict-winner variant can be decided by introducing a common slack in all winning inequalities, maximizing it subject to the budget constraint, and testing whether its optimum is positive.
A \emph{scoring protocol} is defined by a vector $\ves = (s_1, \ldots, s_m)$ of non-negative integers such that $s_1 \geq s_2 \geq \cdots \geq s_m$; under such a protocol, each voter $v$ awards $s_j$ points to $c$ if $\text{rank}(c,\pref_v)=j$.
Examples include Plurality ($\ves = (1,0,\ldots,0)$), $k$-Approval ($\ves = (1,\ldots,1,0,\ldots,0)$ with $k$ ones), and Borda ($(m-1,m-2,\ldots,0)$).
Every voting rule ignores all voters with preference $\emptyset$.

\paragraph{Voter Types and Societies}

We capture election attacks by an abstract notion of a ``move'' of population between voter types, in the spirit of~\citet{DBLP:conf/atal/KnopKM18}.

Two voters are of the same \emph{type} if they have identical preferences and identical bribery costs; a bribery action can only affect a voter's preferences, not their cost function.
Let $\mathcal{R}_{\emptyset} = \mathcal{R} \cup \{\emptyset\}$.
A \emph{move cost function} is a function $\pi: \mathcal{R}_\emptyset \rightarrow \mathbb{R}_{\geq 0} \cup \{+\infty\}$, where $\pi(\rho)$ is the per-unit cost of changing a voter's preference to $\rho$.
Thus, a \emph{voter type} is defined by a pair $(\sigma,\pi)$ with $\sigma \in \mathcal{R}_\emptyset$ its initial ranking and $\pi$ a move cost function.
We require $\pi(\sigma)=0$, so leaving mass at its initial ranking has no cost.
Throughout, $\sigma$ denotes the initial ranking of a type and $\rho$ ranges over destination rankings.

Given a set of $\tau$ voter types $\{(\sigma_i, \pi_i)\}_{i \in [\tau]}$, a \emph{society} is defined by a vector $\vemu = (\mu_1, \ldots, \mu_\tau) \in \mathbb{R}_{\geq 0}^\tau$, where $\mu_i$ indicates the population, or \emph{mass}, of the $i$-th type.

\paragraph{Bribery Actions as Moves}
A \emph{move} is defined by a vector $\vex = (x_{i \to \rho})_{i \in [\tau],\, \rho \in \mathcal{R}_\emptyset}$, where $x_{i \to \rho} \in \mathbb{R}_{\geq 0}$ indicates the amount of population of type $i$ whose preference is changed to $\rho$.
A move is \emph{feasible} for a society $\vemu$ if, for all $i \in [\tau]$, $\sum_{\rho \in \mathcal{R}_\emptyset} x_{i \to \rho} = \mu_i$, that is, $\vex$ can be viewed as a matrix whose row marginals are $\vemu$.

A voting rule is indifferent to move costs, so the \emph{output society} after applying $\vex$ to $\vemu$ is $\vemu' = (\mu'_\rho)_{\rho \in \mathcal{R}_\emptyset}$ with $\mu'_\rho = \sum_{i \in [\tau]} x_{i \to \rho}$.
(Slightly abusing notation, $\vemu'$ is indexed by rankings rather than types.)
When evaluating a voting rule or winning condition, we use the restriction $(\mu'_\rho)_{\rho\in\mathcal{R}}$; the coordinate of $\emptyset$ is ignored.
Given the move costs, let $T = |\{\rho \in \mathcal{R}_\emptyset \mid \exists i: \pi_i(\rho) < +\infty\}|$ be the number of \emph{allowed output rankings}, the destinations reachable at finite cost.
\citet{DBLP:conf/atal/KnopKM18} show how various election attacks can be modeled using moves.
In brief: \textsc{Swap Bribery} charges $\pi_i(\rho) = \sum_{(c,c') \in \text{inv}(\sigma_i, \rho)} \delta_i(c,c')$, where $\text{inv}(\sigma_i, \rho)$ is the set of pairs inverted between $\sigma_i$ and $\rho$ and $\delta_i(c,c') \ge 0$ is the given cost of swapping $c$ and $c'$ in a vote of type $i$. \textsc{\$Bribery} charges a flat $\gamma_i$ per unit of type $i$ bribed; the equally short cost functions of \textsc{Shift Bribery} and \textsc{CCDV} are below.
We classify \textsc{Swap Bribery} instances by their swap costs: costs are \emph{uniform} if each type $i$ has a single per-swap cost $q_i \ge 0$ with $\delta_i(c,c') = q_i$ for all pairs $c \neq c'$, and \emph{unit} if in addition $q_i = 1$ for every type $i$ (so that $\pi_i(\rho) = |\text{inv}(\sigma_i,\rho)|$ is the Kendall tau distance); costs that need not have this structure are \emph{general}.

In each case studied here the move cost function is encoded succinctly -- by $\Oh(m^2)$ numbers or fewer -- and $\pi_i(\rho)$ is computable from that encoding in time $\poly(m)$.

For a scoring rule $\mathcal{X}$, the \emph{margin} variant, written $\mathcal{X}$~Margin-\textsc{Swap Bribery}$_\infty$, additionally specifies target margins $\veDelta \in \mathbb{Q}^{C \setminus \{c^*\}}$.
The move must make the score difference of $c^*$ over every $c\neq c^*$ at least $\Delta_c$.
The original problem is the special case $\veDelta=\vezero$.

\subsection{Formal move cost functions of the standard attacks}\label{app:attack-costs}
\begin{itemize}
	\item \textsc{\$Bribery}: for every type $(\sigma_i, \pi_i)$ and ranking $\rho \neq \sigma_i$, $\pi_i(\sigma_i) = 0$ and $\pi_i(\rho) = \gamma_i$ for some constant $\gamma_i \geq 0$; $\pi_i(\emptyset) = +\infty$.
	\item \textsc{Swap Bribery}: for every type $(\sigma_i, \pi_i)$ and $\rho \in \mathcal{R}$, $\pi_i(\rho)$ is $\sum_{(c,c') \in \text{inv}(\sigma_i, \rho)} \delta_i(c,c')$, where $\text{inv}(\sigma_i, \rho)$ is the set of pairs of candidates $(c,c')$ such that $c \pref_{\sigma_i} c'$ but $c' \pref_{\rho} c$ or vice versa, so $\delta_i(c,c')$ is the cost of swapping $c$ and $c'$ in a vote of type $i$; $\pi_i(\emptyset) = +\infty$. We allow $\delta_i(c,c') \in \mathbb{Q}_{\ge 0} \cup \{+\infty\}$; a pair of cost $+\infty$ makes every ranking inverting it unreachable at finite cost, and a type all of whose swap costs are $+\infty$ is \emph{unbribable}. After the hardness results, we show that this convenience can be removed (Lemma~\ref{lem:remove-infinite-costs}).
	\item \textsc{Shift Bribery}: each type $(\sigma_i, \pi_i)$ is equipped with a nondecreasing cost function $\xi_i \colon \{0, 1, \dots, m-1\} \to \mathbb{Q}_{\geq 0}$ with $\xi_i(0) = 0$; for every $\rho \in \mathcal{R}$ obtained from $\sigma_i$ by shifting the preferred candidate up by $k$ positions, $\pi_i(\rho)=\xi_i(k)$; for every other $\rho$ we set $\pi_i(\rho) = +\infty$, and $\pi_i(\emptyset) = +\infty$. (Unit-cost \textsc{Shift Bribery} is the special case $\xi_i(k)=k$.)
	\item \textsc{Constructive Control by Deleting Voters (CCDV)}: for every type $(\sigma_i, \pi_i)$, $\pi_i(\sigma_i) = 0$ and $\pi_i(\rho) = +\infty$ for every $\rho \in \mathcal{R} \setminus \{\sigma_i\}$; $\pi_i(\emptyset) = \gamma_i$ for some constant $\gamma_i \geq 0$.
\end{itemize}

\begin{remark}
	Our notions of types and moves could easily be extended to capture problems such as CCAV (control by adding voters) and \textsc{Support Bribery} (where a voter reports an approval ballot of their top-$k$ candidates, and a briber can change the threshold $k$); we keep the definitions simpler here for clarity of exposition.
\end{remark}

\paragraph{Winning Conditions}
Fix a candidate $c$. Under an anonymous and homogeneous voting rule, whether $c$ wins is a Boolean function $\Phi$ of a society $\vemu\in\R_{\ge0}^{m!}$, whose coordinates correspond to rankings in $\mathcal R$. We call $\Phi$ a \emph{Succinct Polyhedral Winner Regions (SPWR)} winning condition if there exist linear systems $A^\ell\vemu\le\veb^\ell$, $\ell=1,\dots,S(m)$, such that $\Phi(\vemu)=1$ if and only if $\vemu$ satisfies at least one of them. We require $S(m)$ to be polynomial in $m$. Each $A^\ell$ has $m^{\Oh(1)}$ rows and entries of encoding length $m^{\Oh(1)}$. We also require an algorithm that, given $\mathcal R'\subseteq\mathcal R$, computes the restrictions of all $A^\ell$ to the columns in $\mathcal R'$ in time $\poly(|\mathcal R'|,\tau,m)$. A voting rule is SPWR if the winning condition of every candidate is SPWR.

\paragraph{Examples and non-examples}
Every positional scoring rule is SPWR with a single region ($m-1$ inequalities $\mathrm{score}(c)\ge \mathrm{score}(d)$, each score linear in $\vemu$), and so is the weak-Condorcet rule ($m-1$ pairwise-majority inequalities with entries in $\{-1,0,1\}$).
For Bucklin we use a \emph{boundary-inclusive} variant and show it is SPWR.
Copeland, in contrast, is not SPWR.

\subsection{SPWR examples: Bucklin and Copeland}\label{app:bucklin}

\paragraph{Bucklin}
For a level $\ell \in [m]$, let $\mathrm{supp}_\ell(c)$ denote the total population ranking $c$ among its top $\ell$ candidates.
The \emph{majority threshold} is $n / 2$ in the discrete setting, and $\|\vemu\|_1 / 2$ in the continuous setting.

The \emph{Bucklin winning level} is the smallest level $\ell \in [m]$ such that $\mathrm{supp}_\ell(c)$ exceeds the majority threshold for some $c \in C$, i.e., $\mathrm{supp}_\ell(c) > \|\vemu\|_1 / 2$.
Notice that both sides of this expression are linear in $\vemu$.
A \emph{Bucklin winner} is then any candidate maximizing $\mathrm{supp}_\ell$ at the winning level $\ell$.
In the boundary-inclusive variant of Bucklin's rule, $c$ is a winner if and only if there is a level $\ell \in [m]$ such that (i) $\mathrm{supp}_\ell(c)$ reaches the majority threshold, (ii) $\mathrm{supp}_{\ell'}(d)$ does not exceed the threshold for any $d \in C$ and $\ell' < \ell$, and (iii) $\mathrm{supp}_\ell(c) \ge \mathrm{supp}_\ell(d)$ for every $d \in C$.

This variant is SPWR, with one system per level $\ell \in [m]$: condition (i) contributes a single inequality, and conditions (ii) and (iii) one inequality per candidate $d \in C$ each, so each system has $\Oh(m)$ rows.
For (ii) the instance at $\ell' = \ell-1$ suffices, since $\mathrm{supp}_{\ell'}(d)$ is nondecreasing in $\ell'$.
Each $\mathrm{supp}_\ell(c)$ is the sum of the coordinates $\mu_\rho$ over the rankings $\rho$ placing $c$ in one of the first $\ell$ positions, and the threshold is $\frac{1}{2}\sum_{\rho} \mu_\rho$, so every coefficient lies in $\{0, \pm\frac{1}{2}, \pm 1\}$.
Restricting a system to a given set $\mathcal{R}'$ of rankings needs only the position of each candidate in each $\rho \in \mathcal{R}'$.
The variant coincides with the standard rule whenever no support equals the threshold exactly.
The standard strict-majority rule itself fails to be SPWR, because its winner regions are not closed and a minimum-cost move need not exist.
Consider the society $(\frac12 [c^* \succ b \succ a], \frac12 [a \succ b \succ c^*])$:
under unit-cost Shift Bribery, shifting $c^*$ up in any $\varepsilon > 0$ of the second type's mass makes $c^*$ the unique strict-majority Bucklin winner at cost $2\varepsilon$, so no minimum-cost winning move exists.

\paragraph{Why Copeland is excluded}
Copeland is not SPWR.
Write $W_{c^*} = \{\vemu : c^* \text{ wins in } \vemu\}$, and call societies $\vemu_1, \dots, \vemu_L \in W_{c^*}$ a \emph{splitting family} for $c^*$ if $\vemu_i + \vemu_j \notin W_{c^*}$ whenever $i \neq j$.
Suppose $W_{c^*}$ is a union of convex sets, each contained in $W_{c^*}$.
Then no two members of a splitting family lie in a common such set, because then such a set would contain $\frac{1}{2}(\vemu_i + \vemu_j)$, and $W_{c^*}$ is a cone by homogeneity, so it would contain $\vemu_i + \vemu_j$.
A splitting family of size $L$ therefore forces at least $L$ convex sets in the SPWR description.
Now let $m = 2r+2$, and let the $2r+1$ candidates other than $c^*$ form a regular tournament in which every margin has absolute value $1$.
For a set $S$ of $r+1$ of them, let $c^*$ beat each candidate in $S$ by $1$ and lose to each candidate outside $S$ by $2$.
Every such vector of margins is the vector of margins of a nonnegative society~\citep{McGarvey1953}.
In that society $c^*$ scores $r+1$, as does every candidate outside $S$, so $c^*$ is a co-winner.
In the sum of the societies of two distinct sets $S$ and $S'$, candidate $c^*$ scores $|S \cap S'| \le r$, while every candidate outside $S \cap S'$ still scores $r+1$.
These $\binom{m-1}{m/2}$ societies thus form a splitting family.
For every even $m$ the region $W_{c^*}$ therefore needs $2^{m - \Oh(\log m)}$ convex pieces, which exceeds the $\poly(m)$ that SPWR allows.

\subsection{Relationship of SPWR to hyperplane rules and generalized scoring rules}
Hyperplane rules are defined by selecting finitely many affine hyperplanes in the simplex and requiring the rule to be constant on each cell of the induced arrangement.  This class coincides with generalized scoring rules (GSRs)~\cite{XiaConitzer2008GSR,MosselProcacciaRacz2013smooth}.

In a GSR, each vote $\prec\in\mathcal R$ is mapped to a $k$-dimensional score vector $f(\prec)\in\mathbb R^k$, these are summed over voters, and the winner is a function $g$ of the \emph{comparison pattern} among the coordinates of the total score vector $\sum_{i \in [n]} f(\prec_i)$.
Such rules are anonymous and homogeneous, and their winner regions in the simplex are unions of cells cut out by hyperplanes of the form ``coordinate $p$ equals coordinate $q$'' of the total score vector.

SPWR is narrower than hyperplane rules: even if a hyperplane rule is defined by only $\mathrm{poly}(m)$ hyperplanes, a candidate's winner region may be the union of an exponential number of arrangement cells.

\citet{XiaConitzer2009FLC} define \emph{finite local consistency (FLC)} and the \emph{degree of consistency} of a rule as the minimum number of parts in a partition of the profile space such that the rule is consistent on each part.  They show that anonymity together with FLC characterizes generalized scoring rules, and they compute/estimate degrees for many common rules (e.g., scoring rules have degree $1$, Bucklin has degree $\Theta(m)$, and several Condorcet-type rules have very large degrees).

The SPWR condition can be viewed as a \emph{geometric strengthening} of having polynomial degree of consistency: it demands not only that the rule be describable by polynomially many ``cases,'' but that each case correspond to an explicitly given polyhedron in the vote simplex with polynomial description complexity.

\subsection{The Continuous Minimum Move Problem}
\niceproblem{\textsc{Minimum Move}$_\infty$}
{A set of $\tau$ voter types $\{(\sigma_i, \pi_i)\}_{i \in [\tau]}$, a society $\vemu \in \mathbb{Q}^\tau_{\geq 0}$, a preferred candidate $c^* \in C$, a winning condition $\Phi: \mathbb{Q}^{m!}_{\geq 0} \to \{0,1\}$.}
{A move $\vex$ of minimum cost $\sum_{i \in [\tau]} \sum_{\rho \in \mathcal{R}_\emptyset} \pi_i(\rho) x_{i \to \rho}$ such that $\Phi(\venu) = 1$ in the society $\venu$ resulting from applying $\vex$ to $\vemu$.}

\subsection{Linear Programming with Oracles}
Both our algorithms and our hardness proofs concern linear programs of exponential size, which we handle with machinery of \citet{GLS} that is standard in combinatorial optimization but so far uncommon in ComSoC.
We use standard facts from linear algebra and linear programming without further attribution; see \cite[Section~0.1]{GLS} for this background and \cite[Section~6.2]{GLS} for the encoding-length bounds for rational polyhedra used below.

A \emph{linear program (LP) over a polyhedron} $P \subseteq \R^N$ asks to maximize a linear objective $\vecc^\top \vex$ over $P$; \emph{solving} it means returning an optimal solution and its value, or asserting infeasibility, or asserting unboundedness together with a direction along which the objective increases.
The \emph{separation problem} for $P$ is: given $\vey \in \mathbb{Q}^N$, either assert $\vey \in P$, or return a linear inequality valid for $P$ yet violated by $\vey$ (a ``separating hyperplane'' with $P$ on one side and $y$ on the other).
Finally, $P$ is \emph{well-described} if it comes with a bound $\phi$ on the encoding length of the inequalities needed to describe it, and running times are measured in $N$, $\phi$, and the encoding length of the input.
The pivotal fact is that optimizing and separating are two sides of the same coin:

\begin{proposition}[{\cite[Theorem~6.4.9]{GLS}}]\label{prop:gls-equivalence}
	For well-described polyhedra, optimization and separation are polynomially equivalent.
\end{proposition}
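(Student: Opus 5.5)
This statement is a cited classical result, so the plan is to recall the standard argument of Grötschel, Lovász and Schrijver rather than develop a new one. The two directions are \emph{separation $\Rightarrow$ optimization} via the ellipsoid method, and \emph{optimization $\Rightarrow$ separation} via polarity, which reduces to the first direction. Throughout, $P\subseteq\R^N$ is well-described with facet-complexity $\phi$. Then every vertex of $P$ has encoding length $\Oh(N^2\phi)$, and $P$ is either empty or meets a ball of radius $2^{\Oh(N^2\phi)}$. This encoding-length bookkeeping \cite[Section~6.2]{GLS} is what makes every step polynomial in $N$, $\phi$ and the input length.

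\emph{Separation $\Rightarrow$ optimization.} First I would reduce to the case where $P$ is bounded and full-dimensional. Intersect $P$ with a box of radius $2^{\Oh(N^2\phi)}$, which contains all vertices of $P$; unboundedness is detected separately by optimizing over the recession cone. Then blow $P$ up by a tiny $\varepsilon=2^{-\mathrm{poly}(N,\phi)}$. Next, run the central-cut ellipsoid method on $P\cap\{\vex:\vecc^\top\vex\ge\gamma\}$, querying the oracle at each ellipsoid center. The volume shrinks by a factor $e^{-1/(2N+2)}$ per step, while a nonempty full-dimensional rational polyhedron of this complexity has volume at least $2^{-\mathrm{poly}}$. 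Hence polynomially many iterations decide whether $P$ has a point with objective at least $\gamma$. Binary search over $\gamma$ then yields a near-optimal point. Finally, round it with simultaneous Diophantine approximation (the Lovász lattice-reduction / Frank--Tardos step) to an exact optimal vertex, using that distinct vertex values differ by at least $2^{-\mathrm{poly}}$.

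\emph{Optimization $\Rightarrow$ separation.} After translating by a rational interior point, so that $\vezero\in\operatorname{int}P$, consider the polar $P^*=\{\vey:\vey^\top\vex\le1\ \forall\vex\in P\}$. Its facet-complexity is polynomial in $N$ and $\phi$. An optimization oracle for $P$ is exactly a separation oracle for $P^*$. Indeed, given $\vey$, maximize $\vey^\top\vex$ over $P$: if the optimum is at most $1$, then $\vey\in P^*$; otherwise the returned vertex $\vex$ gives the violated inequality $\vex^\top\vey\le1$. By the first direction we can therefore optimize over $P^*$. Now, given a query point $\vez$ for $P$, maximize $\vez^\top\vey$ over $P^*$. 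If the optimum is at most $1$, then $\vez\in P^{**}=P$; otherwise the optimizer $\vey$ is a valid inequality $\vey^\top\vex\le1$ for $P$ that $\vez$ violates.

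The main obstacle is the degenerate case, where $P$ is not full-dimensional or is empty. Then neither the volume argument nor the translation to an interior point works directly. Here I would follow GLS. First find the affine hull of $P$ by repeated optimization, or, for separation, by shallow-cut ellipsoid iterations combined with simultaneous Diophantine approximation to recover the implicit equalities exactly. Then work inside that hull. This rounding to exact rational equations is the delicate technical step; everything else is routine volume and encoding-length estimates.
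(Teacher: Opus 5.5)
This is essentially the paper's route: the paper gives no argument of its own beyond invoking \cite[Theorem~6.4.9]{GLS}, and your outline reproduces the standard Gr\"otschel--Lov\'asz--Schrijver proof behind that citation (ellipsoid method with Diophantine rounding for separation $\Rightarrow$ optimization, polarity for the converse, affine-hull recursion for lower-dimensional $P$). One correction: the early ``blow $P$ up by $\varepsilon$'' step needs the inequalities of $P$ explicitly, because the oracle's cuts are valid for $P$ but not for its $\varepsilon$-neighbourhood, so for lower-dimensional $P$ it is the hyperplane-finding recursion of your last paragraph that carries the argument; likewise, the final rounding should recover the optimal value exactly and then an exact optimal \emph{point}, since a near-optimal point need not lie near any vertex when the optimal face has positive dimension.
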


The direction ``separation $\Rightarrow$ optimization'' powers our positive results, and the construction always goes along the following template.
A \emph{configuration LP} is an LP with exponentially many variables -- here, one per pair (type, destination ranking) -- but few constraints.
Its \emph{dual} then has few variables and one constraint per configuration, and the separation problem of the dual is called the \emph{pricing problem}: find a configuration whose constraint a given dual solution violates, if one exists.
Pricing is a concrete combinatorial problem -- for us, always an optimization over rankings -- and once solved, the rest comes for free:

\begin{proposition}[{primal recovery,~\cite[Lemma~6.5.15]{GLS}}]\label{prop:gls-primal-recovery}
	Consider an LP with (possibly exponentially) many variables but only $N$ constraints besides nonnegativity, all of polynomial encoding length, whose pricing problem is solvable in polynomial time.
	If the LP has a finite optimum, then an optimal solution with at most $N$ nonzero variables can be computed in time polynomial in $N$ and the encoding lengths involved.
\end{proposition}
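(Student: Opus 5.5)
The plan is to derive primal recovery from the ``separation $\Rightarrow$ optimization'' direction of Proposition~\ref{prop:gls-equivalence}, applied to the dual, and then to read off a small primal support. Write the configuration LP as $\min\{\vecc^\top \vex : A\vex = \veb,\ \vex \ge \vezero\}$. Here $A$ has $N$ rows and exponentially many columns $A_j$, one per configuration. Its dual is $\max\{\veb^\top \vey : A_j^\top \vey \le c_j \text{ for all } j\}$. This dual has only $N$ variables. A violated dual constraint for a given $\vey$ is exactly a column of negative reduced cost $c_j - A_j^\top \vey < 0$. So the assumed polynomial-time pricing algorithm is a separation oracle for the dual polyhedron $Q$.

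\textbf{Step 1: $Q$ is well-described.} Each dual constraint is built from one column and one cost coefficient, and these have polynomial encoding length. Hence $Q$ is a well-described polyhedron with facet complexity $\phi$ polynomial in $N$ and the input encoding. Since the primal has a finite optimum, LP duality gives that the dual is feasible and bounded with the same value.

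\textbf{Step 2: run the ellipsoid method and record the oracle's answers.} We run the ellipsoid method (via Proposition~\ref{prop:gls-equivalence}) on the dual, and we record every inequality returned by the pricing oracle. This gives a set $J$ of columns, of polynomial size since the number of oracle calls is polynomial.

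The crucial claim is that the optimum of the dual restricted to the constraints in $J$ equals the optimum over all of $Q$. The reason is that the ellipsoid run behaves identically if the full constraint system is replaced by only those in $J$: every query answered ``feasible'' or ``violated by $A_j$'' gets the same answer. Hence the certificate of optimality it produces is also valid for the restricted system. I expect this to be the main obstacle.

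Making the claim rigorous needs care with the approximate, rounding-based nature of the ellipsoid method. I would handle it as GLS do: use the sliding-objective version to obtain an exact optimum by simultaneous Diophantine approximation, and argue that the final rounding step only invokes the recorded constraints. Alternatively, I would argue that the objective cut $\veb^\top \vey \ge \mathrm{OPT} + \varepsilon$, with $\varepsilon$ below the encoding-length gap, is certified infeasible using only the cuts in $J$.

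\textbf{Step 3: solve the restricted primal.} The restricted dual has the same optimal value as the full one. By LP duality, the restricted primal $\min\{\sum_{j\in J} c_j x_j : \sum_{j \in J} A_j x_j = \veb,\ \vex \ge \vezero\}$ therefore has the same optimum as the full primal. It is an explicit LP of polynomial size, so we solve it in polynomial time and return a basic optimal solution.

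A basic solution has at most $\operatorname{rank}(A) \le N$ nonzero coordinates. Padding it with zeros on the columns outside $J$ gives an optimal solution of the original LP with at most $N$ nonzero variables, as claimed in Proposition~\ref{prop:gls-primal-recovery}. All encoding lengths stay polynomial by the bounds of \cite[Section~6.2]{GLS}.
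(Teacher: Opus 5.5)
Your proposal is correct and follows the standard argument behind the cited \cite[Lemma~6.5.15]{GLS}, which the paper invokes without proof: record every constraint the pricing oracle returns while optimizing the dual by the ellipsoid method, note that the run is identical on the relaxation $Q_J$ so the two dual optima coincide, and then solve the explicit restricted primal for a basic optimal solution with at most $N$ nonzeros. Your worry about the rounding step is unnecessary, because no inspection of the algorithm's internals is needed: the whole optimization procedure, including any Diophantine post-processing, is a deterministic oracle algorithm that is correct for every well-described polyhedron, so it suffices that $J$ records all oracle answers of the entire run and that $Q_J$, being cut out by constraints of $Q$, obeys the same facet-complexity bound.
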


Where pricing is tractable, Proposition~\ref{prop:gls-primal-recovery} yields a polynomial-time algorithm.
Where it is hard, Section~\ref{sec:hardness} runs Proposition~\ref{prop:gls-equivalence} in reverse: a fast bribery algorithm would optimize over a certain dual polyhedron, hence also separate over it -- which we prove hard.

\section{Positive Results}\label{sec:positive}

\paragraph{Succinct Winner Regions: a Polynomial-Size LP}

In the continuum the moves themselves form a polytope, but the societies where $c^*$ wins are possibly \emph{not} convex, and minimizing over a union of polyhedra is not an LP.
SPWR is exactly the hypothesis that this union is short.

\begin{theorem} \label{thm:poly-lp}
	Let $\Phi$ be a SPWR winning condition.
	\textsc{Minimum Move}$_\infty$ with winning condition $\Phi$, with all populations and finite costs given as rationals, is solvable in time $\poly(m, \tau, T, L)$, where $T$ is the number of allowed output rankings and $L$ is the encoding length of the input.
	We assume that the finite-cost type--destination pairs can be enumerated in time $\poly(m,\tau,T,L)$.
\end{theorem}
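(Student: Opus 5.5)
We solve one linear program per winner region and return the cheapest solution. Let $\mathcal{P}=\{(i,\rho) : \pi_i(\rho)<+\infty\}$ be the finite-cost type--destination pairs. By assumption we can enumerate $\mathcal{P}$ in time $\poly(m,\tau,T,L)$, and $|\mathcal{P}|\le \tau T$. Let $\mathcal{R}'\subseteq\mathcal{R}$ be the set of non-empty rankings occurring as destinations in $\mathcal{P}$, so $|\mathcal{R}'|\le T$. Any feasible move of finite cost has $x_{i\to\rho}=0$ outside $\mathcal{P}$. Therefore the output society $\venu$ is supported on $\mathcal{R}'\cup\{\emptyset\}$, and the winning condition depends only on the coordinates $(\nu_\rho)_{\rho\in\mathcal{R}'}$. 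Using the SPWR restriction algorithm, we compute for each $\ell\in[S(m)]$ the restriction $A^\ell_{\mathcal{R}'}$ of $A^\ell$ to the columns in $\mathcal{R}'$. This takes time $\poly(T,\tau,m)$.

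For each $\ell$ we write the LP
\[
\mathrm{LP}_\ell:\quad \min \sum_{(i,\rho)\in\mathcal{P}} \pi_i(\rho)\,x_{i\to\rho}
\quad\text{s.t.}\quad \sum_{\rho:(i,\rho)\in\mathcal{P}} x_{i\to\rho}=\mu_i\ (i\in[\tau]),\quad A^\ell_{\mathcal{R}'}\,\venu\le \veb^\ell,\quad \vex\ge \vezero,
\]
where $\nu_\rho=\sum_{i:(i,\rho)\in\mathcal{P}} x_{i\to\rho}$ is substituted as a linear expression. The LP has $|\mathcal{P}|\le\tau T$ variables and $\tau+m^{\Oh(1)}$ constraints. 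All coefficients have encoding length polynomial in $m$ and $L$, since the $\pi_i(\rho)$ are computable in $\poly(m)$ from the succinct encoding. Hence each $\mathrm{LP}_\ell$ is solvable exactly in polynomial time by the ellipsoid or an interior point method, returning a rational optimal vertex or reporting infeasibility. Every cost is nonnegative, so no LP is unbounded. We solve all $S(m)=\poly(m)$ LPs, discard the infeasible ones, and output the cheapest optimum found, padded with zeros outside $\mathcal{P}$. If every LP is infeasible, we report that no finite-cost winning move exists.

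Correctness rests on the equivalence between winning moves and the union of feasible regions. First, any feasible solution of any $\mathrm{LP}_\ell$ is a feasible move with $\Phi(\venu)=1$, because $\venu$ satisfies system $\ell$. Conversely, take an optimal finite-cost winning move. It is supported on $\mathcal{P}$, and $\Phi(\venu)=1$ means $\venu$ satisfies some system $\ell$. That move is then feasible for $\mathrm{LP}_\ell$, so the minimum over $\ell$ of the LP optima equals the optimum of \textsc{Minimum Move}$_\infty$. This also shows that a minimum is attained: it is the minimum over finitely many LP optima, each attained since the feasible regions are polyhedra.

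The main point needing care is justifying that restricting to the columns in $\mathcal{R}'$ loses nothing. Coordinates $\nu_\rho$ with $\rho\notin\mathcal{R}'$ are identically zero for finite-cost moves, and their columns only multiply zeros. The $\emptyset$ coordinate is ignored by the winning condition by convention. For a rule like Bucklin, the total mass $\|\venu\|_1$ appearing in the threshold is then $\sum_{\rho\in\mathcal{R}'}\nu_\rho$, which the restricted columns capture. The rest is bookkeeping of encoding lengths, giving total time $\poly(m,\tau,T,L)$.
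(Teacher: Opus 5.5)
Your proposal is correct and follows essentially the same route as the paper. You solve one LP per SPWR region, with variables only for the finite-cost type--destination pairs and the region system restricted to the reachable non-empty rankings (mass sent to $\emptyset$ enters only the mass equalities), and you take the minimum over the $S(m)$ regions. One small fix: start the converse direction from an arbitrary finite-cost winning move rather than an optimal one, since the existence of an optimum is something you derive there, not something you may assume.
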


\begin{remark}
Theorem~\ref{thm:poly-lp} uses the regions only through the LP.
It therefore still holds if a region is given by a polynomial-size \emph{extended formulation} rather than by an explicit linear system, with the auxiliary variables added to the LP.
This may extend its applicability to a larger class of voting rules.
\end{remark}

\begin{proof}[Proof of Theorem~\ref{thm:poly-lp}]
	Let
	\[
		\mathcal{T}:=
		\{\rho\in\mathcal{R}_{\emptyset}:
		\pi_i(\rho)<+\infty\text{ for some }i\in[\tau]\}.
	\]
	Thus $|\mathcal{T}|=T$.
	The set $\mathcal{T}$ may contain $\emptyset$, but the SPWR systems have columns only for active rankings in $\mathcal{R}$; this is why we restrict them below to $\mathcal{T}\cap\mathcal{R}$.
	For each $\ell\in[S(m)]$, we construct an LP for the minimum-cost move whose output society belongs to the region $A^\ell\venu\le\veb^\ell$.

	For every $i\in[\tau]$ and $\rho\in\mathcal{T}$ such that $\pi_i(\rho)<+\infty$, introduce a nonnegative variable $x_{i\to\rho}$.
	For each achievable ranking $\rho\in\mathcal{T}\cap\mathcal{R}$, define
	\[
		\nu_\rho:=
		\sum_{\substack{i\in[\tau]\\\pi_i(\rho)<+\infty}}x_{i\to\rho}.
	\]
	Mass sent to $\emptyset$ does not enter $\venu$, because inactive voters do not affect the winner.
	The LP for region $\ell$ is
	\[
		\begin{aligned}
		\min\quad&
			\sum_{i\in[\tau]}
			\sum_{\substack{\rho\in\mathcal{T}\\\pi_i(\rho)<+\infty}}
				\pi_i(\rho)x_{i\to\rho}\\
		\text{s.t.}\quad&
			\sum_{\substack{\rho\in\mathcal{T}\\\pi_i(\rho)<+\infty}}
				x_{i\to\rho}=\mu_i
				&& (i\in[\tau]),\\
		&A^\ell_{\mathcal{T}\cap\mathcal{R}}
			\venu_{\mathcal{T}\cap\mathcal{R}}\le\veb^\ell.
		\end{aligned}
	\]
	Here $A^\ell_{\mathcal{T}\cap\mathcal{R}}$ is the restriction supplied by the SPWR representation.

	Every finite-cost feasible move gives a feasible solution of this LP whenever its output society belongs to region $\ell$.
	Conversely, every feasible LP solution defines such a move by assigning zero mass to all omitted type--destination pairs.
	The objective is the cost of the move.
	Therefore, the LP optimum is exactly the minimum cost of reaching region $\ell$.

	There are at most $\tau T$ variables, $\tau$ mass equalities, and $\poly(m)$ winner-region inequalities.
	All coefficients have encoding length polynomial in $L$.
	The assumed enumeration algorithm constructs the LP in time $\poly(m,\tau,T,L)$, and the LP can be solved within the same polynomial bound.
	Finally, $S(m)$ is polynomial in $m$, so taking the minimum over all regions gives the claimed running time.
\end{proof}

With unit costs an optimal move only shifts $c^*$ up -- folklore for Condorcet, an exchange argument for Borda -- leaving each type at most $m$ allowed outputs, so $T \le m\tau$.
For Borda, both assumptions (unit costs across \emph{pairs} and \emph{types}) are necessary in the exchange argument; the slightly more general uniform cost regime is open (Open Problem~\ref{op:uniform}).

\begin{corollary} \label{cor:spwr-applications}
	\{Condorcet, Score, Bucklin\}-\{\textsc{Shift Bribery}, \textsc{CCDV}\}$_\infty$, Condorcet-\textsc{\$Bribery}$_\infty$, unit cost \{Borda, Condorcet\}-\textsc{Swap Bribery}$_\infty$ are in $\poly(m,\tau,L)$ time.
\end{corollary}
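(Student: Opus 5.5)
The plan is to derive every case of Corollary~\ref{cor:spwr-applications} from Theorem~\ref{thm:poly-lp}. Two things need to be checked for each rule--attack pair. First, the winning condition must be SPWR. Second, the number $T$ of allowed output rankings must be polynomial in $m$ and $\tau$, with the finite-cost type--destination pairs enumerable in polynomial time.

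For the rules, SPWR has already been established earlier. Every scoring protocol (Borda included) uses a single region, weak Condorcet uses a single region of $m-1$ pairwise inequalities, and boundary-inclusive Bucklin uses one system per level. The column-restriction algorithm only needs each candidate's position in each ranking of $\mathcal{R}'$, so it runs in time $\poly(|\mathcal{R}'|,\tau,m)$.

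The count of $T$ goes attack by attack. For \textsc{Shift Bribery}, type $i$ can reach only the $m$ rankings obtained by shifting $c^*$ up by $k\in\{0,\dots,m-1\}$ positions, so $T\le m\tau$ and the pairs are trivially enumerable. For \textsc{CCDV}, each type reaches only $\sigma_i$ and $\emptyset$, so $T\le \tau+1$.

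For Condorcet-\textsc{\$Bribery} the destination set is all of $\mathcal{R}$, so I would first restrict it without loss of optimality. Under weak Condorcet, a bribed unit of mass helps most when it is sent to a ranking with $c^*$ on top: moving $c^*$ to the top in a destination ranking weakly improves every pairwise inequality $N(c^*,d)\ge N(d,c^*)$ and leaves the cost unchanged at the flat $\gamma_i$. Beyond that, the order of the other candidates does not affect $c^*$'s winning condition at all, so one fixed destination such as $c^*\succ$ (any fixed order) suffices. Each type therefore needs only $\sigma_i$ and one bribed ranking, giving $T\le 2\tau$. Since Theorem~\ref{thm:poly-lp} is applied to this restricted LP, it suffices that some optimal move uses only these destinations.

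For unit-cost \textsc{Swap Bribery} under Borda and Condorcet, I would prove that some optimal move only shifts $c^*$ upward within each type's initial ranking. Given any destination $\rho$ of a type-$i$ unit, let $\rho'$ be $\sigma_i$ with $c^*$ moved up to position $\mathrm{rank}(c^*,\rho)$ (or its original position if that is lower).

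\textbf{Condorcet.} The pairs where $c^*$ beats $d$ in $\rho'$ include those in $\rho$, other pairwise relations are irrelevant, and the Kendall distance to $\rho'$ is at most that to $\rho$, because every inversion between $c^*$ and something it passes must also occur in $\rho$.

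\textbf{Borda.} This is the main obstacle. The score gain of $c^*$ is the same in $\rho$ and $\rho'$, but the scores of the other candidates are distributed differently, since $\rho$ may also demote particular opponents. My plan is an exchange argument. Write the cost of $\rho$ as the displacement of $c^*$ plus the extra inversions among the other candidates. Each extra swap between $d,d'$ lowers one opponent's score by $1$ and raises another's by $1$. The same effect on $c^*$'s margin over that opponent can instead be bought at unit cost by moving $c^*$ one more position past the opponent, which gains $+1$ against every opponent simultaneously. Since all types pay the same unit price, this reallocation works across types. I would formalize it as an LP-dual or potential argument showing that any feasible move can be converted to a pure $c^*$-shift move of no greater cost. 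I expect the bookkeeping here to need care, especially when $c^*$ is already at the top of some type, where the surplus swaps must be recharged to other types' shifts.

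With these reductions in place, $T\le m\tau$ in every case, and Theorem~\ref{thm:poly-lp} gives running time $\poly(m,\tau,L)$.
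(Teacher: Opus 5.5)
Your architecture matches the paper's: SPWR plus a bound $T=\Oh(m\tau)$, fed into Theorem~\ref{thm:poly-lp}. Your handling of \textsc{Shift Bribery}, \textsc{CCDV} and Condorcet-\textsc{\$Bribery} is fine; a single common destination with $c^*$ on top works as well as the paper's choice of $\sigma_i$ with $c^*$ moved to the top. The problem is unit-cost \textsc{Swap Bribery}. The paper does not argue this case from scratch: it cites the Dodgson/shift-bribery equivalence for Condorcet and Lemma~\ref{lem:unit-cost-shift-only} (Baumeister et al.) for Borda.

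Your Condorcet exchange is false as stated. Moving $c^*$ in $\sigma_i$ up to position $\mathrm{rank}(c^*,\rho)$ preserves the \emph{number} of opponents $c^*$ beats, not the \emph{set}. Take $\sigma_i=a\succ b\succ c^*$ and $\rho=b\succ c^*\succ a$; you get $\rho'=a\succ c^*\succ b$. Then $c^*$ beats $a$ in $\rho$ but not in $\rho'$, and the inversion $(b,c^*)$ of $\rho'$ is not an inversion of $\rho$. The right replacement moves $c^*$ to just above the $\sigma_i$-highest opponent $d$ that $c^*$ overtakes in $\rho$. This costs at most $K:=|\mathrm{inv}(\sigma_i,\rho)|$: each candidate strictly between $d$ and $c^*$ in $\sigma_i$ forces a distinct inversion of $\rho$ (with $d$ or with $c^*$), on top of $(d,c^*)$ itself.

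For Borda you only give a plan, and what is missing is exactly the lemma. Shifting by the displacement of $c^*$ wastes swaps. Two steps are needed.
\begin{itemize}
\item[(a)] A per-vote bound. Along a shortest adjacent-swap path from $\sigma_i$ to $\rho$, each swap changes $\mathrm{score}(c^*)-\mathrm{score}(d)$ by at most $1$, except the one swap where $c^*$ overtakes $d$, which is worth $2$. By the inversion count above, such a $d$ lies within $K$ positions above $c^*$ in $\sigma_i$. Let $r_i:=\mathrm{rank}(c^*,\sigma_i)$ and shift $c^*$ up by $t=\min(K,r_i-1)$ positions. If $K\le r_i-1$, this dominates $\rho$ against every opponent. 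Otherwise it falls short by at most $K-r_i+1$.
\item[(b)] The recharging you anticipate. Spend the total surplus $\sum_{i,\rho}x_{i\to\rho}\max\{0,|\mathrm{inv}(\sigma_i,\rho)|-r_i+1\}$ on further unit shifts of mass on which $c^*$ is not yet on top. Each such unit gains at least $1$ against every opponent at once. If this room runs out, $c^*$ tops all mass and wins; this is where $\veDelta=\vezero$ is used.
\end{itemize}
Step (b) is essential, because (a) alone does not give per-vote domination once the cap binds. For $\sigma_i=d\succ c^*\succ e_1\succ\dots\succ e_{m-2}$ and $\rho=c^*\succ e_1\succ\dots\succ e_{m-2}\succ d$, the capped shift leaves $c^*$ a margin of $1$ over $d$ in that vote instead of $m-1$. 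Until you write out (a) and (b), or cite the shift-only result as the paper does, the Borda case is unproved.
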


Call $\rho$ a \emph{shift} (of $c^*$ up from $\sigma_i$) if $\rho$ moves $c^*$ some number of positions towards the top of $\sigma_i$, leaving all other candidates' relative order unchanged; the shift by $t$ positions costs exactly $t$.

\begin{lemma}[Shift-only optimum] \label{lem:unit-cost-shift-only}
	Every instance of unit cost Borda-\textsc{Swap Bribery}$_\infty$ has an optimal move $\vex$ such that every ranking $\rho$ with $x_{i \to \rho} > 0$ is a shift of $c^*$ up from $\sigma_i$ (or $\sigma_i$ itself).
\end{lemma}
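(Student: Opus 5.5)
We argue by exchange on an optimal move, working at the level of individual adjacent swaps and using the fact that in the continuum mass can be split arbitrarily. Fix an optimal move $\vex$ (the LP of Theorem~\ref{thm:poly-lp} is feasible and bounded, so one exists). For a voter type $i$ and a destination $\rho$ with $x_{i\to\rho}>0$, write $\rho$ as a shortest sequence of adjacent swaps from $\sigma_i$; there are exactly $|\mathrm{inv}(\sigma_i,\rho)|$ of them, which is the cost. Each swap has one of two kinds of effect on the \emph{margin vector} $(\mathrm{score}(c^*)-\mathrm{score}(d))_{d\ne c^*}$. A swap passing $c^*$ upward over $d$ adds $+1$ to every coordinate and $+2$ to the $d$-coordinate. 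A swap between two opponents $d,e$, with $e$ moving above $d$, adds $+1$ to the $d$-coordinate and $-1$ to the $e$-coordinate. A swap moving $c^*$ down is harmful in every coordinate and can simply be dropped. Because Borda scores are additive over voters and linear in mass, the margin vector of the output society is the initial one plus the mass-weighted sum of these swap effects.

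The main step replaces opponent--opponent swaps by upward shifts of $c^*$. For each $\rho$ in the support, let $\rho'$ be the shift of $c^*$ in $\sigma_i$ up to position $\min(\mathrm{rank}(c^*,\rho),\mathrm{rank}(c^*,\sigma_i))$. Then $\rho'$ costs exactly the number of upward $c^*$-swaps in $\rho$, and it reproduces their margin contributions except for which opponents are passed. This last difference needs care, but in $\rho'$ the passed opponents are those originally just above $c^*$, and the $+1$-to-all part is identical. Note also that a single swap of either kind contributes at most $+2$ to any one coordinate.

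This replacement discards the opponent--opponent swaps, and their total contribution $\vew$ has coordinates summing to $0$. To compensate, I spend the saved budget, equal to the total mass-weighted number of such swaps, on additional upward shifts of $c^*$ in any type where $c^*$ is not yet at the top. One unit of such a shift adds at least $+1$ to every coordinate, whereas one unit of an opponent--opponent swap adds at most $+1$ to any coordinate. Hence one unit of shift dominates one unit of redistribution coordinatewise, and the new move is feasible with cost at most that of $\vex$. I would formalize this by processing the swaps one unit of mass at a time and splitting mass so that each piece remains a shift of its own $\sigma_i$. For the unequal $+2$ terms, I would first check that passing a different opponent never hurts. If it does, I would compare instead against the aggregate bound that the shift adds $\ge 1$ to all coordinates, and charge the remainder to the dominance.

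The obstacle I expect is capacity. The compensating shifts need room: there may not be enough mass-weighted positions in which $c^*$ can still move up. In that case, however, every type would end with $c^*$ at the top, so $c^*$ is the Borda winner with every margin at least $0$. That move is a pure shift move whose cost is at most the shift budget already used, so it is still optimal. I will therefore handle this by cases: either all compensation fits, or $c^*$ reaches the top everywhere. In both cases the result is an optimal move supported on shifts, which proves Lemma~\ref{lem:unit-cost-shift-only}. Unit costs are used precisely so that a redistribution swap and a $c^*$-shift swap cost the same amount in every type, which is what makes the exchange rate $1{:}1$.
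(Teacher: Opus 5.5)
There is a genuine gap, at the step you flag as needing care. Your $1{:}1$ exchange pays only for the opponent--opponent swaps. Replacing $\rho$ by $\rho'$ creates a second, unpaid deficit. Suppose $\rho$ lets $c^*$ pass an opponent $d$ that is not among those immediately above $c^*$ in $\sigma_i$. Then the $d$-coordinate gets $+2$ from that swap in $\rho$ but only $+1$ in $\rho'$. This loss comes on top of the $+1$ that $d$ received from each opponent swap lifting someone over $d$. A compensating shift in another vote gives $d$ only $+1$ per unit (unless it happens to pass $d$). So the comparison is tight in that coordinate, and there is no slack to ``charge the remainder'' to.

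Concretely, let $C=\{c^*,d,e\}$ with two types of mass $1$: $\sigma_1=d\succ e\succ c^*$ and $\sigma_2=d\succ c^*\succ e$. Send all of type $2$ to $c^*\succ d\succ e$, and mass $1/3$ of type $1$ to $\rho=e\succ c^*\succ d$. This move costs $5/3$, leaves the $d$-margin at exactly $0$, and is optimal; the dual point $\zeta_d=2/3$, $\zeta_e=0$, $\alpha_1=4/3$, $\alpha_2=1/3$ certifies it. Your $\rho'$ is $d\succ c^*\succ e$ and saves $1/3$. Your rule permits spending that $1/3$ on the unbribed mass of type $1$, since $c^*$ is not on top there. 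Doing so gives a move of the same cost with $d$-margin $-1/3$, which is infeasible.

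The missing idea is the exchange the paper takes from Baumeister et al.: reinvest the saved budget in the same vote first. That is, replace $\rho$ by the shift of $c^*$ up by $s=|\mathrm{inv}(\sigma_i,\rho)|$ positions, capped at the top. This rests on two facts your proposal lacks. First, if $\rho$ puts $c^*$ above an opponent $d$ that sits $j$ positions above $c^*$ in $\sigma_i$, then $s\ge j$, because every candidate between them is inverted with $c^*$ or with $d$. Second, $d$'s margin gain under $\rho$ is at most $s$, plus $1$ if $\rho$ inverts the pair $\{c^*,d\}$ from $d\succ c^*$ in $\sigma_i$. The shift by $s$ gains $s$ on every coordinate, plus $1$ on the $s$ opponents just above $c^*$. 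So it dominates $\rho$ coordinatewise whenever the cap does not bind.

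Only capped votes leave budget over, namely $s-\mathrm{rank}(c^*,\sigma_i)+1$, and there the per-coordinate deficit is at most this leftover. This is where your redistribution step and your capacity case are exactly right: if $c^*$ ends on top everywhere, it wins outright. Together they spell out a step that the paper's one-line summary leaves implicit.

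A minor point: $\rho'$ costs the number of up-swaps of $c^*$ minus the number of down-swaps, not exactly the former. This only helps you.
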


\begin{proof}[Proof of Lemma~\ref{lem:unit-cost-shift-only}]
	\citet[Theorem~15]{BaumeisterHR19} use precisely this exchange argument to prove that discrete unit-cost Borda-\textsc{Swap Bribery} and unit-price Borda-\textsc{Shift Bribery} have the same optimal budgets. Vote by vote, a successful set of swaps is replaced by shifting $c^*$ up by the number of swaps spent in that vote, capped at $c^*$'s distance from the top; this only improves $c^*$'s margins.\footnote{The proof appears as Theorem~12 of the COMSOC~'18 version of~\cite{BaumeisterHR19}, available at \url{https://people.cs.rutgers.edu/~lirong.xia/COMSOC18/papers/paper\%2055.pdf}; the AAAI'19 version states the result with the proof omitted.}
	The argument is oblivious to voter multiplicities, so it applies verbatim to a society continuum with divisible mass.
\end{proof}

\begin{proof}[Proof of Corollary~\ref{cor:spwr-applications}]
	Every scoring protocol, Condorcet's rule, and Bucklin's rule are SPWR, so by Theorem~\ref{thm:poly-lp} it suffices to bound the number of allowed output rankings by $T = \Oh(m\tau)$ in each case.
	For \textsc{Shift Bribery} and \textsc{CCDV} this is immediate from the cost functions: only the at most $m$ shifts of $c^*$ up from $\sigma_i$, respectively only $\sigma_i$ and $\emptyset$, have finite cost.
	For Condorcet-\textsc{\$Bribery}, at most two output rankings per type suffice. Unbribed mass stays at $\sigma_i$. Any bribed mass may be sent to the ranking obtained from $\sigma_i$ by moving $c^*$ to the first position, because this weakly maximizes all pairwise margins of $c^*$ at the same flat cost.
	For unit cost \textsc{Swap Bribery} an optimal move only shifts $c^*$ up, leaving $\textrm{rank}(c^*,\sigma_i) \le m$ outputs per type.
	For Condorcet this is the classical fact that the Dodgson score~\cite{bartholdiToveyTrick1989Dodgson} equals the unit-cost shift-bribery cost, see~\cite{elkindFaliszewski2010campaign}; for Borda it is Lemma~\ref{lem:unit-cost-shift-only}.
\end{proof}

\begin{remark}
	As discussed earlier, the problems Condorcet-\{\textsc{CCDV}, \textsc{Swap Bribery}, \textsc{Bribery}\} are exactly the homogeneous limits of the score problems for Young, Dodgson, and the Voter Replacement rules, respectively.
\end{remark}

\paragraph{Configuration LPs: Score-\textsc{\$Bribery}$_\infty$}

Theorem~\ref{thm:poly-lp} does not apply to \textsc{\$Bribery}, where a bribed voter may be moved to \emph{any} of the $m!$ rankings, so $T$ is \emph{a priori} unbounded.

We demonstrate the configuration-LP template on Score-\textsc{\$Bribery}$_\infty$,
where the pricing problem is the simplest.
The primal, dual, and pricing problem below will then repeat for other problems studied here, and we will only record how they change.

\begin{theorem} \label{thm:score-dollar-bribery}
	For every scoring protocol $\ves$, Score-\textsc{\$Bribery}$_\infty$ with all populations and prices given as rationals is solvable in time $\poly(m, \tau, L)$.
\end{theorem}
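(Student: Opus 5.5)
Following the configuration-LP template, we write the primal with one variable $x_{i\to\rho}$ per type $i$ and ranking $\rho\in\mathcal{R}$; the mass sent to $\emptyset$ is forced to $0$ because $\pi_i(\emptyset)=+\infty$. The objective is $\sum_{i}\sum_{\rho\neq\sigma_i}\gamma_i x_{i\to\rho}$. The constraints are the $\tau$ mass equalities $\sum_\rho x_{i\to\rho}=\mu_i$ and the $m-1$ winning inequalities
\[
\sum_{i}\sum_{\rho}\bigl(s_{\mathrm{rank}(c,\rho)}-s_{\mathrm{rank}(c^*,\rho)}\bigr)x_{i\to\rho}\le 0,\qquad c\neq c^*.
\]
This LP has $\tau m!$ variables but only $N=\tau+m-1$ constraints besides nonnegativity, all with coefficients of encoding length polynomial in $L$. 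The LP is feasible, since sending all mass to rankings with $c^*$ on top works (or the all-unbribed move if that already wins). Its objective is bounded below by $0$. Hence it has a finite optimum, and by Proposition~\ref{prop:gls-primal-recovery} it suffices to solve the pricing problem in polynomial time.

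For the dual, introduce free variables $y_i$ for the mass equalities and variables $z_c\ge 0$ for the winning inequalities (signs chosen so that the dual is a maximization). The dual reads $\max \sum_i \mu_i y_i$ subject to, for every $i$ and $\rho$,
\[
y_i - \sum_{c\neq c^*} z_c\bigl(s_{\mathrm{rank}(c,\rho)}-s_{\mathrm{rank}(c^*,\rho)}\bigr)\le \pi_i(\rho).
\]
Pricing asks, given $(\vey,\vez)$, whether some $(i,\rho)$ violates this. For each fixed $i$ we therefore minimise over $\rho$ the reduced cost
\[
\pi_i(\rho)+\sum_{c\ne c^*}z_c\, s_{\mathrm{rank}(c,\rho)} - Z\, s_{\mathrm{rank}(c^*,\rho)},\qquad Z:=\sum_{c\neq c^*} z_c .
\]
Since $\pi_i$ takes only two values, we split into two cases. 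The first case is $\rho=\sigma_i$, at cost $0$, which we evaluate directly. The second case is $\rho$ arbitrary at cost $\gamma_i$; it is harmless to include $\sigma_i$ here, since that only overestimates its cost. In the second case we must minimise $\sum_{c\in C} w_c\, s_{\mathrm{rank}(c,\rho)}$ with weights $w_c=z_c$ for $c\ne c^*$ and $w_{c^*}=-Z$. This is an assignment of candidates to positions with product cost $w_c s_j$. Because $s_1\ge\dots\ge s_m$, the rearrangement inequality shows that the minimum is attained by sorting the candidates by increasing weight, so the smallest weight gets the largest score. In particular $c^*$, whose weight $-Z\le 0$ is the smallest, goes first, and the remaining candidates follow in increasing order of $z_c$. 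Comparing the better of the two cases with $y_i$ for every $i$ gives a violated constraint or certifies dual feasibility, in time $\Oh(\tau m\log m)$ plus arithmetic.

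Plugging this pricing routine into Proposition~\ref{prop:gls-primal-recovery} yields an optimal move with at most $\tau+m-1$ nonzero variables in time $\poly(m,\tau,L)$.

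The main point needing care is the verification of the hypotheses of Proposition~\ref{prop:gls-primal-recovery}, namely the bounded encoding length of the dual's facets and the finite optimum. It also needs care that the nonlinear cost structure ($0$ versus $\gamma_i$) is handled by the two-case split rather than inside the sorting. The rearrangement step itself is routine.
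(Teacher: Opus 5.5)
Your proposal is correct and matches the paper's proof almost step for step. It uses the same configuration LP and dual (up to your sign convention on the winning constraints), solves pricing by sorting via the rearrangement inequality, treats $\pi_i(\sigma_i)=0$ and $\gamma_i$ as two separate cases in the same way, and finishes with primal recovery via Proposition~\ref{prop:gls-primal-recovery}. The one step the paper's separation oracle makes explicit and you leave implicit is returning any violated constraint $z_c \ge 0$ first. Your remark that $c^*$ has the smallest weight presumes this, although the sorting step itself is valid for arbitrary weights.
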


The pricing problem will reduce to sorting the opponents by their dual variables.

\begin{proof}[Proof of Theorem~\ref{thm:score-dollar-bribery}]
	The primal is the LP of Theorem~\ref{thm:poly-lp} with all $m!$ destinations allowed: minimize $\sum_{i \in [\tau]} \sum_{\rho \in \mathcal{R}} \pi_i(\rho)\, x_{i\to\rho}$ subject to $\sum_{\rho \in \mathcal{R}} x_{i \to \rho} = \mu_i$ for every $i \in [\tau]$, the winning constraints $\sum_{i \in [\tau]} \sum_{\rho \in \mathcal{R}} x_{i\to\rho} \big(s_{\textrm{rank}(c^*,\rho)} - s_{\textrm{rank}(c,\rho)}\big) \ge 0$ for every $c \neq c^*$, and $\vex \ge \vezero$.
	Recall that in \textsc{\$Bribery}, $\pi_i(\sigma_i) = 0$ and $\pi_i(\rho) = \gamma_i$ for every $\rho \neq \sigma_i$.
	It is always feasible: fix any $\rho$ with $c^*$ first and set $x_{i \to \rho} = \mu_i$ for every type $i$. As $\ves$ is non-increasing, $c^*$ then has score $s_1$ per unit of mass and every opponent at most that, so $c^*$ wins.
	The dual has a free variable $\alpha_i$ per type, a variable $\zeta_c \ge 0$ per winning constraint, the objective $\max \sum_{i \in [\tau]} \mu_i \alpha_i$, and one constraint $\alpha_i + g(\rho) \le \pi_i(\rho)$ per column, where
	\[
		g(\rho) := \sum_{c \neq c^*} \zeta_c \big(s_{\textrm{rank}(c^*,\rho)} - s_{\textrm{rank}(c,\rho)}\big) .
	\]
	The pricing problem is to maximize $g$ over $\mathcal{R}$.
	Here the $\zeta_c$ are fixed and only the positions of the candidates in $\rho$ are ours to choose.
	Moving $c^*$ up one position raises $s_{\textrm{rank}(c^*,\rho)}$ and lowers the score of the single candidate it passes, so it weakly increases every difference $s_{\textrm{rank}(c^*,\rho)} - s_{\textrm{rank}(c,\rho)}$, and some optimal $\rho$ has $c^*$ first.
	The opponents then take positions $2, \dots, m$, whose differences $s_1 - s_2 \le \dots \le s_1 - s_m$ form a fixed non-decreasing vector, and maximizing $g$ is a matching of the $\zeta_c$ against it.
	A dot product of two vectors is largest when both are sorted the same way, so the largest $\zeta_c$ takes the largest difference, that is, the last position.

	Formally, setting $w_{c^*} := \sum_{c \neq c^*} \zeta_c$ and $w_c := -\zeta_c$ for $c \neq c^*$, we have $g(\rho) = \sum_{c \in C} w_c\, s_{\textrm{rank}(c,\rho)}$: an assignment of the score values to the weights.
	Since $\ves$ is non-increasing, the rearrangement inequality~\cite{HardyLittlewoodPolya1952} says that $g$ is maximized by any $\rho$ ranking the candidates in non-increasing order of $w_c$ -- that is, $c^*$ first and the opponents in non-decreasing order of $\zeta_c$ -- found by sorting.
	This yields the following separation oracle.
	First return any violated inequality $\zeta_c \ge 0$.
	Otherwise $\vezeta \geq \vezero$ and it remains to separate the constraints $\alpha_i + g(\rho) \le \pi_i(\rho)$.
	These come in two kinds for each type $i$, because $\pi_i$ is not constant: $\pi_i(\sigma_i) = 0$ whereas $\pi_i(\rho) = \gamma_i$ for every $\rho \neq \sigma_i$.
	So test $\alpha_i + g(\sigma_i) \le 0$ directly, and compare $\alpha_i + g(\rho^*)$ against $\gamma_i$, where $\rho^*$ is the sorted maximizer of $g$.
	If the latter exceeds $\gamma_i$ and $\rho^* \neq \sigma_i$, return the constraint of $\rho^*$.
	If it exceeds $\gamma_i$ and $\rho^* = \sigma_i$, then $\alpha_i + g(\sigma_i) > \gamma_i \ge 0$, so the direct test has already returned the constraint of $\sigma_i$.

	The dual polyhedron is well-described (every returned inequality has coefficients among $0$, $1$, score differences, and $\gamma_i$), nonempty ($(\vezero, \vezero)$ is feasible), and its objective is bounded, because the primal is a nonempty compact polytope with a nonnegative objective. Thus, by strong duality both optima are attained and equal.
	Hence the ellipsoid method with the above oracle finds an optimal dual solution, and Proposition~\ref{prop:gls-primal-recovery} then yields an optimal \emph{primal} solution in time $\poly(m, \tau, L)$.
	Besides nonnegativity the primal has $\tau + m - 1$ constraints, so this solution has at most $\tau + m - 1$ nonzero type--destination variables.
\end{proof}

Moving to \textsc{Swap Bribery} changes one thing: the price $\pi_i(\rho)$ now depends on the destination. Pricing thus no longer collapses to maximizing $g$, and its complexity now depends on the structure of the swap costs and the voting rule.

\paragraph{$k$-Approval-\textsc{Swap Bribery}$_\infty$}

Throughout, the parameter $k$ satisfies $1 \le k < m$ and is part of the input.
We begin with \emph{general} swap costs, giving an algorithm polynomial for every fixed $k$ (\textup{XP} in $k$). We show that, for any voter type, it suffices to consider only a bounded set of allowed output rankings, which we now define.
Consider a ranking $\rho$, a set $S\subseteq C$ with $|S|=k$, and a ranking $\rho'$ whose set of top-$k$ candidates is exactly $S$.
We say that $\rho'$ \emph{respects} $\rho$ if the relative order of any two candidates that are both in $S$, or both in $C \setminus S$, is the same in $\rho'$ as in $\rho$.
The unique ranking whose set of top-$k$ candidates is exactly $S$ and that respects $\rho$ is denoted by $\hat\rho_{S,\rho}$.
We write $\hat\rho_S$ when $\rho$ is clear from the context, and $\hat\rho_{S,i}$ for $\hat\rho_{S,\sigma_i}$. Finally set $\mathcal{T}^\sigma := \{\hat\rho_{S,\sigma} \mid S \subseteq C,\ |S|=k\}$.
For $S\subseteq C$ with $|S|=k$ we also write $\mathcal{T}_{S}$ for the set of rankings placing all of $S$ above all of $C\setminus S$; thus $\hat\rho_{S,\rho}$ is the unique member of $\mathcal{T}_S$ that respects $\rho$, and $\mathcal{T}^\sigma$ picks one member of each $\mathcal{T}_S$.

\begin{claim}\label{claim:minimim:cost:rankings:S}
    Let $C$ be a set of candidates, $\rho$ a ranking over $C$, and $\pi$ a swap-cost function. For any $S\subseteq C$ with $|S|=k$ and ranking $\rho' \in \mathcal{T}_S$, we have $\pi(\hat\rho_{S,\rho})\le \pi(\rho')$.
\end{claim}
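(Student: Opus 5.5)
The plan is to compare inversion sets rather than compute costs directly. Since $\pi(\rho') = \sum_{(c,c') \in \text{inv}(\rho,\rho')} \delta(c,c')$ with all $\delta(c,c') \ge 0$ (possibly $+\infty$), it suffices to show
\[
\text{inv}(\rho,\hat\rho_{S,\rho}) \subseteq \text{inv}(\rho,\rho')
\]
for every $\rho' \in \mathcal{T}_S$. Monotonicity of a sum of nonnegative terms over a larger index set then gives $\pi(\hat\rho_{S,\rho}) \le \pi(\rho')$, including the case where some cost is $+\infty$.

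To prove the inclusion, I classify each unordered pair $\{c,c'\}$ of distinct candidates by its membership in $S$.

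\emph{Both in $S$, or both in $C\setminus S$.} By definition $\hat\rho_{S,\rho}$ respects $\rho$, so it keeps their relative order from $\rho$. Hence such a pair is never in $\text{inv}(\rho,\hat\rho_{S,\rho})$, and there is nothing to check.

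\emph{Mixed pair, $c \in S$ and $c' \notin S$.} Both $\hat\rho_{S,\rho}$ and $\rho'$ lie in $\mathcal{T}_S$, so both place $c$ above $c'$. The pair is therefore inverted relative to $\rho$ in $\hat\rho_{S,\rho}$ exactly when $c' \pref_\rho c$. The same condition decides whether it is inverted in $\rho'$. So the mixed pairs inverted by $\hat\rho_{S,\rho}$ are also inverted by $\rho'$.

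Together the two cases give the inclusion, and the claim follows. In fact the argument shows slightly more: every ranking in $\mathcal{T}_S$ is forced to invert exactly the same mixed pairs, and $\hat\rho_{S,\rho}$ is the unique one that inverts nothing else.

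I do not expect a real obstacle here. The only points needing care are that swap costs are nonnegative, so extra inversions can only add cost, and that $+\infty$ entries are handled by the same monotonicity.
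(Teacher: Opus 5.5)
Your proposal is correct and follows the paper's proof: both split pairs into mixed pairs, which every ranking in $\mathcal{T}_S$ inverts identically, and intra-block pairs, which $\hat\rho_{S,\rho}$ never inverts and which otherwise add only nonnegative cost. Phrasing this as the inclusion $\text{inv}(\rho,\hat\rho_{S,\rho}) \subseteq \text{inv}(\rho,\rho')$ is a slightly tidier packaging of the same argument, and it makes the handling of $+\infty$ costs explicit.
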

\begin{proofclaim}
        Since all swap costs are nonnegative, $\pi(\rho')$ is the sum of the costs of the pairs of
        candidates whose relative order differs between $\rho$ and $\rho'$.
        For every candidate $c\in S$ and every candidate $c'\in C\setminus S$, the relative order
        between $c$ and $c'$ is determined by the fact that all candidates in $S$ must appear above
        all candidates in $C\setminus S$.
        Hence, the contribution of such pairs depends only on the set $S$, and not on the order
        inside $S$ or inside $C\setminus S$.
        On the other hand, any inversion inside $S$ or inside $C\setminus S$ contributes an
        additional nonnegative swap cost. Thus, the minimum-cost ranking with top-$k$ set $S$ is
        obtained by preserving the relative order inside both blocks, namely $\hat{\rho}_S$.
\end{proofclaim}

\begin{corollary} \label{thm:XP:algo:k-appr}
    $k$-Approval Margin-\textsc{Swap Bribery}$_\infty$ can be solved in $m^{\Oh(k)} \poly(L)$ time.
\end{corollary}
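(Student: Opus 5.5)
The plan is to reduce to Theorem~\ref{thm:poly-lp}, adapted to margins, by restricting each type's destinations to $\mathcal{T}^{\sigma_i}$. This family has $\binom{m}{k}\le m^k$ rankings per type. That gives $T\le \tau m^k$ allowed output rankings, hence an LP of size $m^{\Oh(k)}\poly(L)$.

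The first step is the observation that under $k$-Approval the score vector contributed by a ranking $\rho$ depends only on its top-$k$ set $S$. Candidate $c$ receives $1$ if $c\in S$ and $0$ otherwise. So the margin constraints $\mathrm{score}(c^*)-\mathrm{score}(c)\ge\Delta_c$ are linear in $\venu$, and two rankings in the same class $\mathcal{T}_S$ have identical columns in them.

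The second step is an exchange argument. Take any feasible move $\vex$, possibly over all $m!$ destinations. For each type $i$ and each $\rho\in\mathcal{T}_S$ with $x_{i\to\rho}>0$, reroute that mass to $\hat\rho_{S,i}$. By the first step the output scores are unchanged, so all margin constraints stay satisfied. By Claim~\ref{claim:minimim:cost:rankings:S} applied with $\rho=\sigma_i$, the cost $\pi_i(\hat\rho_{S,i})\le\pi_i(\rho)$ does not increase. Since the cost is linear in $\vex$, this shows that some optimal move, if one exists, uses only destinations in $\mathcal{T}^{\sigma_i}$. One small point needs checking: rankings of infinite cost are simply never used, and the finite-cost pairs $(i,\hat\rho_{S,i})$ are enumerated by listing all $k$-subsets $S$ and evaluating $\pi_i$ in $\poly(m)$ time. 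Mass sent to $\emptyset$ has infinite cost in \textsc{Swap Bribery}, so it does not arise.

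The third step writes the LP explicitly. It has variables $x_{i\to S}\ge 0$ for $i\in[\tau]$ and $|S|=k$ with finite cost, equalities $\sum_S x_{i\to S}=\mu_i$, and margin constraints
\[
\sum_{i}\sum_{S\ni c^*,\,S\not\ni c}x_{i\to S}-\sum_i\sum_{S\ni c,\,S\not\ni c^*}x_{i\to S}\ge \Delta_c \quad (c\ne c^*).
\]
The objective is $\sum \pi_i(\hat\rho_{S,i})x_{i\to S}$. This is the single-region SPWR LP of Theorem~\ref{thm:poly-lp}, with right-hand sides $\Delta_c$ in place of $0$. Its proof goes through unchanged, because the margins only alter $\veb$.

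Solving the LP by a polynomial-time LP algorithm then takes time polynomial in its size, namely $\tau m^k$ variables, $\tau+m-1$ constraints and $\poly(L)$ coefficients, which is $m^{\Oh(k)}\poly(L)$. If the LP is infeasible, the margin targets cannot be met at finite cost. The only subtle step is the rerouting argument: I must make sure that Claim~\ref{claim:minimim:cost:rankings:S} covers $+\infty$ swap costs, which it does since infinite costs are nonnegative. Beyond that I expect no real obstacle; the content is simply that the winner region is polyhedral and the relevant destination set has size $m^{\Oh(k)}$.
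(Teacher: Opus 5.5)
Your proof is correct and matches the paper's argument essentially step for step. The paper likewise aggregates, for each type $i$ and each $k$-set $S$, all mass sent into $\mathcal{T}_S$ onto the stable representative $\hat\rho_{S,i}$, with scores unchanged and cost non-increasing by Claim~\ref{claim:minimim:cost:rankings:S}; it then solves the margin version of the LP of Theorem~\ref{thm:poly-lp} over the $\tau\binom{m}{k}$ remaining variables in $m^{\Oh(k)}\poly(L)$ time, and your remarks on $+\infty$ costs and on $\emptyset$ only make explicit what the paper leaves implicit.
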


\begin{proof}[Proof of Corollary~\ref{thm:XP:algo:k-appr}]
	Let $C$ be a candidate set, and $\vemu \in \R_{\ge 0}^\tau$ a society over types $\{(\sigma_i,\pi_i)\}_{i \in [\tau]}$.
    Since the voting rule is SPWR, using Theorem~\ref{thm:poly-lp}, we can directly conclude that
    $k$-Approval Margin-\textsc{Swap Bribery}$_\infty$ can be solved in $\poly(m, \tau, T, L)$ time, where $T$ is the number of allowed output rankings. However, there is no guarantee that the number of allowed output rankings $T$ is bounded.

    We will now prove that we can bound the number of allowed output rankings by restricting, for each type $i \in [\tau]$, the allowed output rankings to the set
    $\mathcal{T}^{\sigma_i} = \{ \hat\rho_{S,i}\mid S \subseteq C\text{ and } |S|=k \}$, while preserving at least one optimum.

    \begin{claim} \label{k:approval:XP:claim:output::types}
        For any feasible move $\vex = (x_{i \to \rho})_{i \in [\tau], \rho \in \mathcal{R}}$, there exists a feasible move $\vex'$ such that:
        \begin{itemize}
            \item the output societies of $\vex$ and $\vex'$ assign every candidate the same $k$-Approval score (in particular, $\vex$ makes $c^*$ a winner with margins $\veDelta$ if and only if $\vex'$ does),
            \item for every $i \in [\tau]$, $x'_{i\rightarrow \rho} = 0$ for all $\rho \notin \mathcal{T}^{\sigma_i}$, and
            \item $\sum_{i\in [\tau]} \sum_{\rho \in \mathcal{R}} x'_{i\rightarrow \rho } \pi_i(\rho) \le \sum_{i\in [\tau]} \sum_{\rho \in \mathcal{R}} x_{i\rightarrow \rho } \pi_i(\rho)$.
        \end{itemize}
    \end{claim}

    \begin{proofclaim}
        Define $\vex'$ simultaneously for all $i \in [\tau]$ and all $S \subseteq C$ with $|S| = k$ by aggregating each set $\mathcal{T}_S$ into its stable representative:
        \[
            x'_{i\rightarrow \hat \rho_{S,i}} := \sum_{\rho\in \mathcal{T}_S} x_{i\rightarrow \rho},
            \qquad
            x'_{i\rightarrow \rho} := 0 \quad \text{for } \rho \in \mathcal{T}_S\setminus \{\hat \rho_{S,i}\} .
        \]
        All rankings in $\mathcal{T}_S$ assign identical $k$-Approval points to every candidate, so every candidate's score is unchanged, and the row sums are preserved, so $\vex'$ is feasible.
        By Claim~\ref{claim:minimim:cost:rankings:S}, $\pi_i(\hat \rho_{S,i})\le \pi_i(\rho)$ for every $\rho \in \mathcal{T}_S$, so the cost does not increase.
    \end{proofclaim}

    Due to Claim~\ref{k:approval:XP:claim:output::types}, we can restrict the LP of Theorem~\ref{thm:poly-lp} to the allowed output rankings $\bigcup_{i \in [\tau]} \mathcal{T}^{\sigma_i}$ (with the winning constraints instantiated as the margin constraints, i.e., $\sum_{i \in [\tau]} \sum_{\rho\in \mathcal{T}^{\sigma_i}} (I_{c^*,\rho}- I_{c,\rho})x_{i\rightarrow \rho} \ge \Delta_c$ for all $c \neq c^*$, where $I_{c,\rho}=1$ if $\CandRank{\rho}{c}\le k$ and $I_{c,\rho}=0$ otherwise).
    Since, for each $i\in [\tau]$, we have $|\mathcal{T}^{\sigma_i}|=\binom{m}{k} = \Oh(m^k)$, the LP has $\tau \binom{m}{k}$ variables and $\Oh(\tau+m)$ constraints. Consequently, it can be constructed and solved in time $\poly\bigl(\tau\binom{m}{k},m,L\bigr)=m^{\Oh(k)}\poly(L)$.
\end{proof}

\paragraph{Additively separable costs}
To get a polynomial time algorithm for a narrower class of costs, we consider again the configuration LP template established previously.
The primal--dual pair is the one of Theorem~\ref{thm:score-dollar-bribery} with two changes: the columns of type $i$ range over its stable rankings $\mathcal{T}^{\sigma_i}$ rather than all of $\mathcal{R}$, and the margin coefficients are $d_{\rho,c} = I_{c^*,\rho} - I_{c,\rho}$, where $I_{c,\rho} := 1$ if $\CandRank{\rho}{c}\le k$ and $I_{c,\rho} := 0$ otherwise.
Writing $\zeta_c \ge 0$ for the dual variable of opponent $c$, the pricing problem of type $i$ therefore reads as follows.

\niceproblem{$k$-\textsc{Approval-Swap Pricing}}
{A set $C$ of size $m$, a distinguished candidate $c^*\in C$, a ranking $\sigma \in \mathcal{R}$, rational pairwise swap costs $\delta(c,c') \ge 0$, and a vector $(\zeta_c)_{c \neq c^*}\in \mathbb{Q}_{\ge 0}^{m-1}$.}
{A ranking $\rho^* \in \mathcal{T}^\sigma := \{\hat\rho_{S,\sigma} \mid S \subseteq C,\ |S|=k\}$ maximizing the value $\sum_{c \neq c^*} (I_{c^*,\rho^*}- I_{c,\rho^*}) \zeta_c - \pi(\rho^*)$.}

We show that pricing is tractable for \emph{additively separable} swap costs: whenever $c$ and $c'$ occupy positions $j < j'$ of $\sigma_i$, their swap cost is $\delta_i(c,c') = a^i_j + b^i_{j'}$. The vectors $\vea^i, \veb^i \in \mathbb{Q}^m$ with $a^i_j+b^i_{j'}\ge 0$ are part of the input.

\begin{theorem} \label{thm:k-approval-unit-separation}
    With additively separable swap costs, $k$-\textsc{Approval-Swap Pricing} can be solved in polynomial time.
\end{theorem}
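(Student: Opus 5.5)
The plan is to split into two cases according to whether $c^*$ is among the top $k$, and in each case solve the pricing problem by a dynamic program over the positions of $\sigma$. Every ranking in $\mathcal{T}^\sigma$ has the form $\hat\rho_{S,\sigma}$, so it suffices to choose the set $S$ with $|S|=k$. The objective then splits into a \emph{profit} part and a \emph{cost} part, and for additively separable costs both should decompose position by position.

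\emph{Profit.} If $c^*\in S$, then $I_{c^*,\rho}=1$ and the profit is $\sum_{c\notin S,\,c\neq c^*}\zeta_c$. If $c^*\notin S$, then $I_{c^*,\rho}=0$ and the profit is $-\sum_{c\in S}\zeta_c$. I will therefore solve two subproblems, one forcing $c^*\in S$ and one forcing $c^*\notin S$, and return the better of the two. In each subproblem every candidate $c$ receives a fixed profit that depends only on whether $c\in S$:
\begin{itemize}
\item if $c^*\in S$: profit $\zeta_c$ when $c\notin S$, and $0$ when $c\in S$;
\item if $c^*\notin S$: profit $-\zeta_c$ when $c\in S$, and $0$ when $c\notin S$.
\end{itemize}

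\emph{Cost.} The ranking $\hat\rho_{S,\sigma}$ preserves the order inside $S$ and inside $C\setminus S$. Hence its inverted pairs are exactly the pairs in which a candidate $c\notin S$ at some position $j$ of $\sigma$ lies above a candidate $c'\in S$ at some position $j'>j$. Each such pair costs $a_j+b_{j'}$. Summing over these pairs gives
\[
\pi(\hat\rho_{S,\sigma})=\sum_{j\notin P_S} a_j\cdot\bigl|\{j'\in P_S: j'>j\}\bigr| \;+\; \sum_{j'\in P_S} b_{j'}\cdot\bigl|\{j\notin P_S: j<j'\}\bigr|,
\]
where $P_S$ denotes the set of positions of $\sigma$ occupied by $S$.

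\emph{Dynamic program.} I scan the positions $p=1,\dots,m$ of $\sigma$ and keep as state the number $t$ of candidates already put into $S$ among positions $1,\dots,p-1$. Deciding whether the candidate at position $p$ joins $S$ costs the following.
\begin{itemize}
\item Putting it into $S$ costs $b_p\,(p-1-t)$, since exactly $p-1-t$ earlier positions are outside $S$.
\item Leaving it out of $S$ costs $a_p\,(k-t)$, since exactly $k-t$ members of $S$ must appear later, given that the final size is forced to be $k$.
\end{itemize}
Each decision also adds the profit of that candidate from the case split. The transition is forbidden if it contradicts the case assumption about $c^*$, or if $t$ would exceed $k$. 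The answer is read off at state $t=k$ after position $m$. This gives $\Oh(mk)$ states with constant-time transitions, and the optimal $S$ is recovered by backtracking. The algorithm outputs $\hat\rho_{S,\sigma}$, which is easily built from $S$ and $\sigma$.

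The main point to verify carefully is the cost identity above: that the inversions of $\hat\rho_{S,\sigma}$ are exactly the (out, later-in) pairs, and that the two counts are determined by $t$ alone. The second count is determined only because $|S|=k$ is fixed in advance, so the number of future members of $S$ is $k-t$. Note that the sign conditions $a_j+b_{j'}\ge 0$ are not needed for the DP itself. They matter only so that the model is well defined and the earlier results, such as Claim~\ref{claim:minimim:cost:rankings:S}, apply.
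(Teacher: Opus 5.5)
Your proposal is correct and follows the paper's proof essentially step for step. It uses the same characterization of the inversions of $\hat\rho_{S,\sigma}$, charges $b_{j'}$ to chosen candidates and $a_j$ to unchosen ones in the same way, and runs the same $\Oh(mk)$ dynamic program on the number of already-chosen candidates. The only difference is cosmetic: the paper avoids your case split on $c^*$ by setting $w_{c^*}:=\sum_{c\neq c^*}\zeta_c$ and $w_c:=-\zeta_c$, so that the profit equals $\sum_{c\in S} w_c$ in both cases.
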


Before giving the full proof, we sketch the intuition behind why separable additive costs make things easier. Relabel the candidates $c_1, \dots, c_m$ in the order of $\sigma$, most preferred first.
Within a fixed top-$k$ set $S$ every ranking earns the same profit, and $\hat\rho_S$ is the cheapest of them, so the search is over sets rather than over rankings.
The difficulty is then the cost: reaching $\hat\rho_S$ inverts exactly the pairs $j<j'$ with $c_j \notin S \ni c_{j'}$, which couples every chosen candidate to every unchosen candidate preceding it in $\sigma$.
Additive separability dissolves the coupling, since a pair's cost $a_j+b_{j'}$ splits into a term $a_j$ charged to $c_j$ and a term $b_{j'}$ charged to $c_{j'}$.
A chosen candidate $c_i$ then pays only for the unchosen candidates preceding it, and an unchosen one only for the chosen candidates following it; both counts are determined by how many of $c_1, \dots, c_i$ lie in $S$, which is what the dynamic program below remembers.

\begin{proof}[Proof of Theorem~\ref{thm:k-approval-unit-separation}]
    Fix an instance and relabel the candidates so that $\sigma=c_1\prec c_2\prec \cdots \prec c_m$, with the given vectors $(a_1,\ldots,a_m),(b_1,\ldots,b_m)$.
    For each set $S\subseteq C$ of size $k$, the value $\sum_{c \neq c^*} (I_{c^*,\rho}- I_{c,\rho}) \zeta_c$ is the same for every ranking $\rho$ with top-$k$ set $S$, and among those, $\hat{\rho}_S$ minimizes the cost (Claim~\ref{claim:minimim:cost:rankings:S}); so, like before, it suffices to optimize over the rankings $\hat\rho_S$.

    Let $S=\{c_{i_1},\ldots,c_{i_k}\}$ where $i_1<i_2<\cdots<i_k$.
    In the ranking $\hat{\rho}_S$, the only inversions with respect to $\sigma$ are pairs
    $(c_j,c_{j'})$ such that $j<j'$, $c_j\notin S$, and $c_{j'}\in S$.
    Therefore
    \begin{equation} \label{eq:k-approval-additive-cost}
        \pi(\hat{\rho}_S)=
        \sum_{\substack{j<j'\\ c_j\notin S,\ c_{j'}\in S}}(a_j+b_{j'}).
    \end{equation}

    Next, set $w_{c^*} := \sum_{c \neq c^*} \zeta_c$ and $w_c := -\zeta_c$ for $c \neq c^*$.
    Then, for every set $S$ of size $k$,
    \[
        \sum_{c \neq c^*} (I_{c^*,\hat{\rho}_S}- I_{c,\hat{\rho}_S}) \zeta_c
        =
        \sum_{c\in S} w_c ,
    \]
    so, combining with~\eqref{eq:k-approval-additive-cost}, the objective to maximize over $S$ is
    \[
        \sum_{c\in S} w_c
        -\sum_{\substack{j<j'\\ c_j\notin S,\ c_{j'}\in S}}(a_j+b_{j'}).
    \]

    This objective admits an $\Oh(mk)$ dynamic program.
    For $i\in \{0,\ldots,m\}$ and $t\in \{0,\ldots,k\}$, let $\mathrm{DP}[i,t]$ denote the
    maximum value obtainable from the first $i$ candidates $\{c_1,\ldots,c_i\}$ if exactly $t$ of
    them are chosen into the top-$k$ set.
    We set $\mathrm{DP}[0,0]=0$ and $\mathrm{DP}[0,t]=-\infty$ for each $t>0$.

    Consider a state $(i,t)$ with $i\ge 1$.
    If we select candidate $c_i$, then $c_i$ becomes the $t$-th selected candidate.
    Among the previous $i-1$ candidates, exactly $t-1$ were selected, so exactly
    $(i-1)-(t-1)=i-t$ were not selected.
    Each of these unselected predecessors forms an inversion with $c_i$ and contributes $b_i$.
    Hence the transition value is
    $\mathrm{DP}[i-1,t-1] + w_{c_i} - b_i(i-t)$.
    If we do not select $c_i$, then exactly $k-t$ selected candidates must still appear to its
    right in the final top-$k$ set.
    Each such later selected candidate forms an inversion with $c_i$ and contributes $a_i$.
    Hence the transition value is
    $\mathrm{DP}[i-1,t] - a_i(k-t)$.
    Therefore, for all feasible $i,t$,
    \begin{align*}
        \mathrm{DP}[i,t]=
        & \max \{
            \mathrm{DP}[i-1,t]- a_i(k-t), \\
            & \mathrm{DP}[i-1,t-1]+w_{c_i}- b_i(i-t)
        \},
    \end{align*}
    where the second term is considered only when $t\ge 1$.

    Correctness rests on the following exact rewriting: for a set $S$ of size $k$, writing $s_i := \mathbf{1}[c_i \in S]$ and $t_i := \sum_{j \le i} s_j$, the objective equals
    \[
        \sum_{i=1}^m \Big( s_i\, w_{c_i} - s_i\, b_i (i - t_i) - (1 - s_i)\, a_i (k - t_i) \Big),
    \]
    since $b_i$ is charged once per unselected predecessor of a selected $c_i$ (there are $i - t_i$ of them) and $a_i$ once per selected successor of an unselected $c_i$ (there are $k - t_i$).
    Defining $\mathrm{DP}[i,t]$ as the maximum of the first $i$ summands over choices with $t_i = t$ yields the displayed recurrence, and the value of an optimal top-$k$ set is $\mathrm{DP}[m,k]$.
    After filling the table, we reconstruct an optimal set $S$ by backtracking through the chosen
    transitions and output the corresponding stable ranking $\hat{\rho}_S$.

    The DP table has $\Oh(mk)$ states and each state is processed in constant time.
    Hence the pricing problem is polynomial-time solvable.
\end{proof}

\begin{corollary} \label{cor:k-approval-additive-poly}
    $k$-Approval Margin-\textsc{Swap Bribery}$_\infty$ with additively separable swap costs is polynomial time solvable.
\end{corollary}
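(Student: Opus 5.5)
The plan is to instantiate the configuration-LP template of Theorem~\ref{thm:score-dollar-bribery}, with Theorem~\ref{thm:k-approval-unit-separation} serving as the pricing oracle.

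First, the primal. By Claim~\ref{k:approval:XP:claim:output::types}, it suffices to let the columns of type $i$ range over $\mathcal{T}^{\sigma_i}$ rather than over all of $\mathcal{R}$. The primal is then: minimize $\sum_i\sum_{\rho\in\mathcal{T}^{\sigma_i}}\pi_i(\rho)x_{i\to\rho}$ subject to $\sum_{\rho}x_{i\to\rho}=\mu_i$ for each $i$, the margin constraints $\sum_i\sum_\rho d_{\rho,c}x_{i\to\rho}\ge\Delta_c$ for each $c\neq c^*$, and $\vex\ge\vezero$. This LP has $\tau+m-1$ constraints besides nonnegativity, and all coefficients have polynomial encoding length. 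Note that $\pi_i(\rho)$ is finite for every $\rho\in\mathcal{T}^{\sigma_i}$, because additively separable costs are finite.

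Next, the dual. It has a free variable $\alpha_i$ for each type, a variable $\zeta_c\ge0$ for each opponent, objective $\max\sum_i\mu_i\alpha_i+\sum_c\Delta_c\zeta_c$, and one constraint $\alpha_i+\sum_c d_{\rho,c}\zeta_c\le\pi_i(\rho)$ for each $i$ and $\rho\in\mathcal{T}^{\sigma_i}$. The separation oracle works as follows. Given $(\vec\alpha,\vezeta)$, it first returns any violated constraint $\zeta_c\ge0$. Otherwise, for each type $i$ it calls the algorithm of Theorem~\ref{thm:k-approval-unit-separation} with $\sigma=\sigma_i$, the vectors $\vea^i,\veb^i$, and the current $\vezeta$. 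This yields $\rho^*\in\mathcal{T}^{\sigma_i}$ maximizing $\sum_c d_{\rho,c}\zeta_c-\pi_i(\rho)$. If this maximum exceeds $-\alpha_i$, the oracle returns the constraint of $(i,\rho^*)$; otherwise the point is feasible. Every returned inequality has coefficients in $\{0,\pm1\}$ and right-hand side $\pi_i(\rho^*)$, a sum of at most $m^2$ input numbers, so the dual polyhedron is well-described.

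The main point requiring care is boundedness and feasibility, since margins may make the primal infeasible. The dual is always nonempty: take $\vezeta=\vezero$ and $\alpha_i=\min_\rho\pi_i(\rho)$, which is at most $0$ and in fact equals $0$ because $\pi_i(\sigma_i)=0$. The primal is feasible if and only if the dual is bounded. I would decide primal feasibility up front, either by running the ellipsoid method on the dual (which reports unboundedness) or by solving a phase-one LP that maximizes the minimum slack $\min_c(\sum d\,x-\Delta_c)$ over the same columns, whose pricing is again of the same form. If the primal is infeasible, we report that no winning move exists. Otherwise the primal is a nonempty polytope with a nonnegative objective, so both optima are finite and equal by strong duality.

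Finally, the ellipsoid method with this oracle (Proposition~\ref{prop:gls-equivalence}) solves the dual, and Proposition~\ref{prop:gls-primal-recovery} recovers an optimal primal move with at most $\tau+m-1$ nonzero variables. The whole procedure runs in time polynomial in $m$, $\tau$ and $L$, since each oracle call consists of $\tau$ runs of the $\Oh(mk)$ dynamic program.
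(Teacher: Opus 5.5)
Your proposal is correct and follows the paper's own proof. Both restrict the configuration LP to the columns $\mathcal{T}^{\sigma_i}$, separate its dual by checking $\vezeta\ge\vezero$ and running the dynamic program of Theorem~\ref{thm:k-approval-unit-separation} once per type, and then apply the ellipsoid method (whose unboundedness report certifies primal infeasibility) followed by primal recovery via Proposition~\ref{prop:gls-primal-recovery}; your check that $(\vezero,\vezero)$ is dual feasible and your optional phase-one LP just spell out details the paper leaves implicit.
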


\begin{proof}[Proof of Corollary~\ref{cor:k-approval-additive-poly}]
    Consider the configuration LP of Corollary~\ref{thm:XP:algo:k-appr}, restricted for each $i\in[\tau]$ to variables $x_{i \to \rho}$ with $\rho \in \mathcal{T}^{\sigma_i}$.
    Its dual has $\tau + m - 1$ variables, and its separation problem decomposes into checking the nonnegativity constraints directly and one instance of $k$-\textsc{Approval-Swap Pricing} per voter type (the free per-type dual variable $\alpha_i$ is an additive constant in type $i$'s constraints).
    Each of these calls is solvable in polynomial time by Theorem~\ref{thm:k-approval-unit-separation}.
    All coefficients returned by this oracle have polynomial encoding length, so the dual polyhedron is well-described.
    The ellipsoid method solves the dual (while also detecting unboundedness, which certifies primal infeasibility) and the primal recovery (Proposition~\ref{prop:gls-primal-recovery}) then yields an optimal primal solution with at most $\tau+m-1$ nonzero variables, i.e., an explicit optimal move.
\end{proof}

\section{Hardness Results}\label{sec:hardness}

We first show that the tractability of additively separable costs in the society continuum does not extend to discrete electorates.
The remainder of the section concerns general swap costs in the society continuum.
Our positive results follow the route ``pricing is easy $\Rightarrow$ bribery is easy.''
For the continuous hardness results we establish the converse, ``pricing is hard $\Rightarrow$ bribery is hard'' -- which is not automatic, since the trivial implication goes the other way.
The bridge (Lemma~\ref{lem:bribery-to-separation}) turns a hypothetical bribery algorithm for merely the decision problem into a separation oracle for the dual polyhedron $K$.
It then suffices to prove the hardness of separation over $K$.
We do so for Borda (NP-hardness) and $k$-Approval ($W[1]$-hardness, parameter-preservingly).

\subsection{A Discrete Contrast for Additively Separable Costs}

Recall that the input for an additively separable instance supplies two arrays $\vea^v,\veb^v$ for every vote $v$.
If candidates occupy positions $i<j$ in the original vote, then swapping this pair costs $a_i^v+b_j^v$.
The positions in this formula always refer to the original vote, as in the standard static pair-cost model for \textsc{Swap Bribery}.

\begin{theorem}\label{thm:discrete-kapproval-separable}
For every fixed $k\ge 2$, constructive $k$-Approval-\textsc{Swap Bribery} with additively separable nonnegative rational costs is NP-complete. This holds even when every $a_i^v\in\{0,1\}$ and every $b_i^v\in\{1,2\}$.
\end{theorem}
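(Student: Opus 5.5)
Membership in NP is routine: a certificate is, for every vote, the new ranking. Its cost is a sum of $\Oh(m^2)$ pair costs $a_i^v+b_j^v$ taken from the input arrays, and the winner check for $k$-Approval takes polynomial time. The substance is NP-hardness, and I would reduce from a problem whose hardness comes from integrality. The natural source is \textsc{Exact Cover by 3-Sets} (X3C), restricted if needed to instances where every element lies in exactly three sets, so that score counts are uniform. The guiding idea is that in the discrete setting a single vote cannot split its mass between two different top-$k$ sets. By Claim~\ref{claim:minimim:cost:rankings:S}, each bribed vote commits to one set $S$ and pays~\eqref{eq:k-approval-additive-cost} for $\hat\rho_S$. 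That formula also governs the discrete cost, since it is a statement about single rankings.

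\emph{Construction, first for $k=2$.} Take the candidates $c^*$, one candidate $e$ per element, and a supply of dummy candidates used as padding. For every set $X=\{e_1,e_2,e_3\}$ I would create a small gadget of votes. In these votes the top two positions hold candidates from $X$ or dummies, and $c^*$ sits just below them. In the positions just below $c^*$ come further candidates that a cheap swap could promote. Each voter's cheap options are controlled by the arrays. I would set $a_j=0$ on positions whose occupant the briber may demote for free, and $a_j=1$ elsewhere. I would set $b_{j'}=1$ on the position of $c^*$, so that promoting $c^*$ is cheap, and $b_{j'}=2$ everywhere else, which makes promoting anyone other than $c^*$ expensive. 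The intended unit of bribery, one ``set-choice'', then costs a fixed amount per vote, for instance exactly $1$ per vote. Such a choice pushes $c^*$ into the top $2$ of all votes in the gadget of $X$ and demotes exactly one occurrence of each of $e_1,e_2,e_3$. Unbribed padding votes fix the scores so that every element candidate exceeds $c^*$ by a prescribed amount. The budget is then chosen so that $n/3$ full set-choices, and nothing more, fit within it. For general $k\ge 2$ I would fill the extra top positions of every vote with dummies whose $a$-value is $1$ and that sit in many padding votes. Demoting a dummy then only wastes budget, and the dummy's score stays harmless.

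\emph{Correctness.} The forward direction should be direct: from an exact cover, bribe the gadgets of the chosen sets. For the backward direction I would argue per vote, using formula~\eqref{eq:k-approval-additive-cost}. Any ranking other than the intended one either promotes a non-$c^*$ candidate, which costs $b=2$, or demotes a protected candidate, which costs $a=1$ extra. So the ratio of ``score gain for $c^*$ over its rivals'' to cost is strictly worse than in the intended move. A counting argument over the total deficit then forces the budget to be spent only on full set-choices. These choices must demote every element exactly once, which yields an exact cover.

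\emph{Main obstacle.} The hard step is calibrating the costs. The arrays are indexed by \emph{positions}, so the cost of demoting a candidate is shared with every candidate promoted past it. This rules out partial gadget bribes that touch an element twice, and it equally rules out bribes that only shuffle dummies. The only values allowed are $a\in\{0,1\}$ and $b\in\{1,2\}$. I would first try gadgets in which each vote has exactly one $a=0$ position, directly above $c^*$. This makes the intended swap cost $1$ and every alternative cost at least $2$. I would then verify by exhaustive case analysis on the possible top-$k$ sets $S$ of a single vote, via~\eqref{eq:k-approval-additive-cost}, that no alternative has a better gain-to-cost ratio.
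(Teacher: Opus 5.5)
\textbf{The gap: nothing ties the votes of a gadget together.} In your design, each intended bribe lifts $c^*$ by one position and demotes exactly one element candidate. So a ``set-choice'' for $X=\{e_1,e_2,e_3\}$ is spread over three votes (for $k=2$ no single vote could demote three candidates anyway). In the discrete problem the briber may bribe any subset of these votes independently.

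Under your intended moves, the final score profile depends only on how many times each element is demoted and how many votes are bribed. It does not depend on which gadgets the demotions come from. Your budget/deficit count therefore forces only that every element is demoted exactly once. That is achieved by picking, for each element $e$, an arbitrary gadget vote in which $e$ sits directly above $c^*$. This is a system of distinct representatives, not an exact cover. Every X3C instance in which each element lies in some set is mapped to a yes-instance; this includes every instance of the restricted version with each element in exactly three sets. The per-vote ``gain-to-cost ratio'' argument cannot repair this, since it concerns single votes and the failure is across votes. (The gain is also a vector over rivals, not a scalar.)

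\textbf{Why tuning $a,b$ will not fix it.} In your gadgets the only cost-$1$ pair in a vote is (position directly above $c^*$, $c^*$). So every intended move is a one-position shift of $c^*$, and your backward direction aims to show that nothing else is ever useful. If that were provable, your instance would be equivalent to a $k$-Approval \textsc{Shift Bribery} instance, which is polynomial-time solvable~\cite{elkind2009swapbribery}, and your reduction would give $\mathrm{P}=\mathrm{NP}$. Hardness has to come from bribes that also promote non-$c^*$ candidates and whose effect is transmitted between votes through the score constraints of auxiliary candidates. Those are exactly the moves you price out with $b=2$.

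\textbf{What the paper does instead.} It does not build a new gadget. It reuses the \textsc{Multicolored Clique} reduction of Dorn and Schlotter for $2$-Approval, in which all swaps cost $1$ except two cost-$2$ pairs in each special (selection/consistency) vote. It realizes these costs with $a=0$, $b=1$ in ordinary votes, and with $a_{k+1}=1$, $b_k=2$ (otherwise $a=0$, $b=1$) in special votes. It lifts to $k$-Approval by placing $k-2$ distinct dummies at the top of each vote and the rest at the bottom. It then checks that no pair becomes cheaper and that the only newly expensive pairs, $(k+1,j)$ with $j\ge k+3$, are never needed.

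Your NP-membership argument is fine.
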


\begin{proof}Membership in NP follows from the standard static pair-cost formulation.
A certificate specifies the final profile.
For each vote, the minimum cost of reaching its final order is the sum of the costs of the pairs inverted between the original and final orders.
Every such pair must be swapped, while an adjacent-swap sorting sequence swaps each inverted pair exactly once.
Hence, the total cost and the resulting $k$-Approval scores can be checked in polynomial time.

We first fix a $k\ge 2$. We then modify the reduction of \citet[Theorem~3]{DornSchlotter2010COMSOC} from \textsc{Multicolored Clique}.

We start by creating exactly the same $m$ candidates and $n$ voters as in \citet[Theorem~3]{DornSchlotter2010COMSOC}.

In addition, we create $(k-2)n$ dummy candidates $y_1,\ldots,y_{(k-2)n}$. For each voter, we extend the ranking by placing $k-2$ of the new dummy candidates at the top and the remaining dummy candidates at the bottom. Moreover, we have sufficiently many dummy candidates to ensure that the $k-2$ candidates placed at the top are distinct for every voter. In particular, none of these candidates can obtain a higher score than our preferred candidate.

We denote each vote $v$ by
\[
    x_1^v\succ x_2^v\succ \ldots \succ x_{m'}^v,
\]
where $m'=m+(k-2)n$.

Recall that, in the original proof, all swaps have cost $1$, except for two pairs in each special selection or consistency vote. We use the same distinction here, with their special voters also being our special voters.

For every ordinary voter $v$, where all swaps have cost $1$, we set $a_i^v=0$ and $b_i^v=1$ for every $x_i^v$. Thus, all swaps still have cost $1$.

For every special voter $v$, we instead set
\[
 a_i^v=
 \begin{cases}
 1,&i=k+1,\\
 0,&i\ne k+1,
 \end{cases}
 \qquad
 b_j^v=
 \begin{cases}
 2,&j=k,\\
 1,&j\ne k.
 \end{cases}
\]

Before proceeding, we compare the swap costs between candidates that appear in both constructions. The original candidates now occupy positions $k-1$ to $m+(k-2)$. Let $i<j$, and let $c_i,c_j$ be two candidates that occupy the $i$-th and $j$-th positions, respectively, in the original voter. Their new swap cost is
\[
 a_{i+k-2}^v+b_{j+k-2}^v=
 \begin{cases}
 2,&(i,j)=(1,2),\\
 2,&i=3\text{ and }j\ge4,\\
 1,&\text{otherwise}.
 \end{cases}
\]
Thus, the two original cost-$2$ pairs retain their costs, and no pair becomes cheaper. The only newly expensive pairs are $(3,j)$ with $j\ge5$, which, in our new enumeration, correspond to $(k+1,j)$ with $j\ge k+3$.

It remains to argue that the swaps that need to be performed are exactly the same in the original construction and in ours. As in the original proof, since we have kept all swap costs at least $1$, the structure of the voters ensures that we never need to consider swapping a pair involving $x_i^v$ for any $i\ge k+3$ (corresponding to $i\ge5$ in the original proof).

Therefore, the only new candidates that need to be considered are the top $k-2$ dummy candidates.

Observe that all top $k-2$ candidates of every voter have a score lower than that of the preferred candidate. Hence, there is no reason to move any of them down in the ranking. Moreover, all of the budget in the original proof is spent on reducing the costs associated with the relevant candidates. Consequently, there is no reason to include any candidate appearing in the first $k-2$ positions in a swap. Thus, all swaps take place among the candidates in positions $k-1$ to $k+2$, corresponding exactly to the candidates occupying the top four positions in the original construction.

Finally, the swaps between these candidates have exactly the same costs in both constructions. Therefore, the reduction holds for our cost function.

\end{proof}

\subsection{The Bridge: Bribery to Separation}\label{sec:oracle-to-dual}

The bridge in this subsection turns an algorithm for the bribery decision problem into a separation oracle for the dual polyhedron used in both hardness proofs.
We first state the common primal and dual LPs.

Both hardness constructions use a single bribable voter type.
Let $\mathcal{Q}$ be the set of rankings indexing its possible destinations.
For Borda, $\mathcal{Q}=\mathcal{R}$; for $k$-Approval, $\mathcal{Q}=\mathcal{T}^{\sigma_1}$.
For each $\rho\in\mathcal{Q}$ and $c\ne c^*$, let $d_{\rho,c}$ be the score margin of $c^*$ over $c$ contributed by one unit of mass at $\rho$.
Thus
\[
	d_{\rho,c}=\textrm{rank}(c,\rho)-\textrm{rank}(c^*,\rho)
\]
for Borda, while $d_{\rho,c}=I_{c^*,\rho}-I_{c,\rho}$ for $k$-Approval.
Given a mass $\mu$ and required margins $\veDelta$, write $\veb:=(\mu,\veDelta)$.
The primal LP is
\[
	\begin{aligned}
	P(\veb):\qquad
	\min\quad &\sum_{\rho\in\mathcal{Q}}\pi(\rho)x_\rho\\
	\text{s.t.}\quad
	&\sum_{\rho\in\mathcal{Q}}x_\rho=\mu,\\
	&\sum_{\rho\in\mathcal{Q}}d_{\rho,c}x_\rho\ge\Delta_c
		&& (c\ne c^*),\\
	&\vex\ge\vezero,
	\end{aligned}
\]
Its dual feasible region is
\[
	\begin{aligned}
	K=\Big\{(\alpha,\vezeta):\quad
	&\alpha+\textstyle\sum_{c\ne c^*}d_{\rho,c}\zeta_c\le\pi(\rho)
		&&(\rho\in\mathcal{Q}),\\
	&\vezeta\ge\vezero\Big\}.
	\end{aligned}
\]
The dual objective for right-hand side $\veb$ is $\veb^\top\vey$, where $\vey=(\alpha,\vezeta)$.
The coefficients $d_{\rho,c}$ are integers of absolute value at most $m-1$.
Together with the succinct encoding of $\pi$ from Section~\ref{sec:prelim}, this makes $K$ well-described.

Write $D=(\mathcal{Q},d,\pi)$ for the fixed LP data and $L$ for the total input length.
Suppose that an algorithm $\mathcal{A}$ receives $D$, a feasible rational right-hand side $\veb$, and a rational budget $\beta$.
It decides whether the optimal value of $P(\veb)$ is at most $\beta$.
Standard bounds on rational LP optima allow the exact value $V(\veb)$ to be recovered with polynomially many calls to $\mathcal{A}$~\cite[proof of Theorem~6.4.9]{GLS}.
Below, a value query means this polynomial-time reduction to $\mathcal{A}$.

\begin{lemma}\label{lem:bribery-to-separation}
	The separation problem over $K$ is solvable with $\poly(L)$ calls to $\mathcal{A}$ and $\poly(L)$ additional time.
	Every call uses the same data $D$ and has encoding length $\poly(L)$.
	Consequently, a running time of $\poly(L)$, respectively $f(k)\poly(L)$, for $\mathcal{A}$ gives the same type of running time for separation over $K$.
\end{lemma}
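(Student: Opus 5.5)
Our plan is to route separation over $K$ through optimization over $K$, and optimization over $K$ through value queries to $\mathcal{A}$, following Proposition~\ref{prop:gls-equivalence}. By LP duality, for every right-hand side $\veb=(\mu,\veDelta)$ with $P(\veb)$ feasible, $V(\veb)=\max\{\veb^\top\vey:\vey\in K\}$. Hence a value query for $\veb$ evaluates the support function $h_K(\veb):=\max_{\vey\in K}\veb^\top\vey$. Note that $K$ is nonempty: $(\alpha,\vezeta)=(\min_\rho\pi(\rho),\vezero)$ is feasible, using $\pi(\sigma)=0$ for an upper bound on $\alpha$. Also $K$ is well-described with facet complexity $\phi=\poly(L)$.

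\textbf{Optimization over $K$ from value queries.} For an objective $\veb$ with $\mu$ arbitrary and $\veDelta$ arbitrary, we handle the structural constraints of $P(\veb)$ as follows. If $\mu<0$, the primal is infeasible, so the dual is unbounded or $K$ is empty. If $\veDelta$ is not achievable, the primal is infeasible and the dual is unbounded. Detecting these cases needs a feasibility test. We get it by asking $\mathcal{A}$ with a huge budget $\beta$, namely $2^{\poly(L)}$ above any vertex optimum. Otherwise we use a ray of the recession cone, which is $\{(\alpha,\vezeta):\alpha+\sum_c d_{\rho,c}\zeta_c\le 0,\ \vezeta\ge 0\}$. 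Since $\mathcal{A}$ is assumed to take only feasible $\veb$, we instead first check feasibility of the small-dimensional system ourselves; this is where the most care is needed. For bounded objectives we must return an optimal \emph{vertex}, not just the value. We do this by the standard perturbation/coordinate-fixing trick: perturb $\veb$ lexicographically by $\varepsilon,\varepsilon^2,\dots$ with $\varepsilon=2^{-\poly(L)}$ so that the dual optimum is unique. We then recover each coordinate $y_j$ as a finite difference $(h_K(\veb+\eta\vece_j)-h_K(\veb))/\eta$ for small rational $\eta$, rounding to the unique rational of denominator at most $2^{\poly(L)}$ via continued fractions. Each evaluation is one value query with data $D$ unchanged and encoding length $\poly(L)$.

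\textbf{Separation from optimization.} With an optimization oracle for the well-described $K$, we apply the reverse direction of Proposition~\ref{prop:gls-equivalence} (GLS Theorem~6.4.9 via polarity). Concretely, we shift $K$ so that an interior or relative-interior point is at the origin, optimize over the polar by the ellipsoid method, and read off a violated inequality. This uses polynomially many optimization calls. An alternative, more elementary route is available because the lemma only needs some violated inequality. Given $\vey^0=(\alpha^0,\vezeta^0)$, first check $\vezeta^0\ge\vezero$. Then a violated constraint exists iff $\alpha^0>\min_\rho(\pi(\rho)-\sum_c d_{\rho,c}\zeta^0_c)$. This minimum equals the optimum of the primal with objective tilted by $\vezeta^0$, i.e.\ $\min\{\sum_\rho\pi(\rho)x_\rho-\sum_c\zeta^0_c\sum_\rho d_{\rho,c}x_\rho : \sum_\rho x_\rho=1\}$. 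That is a Lagrangian of $P$, and it can be evaluated as $\min_{\veDelta}(V(1,\veDelta)-\veDelta^\top\vezeta^0)$, itself a concave-maximization in $m-1$ variables solvable with polynomially many value queries by ellipsoid. To output the actual ranking $\rho$, we note that the primal solution recovered by Proposition~\ref{prop:gls-primal-recovery}-style arguments has small support. We therefore plan to rely on the GLS equivalence directly, which returns a violated inequality of $K$ with encoding length at most $\phi$.

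\textbf{Accounting and main obstacle.} All oracle calls keep $D$ fixed and change only $\veb$, with encoding length $\poly(L)$. So a $\poly(L)$ or $f(k)\poly(L)$ algorithm $\mathcal{A}$ yields separation within the same form, since $f(k)$ multiplies polynomially many calls. The main obstacle is that the GLS reverse direction, optimization $\Rightarrow$ separation, needs optimization returning vertices and certifying unboundedness and emptiness. Our oracle only evaluates $h_K$ on feasible primal right-hand sides. We therefore expect the hardest step to be extending value queries to a full strong optimization oracle over $K$: handling unbounded directions, i.e.\ infeasible $\veb$, without calling $\mathcal{A}$ on them, and extracting optimal vertices by perturbation.
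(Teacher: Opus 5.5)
\textbf{The gap: objectives for which $P(\veb)$ is infeasible.} Your architecture is the paper's: value queries give optimization over $K$ via a perturbed right-hand side and finite differences, and optimization gives separation via Proposition~\ref{prop:gls-equivalence}. But the step you flag as the main obstacle is left open, and it is a missing idea, not a technicality.

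The optimization-to-separation direction of Proposition~\ref{prop:gls-equivalence} needs a strong optimization oracle for \emph{every} $\veb\in\mathbb{Q}^m$. Whenever $P(\veb)$ is infeasible, that oracle must return a recession direction $\ved\in K_0$ with $\veb^\top\ved>0$. None of your suggestions produces this. $\mathcal{A}$ is only specified on feasible right-hand sides, and even a correct yes/no answer from a huge-budget call gives no ray. ``Checking feasibility of the small-dimensional system ourselves'' is also unavailable: $P(\veb)$ has one column per $\rho\in\mathcal{Q}$, i.e.\ $m!$ columns for Borda and $\binom{m}{k}$ for $k$-Approval.

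The missing observation (Proposition~\ref{prop:achievable-margins}) is that the recession cone $K_0$ you wrote down is exactly the dual polyhedron of the same instance with all move costs set to $0$. So it does not depend on $\pi$. Its separation problem is the zero-cost pricing problem $\max_{\rho\in\mathcal{Q}}\sum_{c\ne c^*}d_{\rho,c}\zeta_c$, after checking $\vezeta\ge\vezero$ directly. As in Theorem~\ref{thm:score-dollar-bribery}, sorting solves it: put $c^*$ first (resp.\ approved) and the opponents in non-decreasing order of $\zeta_c$. For $k$-Approval the value depends only on the top-$k$ set, so the maximum over $\mathcal{T}^{\sigma_1}$ equals the maximum over $\mathcal{R}$. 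Running the ellipsoid method on the well-described cone $K_0$ then does one of two things. It either certifies that $\veb^\top\vey$ is bounded on $K_0$, which is equivalent to feasibility of $P(\veb)$. Or it returns the required ray. Both happen in polynomial time with no call to $\mathcal{A}$.

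\textbf{The bounded case also needs pinning down.} Three points must be made explicit before your sketch is correct. First, the perturbation $\veb'$ must be signed so that $P(\veb')$ stays feasible, or you cannot query it. The paper raises the mass by $\varepsilon$ and lowers the $j$-th margin by $\varepsilon^{j+1}$. It witnesses feasibility by adding $\varepsilon$ mass at a ranking $\hat\rho$ with all $d_{\hat\rho,c}\ge0$. Second, you must show that the dual optimum at $\veb'$ is also optimal at $\veb$. The paper gets this from Megiddo--Chandrasekaran: every basis feasible for $\veb'$ is feasible for $\veb$. Third, you need an a priori step size keeping $\veb'+\eta\vece_j$ inside the same optimal basis. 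The paper takes $\delta=\lambda/(2H)$ from Cramer-rule bounds. Uniqueness of the dual optimum together with pointedness of $K$ does place $\veb'$ in the interior of the feasible right-hand sides, but you still need such a quantitative bound. With these in place, the finite differences are exact, and the continued-fraction rounding is unnecessary.
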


\paragraph{Standard-form notation}
We use generic LP notation for the remainder of the argument.
Fix an ordering $c_1,\dots,c_{m-1}$ of $C\setminus\{c^*\}$ and introduce a slack variable $s_c\ge0$ for every margin constraint.
Let $\vev=(\vex,\ves)$ contain all primal variables.
Let $\vep$ be their objective-coefficient vector: its $x_\rho$-entry is $\pi(\rho)$ and its $s_c$-entry is $0$.
Then
\[
	P(\veb)=\min\{\vep^\top\vev:A\vev=\veb,\ \vev\ge\vezero\}.
\]
The column of $x_\rho$ in $A$ is
\[
	\vea_\rho=(1,d_{\rho,c_1},\dots,d_{\rho,c_{m-1}})^\top,
\]
and the column of $s_{c_j}$ is $\veh_{c_j}=(0,-\vece_j)^\top$.
Thus $A$ is integral, has rank $m$, and every entry has absolute value at most
\[
	M:=\max\{1,m-1\}.
\]

A basis $B$ consists of $m$ linearly independent columns of $A$.
Write $A_B$ for the resulting square matrix and $\vep_B$ for the corresponding objective coefficients.
For a right-hand side $\ver$, the basic variables are $A_B^{-1}\ver$.
The basis is feasible for $\ver$ if this vector is nonnegative, and it is nondegenerate if none of its coordinates is zero.

We first handle objectives that are unbounded over $K$.
This part uses the pricing structure of the bribery LP.
A recession direction of $K$ is a vector $\ved$ such that $\vey+\lambda\ved\in K$ for every $\vey\in K$ and every $\lambda\ge0$.
If $\veb^\top\ved>0$, then $\ved$ certifies that $\veb^\top\vey$ is unbounded above over $K$.

\begin{proposition}\label{prop:achievable-margins}
	The primal $P(\veb)$ is feasible if and only if $\veb^\top\vey$ is bounded above over $K$.
	Given any rational $\veb$, one can either decide that $P(\veb)$ is feasible or return a recession direction $\ved$ of $K$ such that $\veb^\top\ved>0$.
	This takes time polynomial in $m$ and in the encoding lengths of $\veb$ and the scoring vector, independently of the move costs $\pi$.
\end{proposition}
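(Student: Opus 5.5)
The plan is to treat feasibility of $P(\veb)$ through Farkas' lemma, which relates it to the recession cone of $K$, and then to decide feasibility by optimizing over that cone with a cost-free pricing oracle.

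\textbf{Recession cone and the equivalence.} First I would note that $K$ is nonempty: since $\pi\ge0$, the point $(\alpha,\vezeta)=(0,\vezero)$ lies in $K$. Here every $\rho\in\mathcal{Q}$ is assumed to have finite cost; otherwise the column is simply dropped from $\mathcal{Q}$. The recession cone of $K$ is then obtained by replacing the right-hand sides $\pi(\rho)$ by $0$:
\[
K_0=\Big\{(\alpha,\vezeta):\ \alpha+\textstyle\sum_{c\ne c^*}d_{\rho,c}\zeta_c\le 0\ \ (\rho\in\mathcal{Q}),\ \vezeta\ge\vezero\Big\}.
\]
This cone does not depend on $\pi$, which is exactly where the claimed cost-independence will come from.

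By LP duality (Farkas' lemma) applied to the system $A\vev=\veb,\ \vev\ge\vezero$, the primal $P(\veb)$ is feasible if and only if $\veb^\top\ved\le0$ for every $\ved\in K_0$. Since $K\neq\emptyset$, this in turn holds if and only if $\veb^\top\vey$ is bounded above over $K$. This gives the first sentence of the statement.

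\textbf{Algorithm.} To decide feasibility constructively, I would maximize $\veb^\top\ved$ over the truncated cone $K_0\cap\{\vezero\le\vezeta\le\veone\}$. On this set, $\alpha$ is bounded above by $-\max_\rho\sum_c d_{\rho,c}\zeta_c$, which is at most $M(m-1)$ in absolute value. Because every point of the truncated set is itself a recession direction, a strictly positive optimum yields the required recession direction $\ved$, and an optimum of $0$ certifies feasibility. The truncated cone is well-described, with coefficients bounded by $M$, so by Proposition~\ref{prop:gls-equivalence} it suffices to separate over $K_0$. Given $(\alpha,\vezeta)$, first check $\vezeta\ge\vezero$, and then compute
\[
\max_{\rho\in\mathcal{Q}}\ g(\rho),\qquad g(\rho):=\sum_{c\ne c^*} d_{\rho,c}\zeta_c,
\]
which is precisely the zero-cost pricing problem. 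The point violates a constraint of $K_0$ exactly when $\alpha+\max_\rho g(\rho)>0$, and the maximizing $\rho$ gives the violated inequality.

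\textbf{Solving the cost-free pricing.} This is the step I expect to need the most care, because the argument must work for each choice of $\mathcal{Q}$.
\begin{itemize}
\item For Borda ($\mathcal{Q}=\mathcal{R}$), the rearrangement argument from the proof of Theorem~\ref{thm:score-dollar-bribery} applies verbatim: put $c^*$ first and the opponents in non-decreasing order of $\zeta_c$.
\item For $k$-Approval ($\mathcal{Q}=\mathcal{T}^{\sigma_1}$), the value $g(\rho)$ depends only on the top-$k$ set $S$, and equals $\sum_{c\in S}w_c$ with $w_{c^*}=\sum_c\zeta_c$ and $w_c=-\zeta_c$. The optimum is to take $c^*$ together with the $k-1$ opponents of smallest $\zeta_c$, and to output $\hat\rho_{S,\sigma_1}\in\mathcal{Q}$.
\end{itemize}
In both cases the oracle only sorts and reads off positions of candidates. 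Its running time is therefore polynomial in $m$ and in the encoding length of $\vezeta$, and it never looks at $\pi$. The ellipsoid method then runs in time polynomial in $m$, the encoding length of $\veb$, and the scoring vector, which proves the stated running-time bound.
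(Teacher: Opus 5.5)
Your proposal is correct and follows the paper's proof. Both use the $\pi$-independent recession cone $K_0$, LP duality (Farkas) for the equivalence, and GLS optimization over $K_0$ driven by the zero-cost sorting oracle of Theorem~\ref{thm:score-dollar-bribery}. Two small remarks: the truncation $\vezeta\le\veone$ is unnecessary, and it does not spare you the unbounded case, since for $\mu<0$ your truncated LP is unbounded via $\alpha\to-\infty$ and the returned direction $(-1,\vezero)\in K_0$ is then the certificate; on the other hand, your explicit top-$k$ pricing over $\mathcal{T}^{\sigma_1}$ spells out a step the paper leaves implicit.
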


\begin{proof}
	Primal feasibility does not depend on the objective coefficients.
	Set every move cost to $0$ and denote the resulting dual feasible region by
	\[
		\begin{aligned}
		K_0=\Big\{(\alpha,\vezeta):\quad
		&\alpha+\textstyle\sum_{c\ne c^*}d_{\rho,c}\zeta_c\le0
			&&(\rho\in\mathcal{Q}),\\
		&\vezeta\ge\vezero\Big\}.
		\end{aligned}
	\]
	Comparing the definitions of $K$ and $K_0$, $\ved\in K_0$ exactly when $\vey+\lambda\ved\in K$ for every $\vey\in K$ and every $\lambda\ge0$.
	Thus $K_0$ is the recession cone of $K$~\cite[Section~0.2]{GLS}.
	By LP duality, the zero-cost primal is feasible exactly when $\veb^\top\vey$ is bounded above over $K_0$~\cite[Theorem~0.1.49]{GLS}.
	This is also exactly when the same objective is bounded above over $K$.

	Separation over $K_0$ is the zero-cost pricing problem solved in Theorem~\ref{thm:score-dollar-bribery}.
	The coefficient bounds above make $K_0$ well-described.
	Proposition~\ref{prop:gls-equivalence} therefore optimizes over $K_0$ in the claimed time.
	If the objective is unbounded, it returns a vector $\ved\in K_0$ with $\veb^\top\ved>0$.
	Since $K_0$ is the recession cone of $K$, this vector is the required certificate.
\end{proof}

The remaining argument converts values into an optimal dual vector.
The perturbation and sensitivity statements are generic.
The only use of the scoring rule is to verify that the chosen perturbed right-hand side remains feasible.

\begin{lemma}\label{lem:value-to-dual}
	Assume that $V(\ver)$ can be evaluated for every feasible rational right-hand side $\ver$.
	Given any rational $\veb$, one can either return a recession direction $\ved$ of $K$ with $\veb^\top\ved>0$, or return a point of $K$ maximizing $\veb^\top\vey$.
	The algorithm uses at most $m+1$ value queries on feasible right-hand sides of encoding length $\poly(L)$ and $\poly(L)$ additional time.
\end{lemma}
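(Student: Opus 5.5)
First, I would apply Proposition~\ref{prop:achievable-margins} to $\veb$. If it returns a recession direction $\ved$ with $\veb^\top\ved>0$, we output it and stop. Otherwise $P(\veb)$ is feasible, and its optimum is finite because $\pi\ge0$. Strong duality then gives $V(\veb)=\max\{\veb^\top\vey:\vey\in K\}$. It remains to find a dual optimum using only value queries. The plan is to view $V$ as a piecewise-linear convex function of the right-hand side and read off a gradient. On any region where a single optimal basis $B$ stays feasible and nondegenerate, $V(\ver)=\vep_B^\top A_B^{-1}\ver$. So $V$ is linear there, and its gradient $\vey_B^\top=\vep_B^\top A_B^{-1}$ is an optimal dual vertex of $K$ for every right-hand side in that region.

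The concrete steps are as follows. (1) Choose a perturbed right-hand side $\ver=\veb+\veeps$ with $\veeps=(\varepsilon,\varepsilon^2,\dots,\varepsilon^m)$ or a similar lexicographic choice. Here $\varepsilon=2^{-\poly(L)}$ is small enough, via the bounds on subdeterminants of $A$ (entries at most $M$, dimension $m$) and on the encoding length of $\veb$, that some optimal basis $B$ of $P(\veb)$ is feasible and nondegenerate for every point of a small box around $\ver$. Because the entries of $A$ and $\vep$ are bounded, with $\pi$ of encoding length $\poly(L)$, Cramer's rule bounds every basic solution and every reduced cost, and this makes the choice possible. (2) Make $m+1$ value queries: $V(\ver)$ and $V(\ver+\eta\vece_j)$ for $j=1,\dots,m$, with $\eta\ll\varepsilon$ of the same polynomial-bit form. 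Then set $y_j:=(V(\ver+\eta\vece_j)-V(\ver))/\eta$. Since $V$ is linear on the box, $\vey$ equals $\vey_B$ exactly. (3) Verify that $\vey\in K$ and that $\veb^\top\vey=V(\veb)$, the latter by continuity or linearity on a region whose closure contains $\veb$ and in which $B$ is optimal. Then $\vey$ maximizes $\veb^\top\vey$ over $K$.

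The scoring rule enters only through feasibility of the queried right-hand sides. Each query point must be feasible, which requires $V$ to be defined. Increasing $\mu$ keeps the problem feasible, because extra mass can be sent to a ranking with $c^*$ on top, which only helps every margin. Raising a required margin $\Delta_c$ can break feasibility, however. My plan is to perturb in the direction that relaxes constraints, that is, to decrease margins, or equivalently to take $\varepsilon$ and $\eta$ negative on the margin coordinates. Feasibility of $\veb$ then implies feasibility of every queried point, since the feasible set of right-hand sides is monotone in that direction. Alternatively, feasibility of each query can be rechecked by Proposition~\ref{prop:achievable-margins}.

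I expect the main obstacle to be step (1). We must guarantee that a single perturbation of polynomial bit length puts $\ver$ together with all $m$ shifted points inside the interior of one cell of the linearity decomposition of $V$. The cell's linear function must also give an optimal dual for $\veb$ itself. My first attempt would be the lexicographic perturbation argument: for sufficiently small $\varepsilon$, the lexicographically optimal basis for $\veb$ is the unique optimal basis for $\veb+\veeps$ and is nondegenerate. The needed smallness would be quantified by the Hadamard bound $(\sqrt m M)^m$ on subdeterminants together with the denominators in $\pi$ and $\veb$. One then takes $\eta$ smaller still, so that the shifted points stay inside the same cell.
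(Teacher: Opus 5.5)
Your proposal is correct and follows the paper's proof. After Proposition~\ref{prop:achievable-margins}, the paper perturbs to $\veb'=\veb+\Sigma\veeps$ with $\Sigma=\operatorname{diag}(1,-1,\dots,-1)$, which is your relaxing direction. It fixes an optimal basis $B$ of $P(\veb')$ and reads off $(A_B^\top)^{-1}\vep_B$ from $m+1$ finite differences; the queries stay feasible because $B$ stays feasible at them.

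Your step (1) is settled there by citing Megiddo--Chandrasekaran: every basis is nondegenerate at $\veb'$, and every basis feasible at $\veb'$ is feasible at $\veb$. The step size is $\delta=\lambda/(2H)$, where $\lambda$ is a Cramer lower bound on the basic values at $\veb'$ and is of order $\varepsilon^m$, so your ``$\eta\ll\varepsilon$'' must really mean $\eta$ below that. Also, you only need uniqueness of the dual at the perturbed point, which follows from nondegeneracy, not uniqueness of the optimal basis, which can fail.
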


\begin{proof}
	Apply Proposition~\ref{prop:achievable-margins}.
	If $P(\veb)$ is infeasible, return the recession direction supplied by the proposition.
	We henceforth assume that $P(\veb)$ is feasible.
	Because $\vep\ge\vezero$, its optimal value is finite.

	We first clear denominators.
	Choose a positive integer $N$ of encoding length $\poly(L)$ such that $N\veb$ is integral.
	Replacing $\veb$ by $N\veb$ scales every primal and dual objective value by $N$ but does not change the set of dual maximizers.
	We rename the scaled right-hand side $\veb$ and assume from now on that $\veb\in\mathbb{Z}^m$.

	Let
	\[
		\Sigma:=\operatorname{diag}(1,-1,\dots,-1),
		\qquad
		\veeps:=(\varepsilon,\varepsilon^2,\dots,\varepsilon^m)^\top.
	\]
	Set
	\[
		q:=\left\lceil\log_2\big((m!)^2M^{2m-1}\big)\right\rceil+1,
		\qquad \varepsilon:=2^{-q}.
	\]
	Define $\veb':=\veb+\Sigma\veeps$.
	Thus $\veb'$ increases the mass coordinate by $\varepsilon$ and decreases the $j$-th margin coordinate by $\varepsilon^{j+1}$.

	We apply the perturbation results of \citet{MegiddoChandrasekaran1989perturbation} to the row-scaled system
	\[
		\Sigma A\vev=\Sigma\veb.
	\]
	The matrix $\Sigma A$ is integral, has rank $m$, and has the same entry bound $M$ as $A$.
	Our choice satisfies
	\[
		0<\varepsilon<\big((m!)^2M^{2m-1}\big)^{-1}.
	\]
	Their Proposition~3.1 and Corollary~3.2(ii) therefore imply the following two facts:
	\begin{enumerate}
		\item $A_B^{-1}\veb'$ has no zero coordinate for every basis $B$; and
		\item every basis feasible for $\veb'$ is feasible for $\veb$.
	\end{enumerate}
	Indeed, the all-positive perturbation $\Sigma\veb+\veeps$ of the row-scaled system corresponds to $\veb+\Sigma\veeps=\veb'$ in the original system.
	The bound on $\varepsilon$ depends on $m$ and $M$, but not on the number of columns of $A$.

	It remains to verify that $P(\veb')$ is feasible.
	Choose $\hat\rho\in\mathcal{Q}$ such that $d_{\hat\rho,c}\ge0$ for every $c\ne c^*$.
	For Borda, choose a ranking with $c^*$ first.
	For $k$-Approval, choose a ranking in $\mathcal{T}^{\sigma_1}$ whose top-$k$ set contains $c^*$.
	Starting from any feasible $(\vex,\ves)$ for $\veb$, increase $x_{\hat\rho}$ by $\varepsilon$ and set
	\[
		s'_{c_j}:=s_{c_j}+\varepsilon d_{\hat\rho,c_j}+\varepsilon^{j+1}.
	\]
	All new variables are nonnegative, and direct substitution gives the right-hand side $\veb'$.

	We next choose a finite-difference step before knowing an optimal basis of $P(\veb')$.
	For every basis $B$, Cramer's rule gives
	\[
		\left|(A_B^{-1})_{ij}\right|\le
		H:=(m-1)!M^{m-1}.
	\]
	Moreover, $2^{qm}\veb'$ is integral and every coordinate of $A_B^{-1}\veb'$ is nonzero.
	Another application of Cramer's rule gives the uniform lower bound
	\[
		\left|(A_B^{-1}\veb')_j\right|\ge
		\lambda:=\big(2^{qm}m!M^m\big)^{-1}.
	\]
	Both numbers have encoding length $\poly(L)$.
	Set
	\[
		\delta:=\frac{\lambda}{2H}.
	\]

	By standard LP theory, $P(\veb')$ has an optimal basis; fix one and call it $B$.
	Its basic variables are positive by nondegeneracy, and hence are at least $\lambda$.
	For every unit vector $\vece_i\in\mathbb{R}^m$,
	\[
		A_B^{-1}(\veb'+\delta\vece_i)
		\ge \frac{\lambda}{2}\veone>\vezero.
	\]
	Thus $B$ is feasible at $\veb'+\delta\vece_i$.
	It is also feasible at the original $\veb$ by the perturbation result above.

	Let
	\[
		\vey^*:=(A_B^\top)^{-1}\vep_B.
	\]
	By standard right-hand-side sensitivity analysis, an optimal basis remains optimal whenever it remains feasible, because its reduced costs do not change.
	The associated dual vector is unchanged and gives the exact change in the optimal value~\cite[Section~4.4; see also Theorem~5.2]{BertsimasTsitsiklis1997}.
	Consequently, $\vey^*$ is optimal for the dual objective $\veb^\top\vey$, and for every $i\in[m]$,
	\[
		V(\veb'+\delta\vece_i)=V(\veb')+\delta y_i^*.
	\]
	The $m+1$ value queries at $\veb'$ and $\veb'+\delta\vece_i$, $i\in[m]$, therefore recover
	\[
		y_i^*=
		\frac{V(\veb'+\delta\vece_i)-V(\veb')}{\delta}
		\qquad (i\in[m]).
	\]
	All queried right-hand sides are feasible and have encoding length $\poly(L)$.
	Return $\vey^*$.
\end{proof}

\begin{proof}[Proof of Lemma~\ref{lem:bribery-to-separation}]
	Proposition~\ref{prop:achievable-margins} and Lemma~\ref{lem:value-to-dual} give an optimization oracle for $K$ on every rational objective.
	Each value query is implemented with polynomially many calls to the decision algorithm $\mathcal{A}$.
	All calls keep the data $D$ fixed; only the right-hand side and budget change.
	Their encoding lengths are polynomial in $L$.

	Since $K$ is well-described, Proposition~\ref{prop:gls-equivalence} converts this optimization oracle into a separation oracle using polynomially many oracle calls and polynomial additional time.
	If $\mathcal{A}$ runs in time $f(k)\poly(L)$, every call has the same value of $k$.
	The stated running-time claims follow.
\end{proof}

\subsection{Hardness for Borda and $k$-Approval}\label{sec:borda-kapp-hardness}

\paragraph{Borda}
With general costs, Borda-\textsc{Swap Bribery}$_\infty$ admits no polynomial-time algorithm unless $P=NP$\footnote{\label{ftnt:turing}Formally, NP-hardness under Turing reductions: a polynomial-time bribery algorithm would still yield $P = NP$.} -- even if only a single voter type can be bribed.
We prove this via the margin version and the dual.
For a single bribable type $(\sigma_1,\pi)$, the margin LP is the primal of Section~\ref{sec:oracle-to-dual} with $\mathcal{R}$ the set of all rankings and $d_{\rho,c} := \textrm{rank}(c,\rho) - \textrm{rank}(c^*,\rho)$, the Borda margin contributed by a unit of mass at $\rho$.
Separation over its dual polyhedron $K$ then encodes the unit-processing-time scheduling problem $1|\text{prec},\, p_j = 1|\sum_j w_j C_j$~\cite{Lawler1978sequencing,DBLP:journals/ior/LenstraK78}:

\begin{lemma} \label{lem:separation-np-hard}
	Separating over $K$ is NP-hard.
\end{lemma}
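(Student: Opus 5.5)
The plan is to reduce the decision version of $1|\text{prec},\, p_j = 1|\sum_j w_j C_j$ to the membership question over $K$. A separation oracle answers this question: it either asserts $\vey\in K$ or exhibits a violated constraint. For a query point $\vey=(\alpha,\vezeta)$ with $\vezeta\ge\vezero$, membership in $K$ says that for every $\rho\in\mathcal{R}$
\[
	\alpha+\sum_{c\ne c^*}\zeta_c\big(\textrm{rank}(c,\rho)-\textrm{rank}(c^*,\rho)\big)\le\pi(\rho).
\]
So $\vey\notin K$ exactly when $\max_{\rho}\big(\sum_{c\ne c^*}\zeta_c\,\textrm{rank}(c,\rho)-Z\,\textrm{rank}(c^*,\rho)-\pi(\rho)\big)>-\alpha$, where $Z:=\sum_{c\ne c^*}\zeta_c$. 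Deciding violation is therefore a weighted linear-ordering problem. The swap costs let us forbid inversions, and the nonnegative $\zeta_c$ reward late positions.

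\textbf{Construction.} Take a scheduling instance with jobs $J$, weights $w_j\ge 0$, a partial order $\lessdot$ and a threshold $W$. Reverse the poset: job $j$ is to be placed at position $n+1-C_j$, so that minimizing $\sum w_jC_j$ over linear extensions of $\lessdot$ becomes maximizing $\sum w_j\,\textrm{rank}(j)$ over linear extensions of the reversed order. The steps are as follows.
\begin{enumerate}
	\item Let $C=J\cup\{c^*\}$ and fix $\sigma_1$ with $c^*$ first, followed by a linear extension of the reversed order.
	\item Set $\delta(c^*,c)=+\infty$ for all $c$, so that $c^*$ stays at rank $1$. Set $\delta(j,j')=+\infty$ for every comparable pair and $\delta(j,j')=0$ otherwise.
	\item With these costs, $\pi(\rho)\in\{0,+\infty\}$, and $\pi(\rho)=0$ exactly for the rankings with $c^*$ first and the jobs ordered by a linear extension of the reversed poset.
\end{enumerate}
Set $\zeta_j:=w_j$. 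For $\pi(\rho)=0$ we have $\textrm{rank}(j,\rho)=1+(n+1-C_j)$ for the corresponding schedule. The violation condition then becomes
\[
	\sum_j w_j(n+2-C_j)-Z>-\alpha,
\]
that is, $\sum_j w_jC_j<(n+1)Z+\alpha$. Choosing $\alpha:=W-(n+1)Z+\tfrac12$ (with integer weights) makes $\vey\notin K$ equivalent to the existence of a schedule of cost at most $W$. This point has encoding length polynomial in the instance, so a polynomial-time separation oracle over $K$ decides an NP-hard problem~\cite{Lawler1978sequencing,DBLP:journals/ior/LenstraK78}.

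\textbf{Remaining checks.} I would verify that $K$ is still well-described with this $\pi$. The $+\infty$ swap costs only delete constraints, and the surviving ones have coefficients bounded by $m$. I would also use Lemma~\ref{lem:remove-infinite-costs}, or directly a large finite cost such as $\Lambda>(n+1)^2\max_j w_j+|\alpha|$, to replace $+\infty$ if finite costs are required. With finite $\Lambda$, any ranking inverting a forced pair pays at least $\Lambda$, which exceeds any attainable gain, so such rankings never violate the constraint.

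\textbf{Main obstacle.} I expect the main obstacle to be bookkeeping of signs and directions rather than depth. The dual variables force $\zeta\ge 0$ and put the positive coefficient on $\textrm{rank}(c)$, which is the wrong direction for "minimize weighted completion time". The poset reversal above is the fix I would try first. I would double-check it on a tiny chain example.
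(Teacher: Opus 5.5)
Your reduction is correct and essentially the paper's: both reduce from $1|\text{prec},\,p_j=1|\sum_j w_jC_j$, pin $c^*$ and forbid precedence-violating inversions with a prohibitive finite swap cost, and absorb all constants and the threshold into the sign-free coordinate $\alpha$. The finite cost is the paper's $H$ and your $\Lambda$; your direct choice of $\Lambda$ is what the bridge requires, while Lemma~\ref{lem:remove-infinite-costs} concerns budgeted bribery rather than separation at a fixed point, so it is not the relevant tool. The only real difference is how you reconcile maximization with $\vezeta\ge\vezero$: you pin $c^*$ first and reverse the poset, so $\zeta_j=w_j$ works directly, whereas the paper pins $c^*$ last and complements the weights, $\zeta_{c_j}=W-w_j$ with $W>\max_j w_j$.
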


The bridge of Lemma~\ref{lem:bribery-to-separation} carries this to the bribery problem, and a padding argument with unbribable voter types then removes the margins.

\begin{theorem} \label{thm:swap-bribery-hardness}
	Unless $P=NP$, there is no polynomial-time algorithm for Borda Margin-\textsc{Swap Bribery}$_\infty$, even with a single voter type, under polynomial-time Turing reductions.
\end{theorem}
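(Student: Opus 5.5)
The plan is to combine Lemma~\ref{lem:separation-np-hard} with the bridge Lemma~\ref{lem:bribery-to-separation}. Suppose $\mathcal{A}$ is a polynomial-time algorithm for Borda Margin-\textsc{Swap Bribery}$_\infty$ restricted to a single voter type. We will show how to use it to solve, in polynomial time, an NP-hard problem.

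First we reinterpret the decision problem. Fix a single type $(\sigma_1,\pi)$ with mass $\mu$, margins $\veDelta$ and budget $\beta$. The question whether $c^*$ can reach the margins within cost $\beta$ is exactly the question whether the optimal value of the primal $P(\veb)$ from Section~\ref{sec:oracle-to-dual}, with $\mathcal{Q}=\mathcal{R}$ and the Borda coefficients $d_{\rho,c}$, is at most $\beta$.

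Two details need checking here.
\begin{itemize}
\item In the primal every unit of mass must be placed on some ranking in $\mathcal{R}$, with no $\emptyset$ destination. This matches \textsc{Swap Bribery}, where $\pi(\emptyset)=+\infty$.
\item Rankings of infinite cost simply drop out of $\mathcal{Q}$. In the hardness instance I would keep all swap costs finite, so nothing is lost.
\end{itemize}
Hence $\mathcal{A}$ is precisely a decision algorithm of the kind assumed in Lemma~\ref{lem:bribery-to-separation}. Every call that lemma makes keeps the data $D=(\mathcal{R},d,\pi)$ fixed and changes only the rational mass, the margins and the budget. Each such call is therefore again an instance of Borda Margin-\textsc{Swap Bribery}$_\infty$ with one voter type, of encoding length $\poly(L)$.

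Next, Lemma~\ref{lem:bribery-to-separation} turns $\mathcal{A}$ into a separation oracle for $K$. The oracle makes $\poly(L)$ calls to $\mathcal{A}$ and uses $\poly(L)$ extra time. Here $L$ must be polynomial in $m$: the cost function $\pi$ is given by $\Oh(m^2)$ rational swap costs $\delta(c,c')$, and the $d_{\rho,c}$ are bounded by $m-1$.

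Finally, Lemma~\ref{lem:separation-np-hard} says that separating over $K$ is NP-hard. Its reduction from $1|\text{prec},p_j=1|\sum_j w_jC_j$ produces a query point $\vey$ together with swap costs $\delta$, and both have polynomial encoding length. Composing the two steps gives a polynomial-time Turing reduction from an NP-hard problem to Borda Margin-\textsc{Swap Bribery}$_\infty$ with a single voter type. So a polynomial-time algorithm for the latter would imply $P=NP$.

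The main point to verify is that the hardness in Lemma~\ref{lem:separation-np-hard} holds for the specific $K$ defined by a single type with swap-cost $\pi$. In particular, the separation instances must have costs of the form $\pi(\rho)=\sum_{\text{inv}(\sigma_1,\rho)}\delta(c,c')$ with nonnegative rational $\delta$. If the scheduling encoding needs infinite costs to forbid precedence-violating orders, the paper allows $\delta=+\infty$ at this stage; Lemma~\ref{lem:remove-infinite-costs} removes that convenience later. The case of infinite $\delta$ also needs one check: Lemma~\ref{lem:bribery-to-separation} requires a feasible ranking $\hat\rho$ with $c^*$ first. I would ensure that no pair involving $c^*$ has infinite cost, so that $\hat\rho$ has finite cost.
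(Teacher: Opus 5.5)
Your proposal is correct and follows essentially the same route as the paper: a single-type algorithm is exactly the decision algorithm $\mathcal{A}$ of Lemma~\ref{lem:bribery-to-separation} in the Borda instantiation, so it would give polynomial-time separation over $K$, contradicting Lemma~\ref{lem:separation-np-hard}. Your contingency about infinite swap costs is unnecessary, because that reduction already forbids every illegal pair (including those involving $c^*$) with the finite prohibitive cost $H$.
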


\begin{corollary} \label{cor:margin-to-borda}
	Unless $P=NP$, there is no polynomial-time algorithm for Borda-\textsc{Swap Bribery}$_\infty$.
\end{corollary}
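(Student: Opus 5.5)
The plan is to reduce the margin problem of Theorem~\ref{thm:swap-bribery-hardness} to the plain problem: the margins $\veDelta$ will be absorbed into the initial scores by adding unbribable voter types. Suppose some algorithm $\mathcal{B}$ solves Borda-\textsc{Swap Bribery}$_\infty$ in polynomial time. Take an instance of Borda Margin-\textsc{Swap Bribery}$_\infty$ with a single bribable type $(\sigma_1,\pi)$, mass $\mu$, margins $\veDelta$ and budget $\beta$. The reduction builds a plain instance with the same bribable type, plus unbribable types, i.e.\ types all of whose swap costs are $+\infty$, so their mass stays fixed.

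The padding is chosen so that every opponent $c$ receives exactly $\Delta_c$ extra Borda points relative to $c^*$. For each opponent $c$ use the two rankings
\[
c \succ c^* \succ (\text{rest in a fixed order } \omega) \quad\text{and}\quad (\text{reverse of } \omega) \succ c \succ c^*.
\]
Summed, they give $c$ exactly $2$ more points than $c^*$, and every third candidate $d$ gets the same total as $c^*$ up to a constant that does not depend on $d$. I need to fix this bookkeeping carefully. The cleaner alternative is McGarvey-style pairs $\sigma,\ \text{rev}(\sigma)$ with the transposition of $c,c^*$, which shifts only the $c$ versus $c^*$ Borda difference. If $\Delta_c<0$, the gadget is used for $c^*$ over $c$ instead. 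Weighting each gadget's mass by $|\Delta_c|/2$ (rational, so encoding is polynomial) produces unbribable mass under which the score difference $c^*$ minus $c$ equals $-\Delta_c$ for each $c$, with all other pairwise score differences unchanged.

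With this padding, a move of the bribable type makes $c^*$ a co-winner in the new instance exactly when its contribution to the margin over each $c$ is at least $\Delta_c$. Costs are identical, since unbribable mass does not move. So the optimal cost in the new instance equals the optimum of $P(\veb)$, and the decision answer at budget $\beta$ is preserved. This gives a polynomial-time reduction to $\mathcal{B}$, and hence a polynomial-time algorithm for the margin version, contradicting Theorem~\ref{thm:swap-bribery-hardness} unless $P=NP$.

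The main obstacle is the gadget arithmetic: each unbribable gadget must shift exactly one pairwise \emph{Borda score} difference, not the Condorcet margin. Concretely, take a ranking $\sigma$ and its reverse, where $c^*,c$ are adjacent in $\sigma$. Then swap $c,c^*$ in the reversed copy only. The pair scores uniform $m-1$ for everyone except $c$ and $c^*$, which differ by $\pm 2$. I will verify this with a short computation.

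If the paper requires finite costs, a final remark will point to Lemma~\ref{lem:remove-infinite-costs} to replace the $+\infty$ costs by large finite ones.
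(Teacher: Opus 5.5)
\paragraph{Gap: the padding gadget.} Your overall strategy matches the paper's: absorb $\veDelta$ into unbribable padding types and invoke Theorem~\ref{thm:swap-bribery-hardness}. The gap is the step you flagged as the main obstacle, because neither of your gadgets has the property you claim.

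For the McGarvey-style pair, put $c \succ c^*$ at positions $p,p+1$ of $\sigma$. After reversing and swapping $c,c^*$ back:
\begin{itemize}
\item $c$ collects $(m-p)+p=m$ points;
\item $c^*$ collects $(m-p-1)+(p-1)=m-2$ points;
\item every other candidate $d$, at position $q$ of $\sigma$, collects $(m-q)+(q-1)=m-1$ points.
\end{itemize}
So per unit of mass $\mathrm{sc}(d)-\mathrm{sc}(c^*)=1$, not $0$. The gadget lowers $c^*$'s margin against \emph{every} opponent, not only against $c$. For $m=3$, the rankings $c\succ c^*\succ d$ and $d\succ c\succ c^*$ give totals $3,1,2$ to $c,c^*,d$.

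Your first gadget yields exactly the same totals $m$, $m-2$, $m-1$. So the ``constant'' you noticed there is not harmless: with masses $|\Delta_c|/2$, the padding for $c$ changes the required margin of $c^*$ against every other $d$ by $|\Delta_c|/2$, and the equivalence with the margin instance fails.

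This is not fixable by bookkeeping. Two rankings hand out $m(m-1)$ Borda points in total. If all candidates other than $c$ receive the same integer total $x$ and $c$ receives $x+t$, then $mx+t=m(m-1)$, so $m \mid t$. Hence a two-ranking gadget with lead $2$ and everyone else tied does not exist for $m\ge 3$.

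\paragraph{Repair.} Use lead $m$, as the paper does.
\begin{itemize}
\item For $\Delta_c>0$, take $c\succ c^*\succ\omega$ and $c\succ\mathrm{rev}(\omega)\succ c^*$, each of mass $\Delta_c/m$. Per unit, $c$ gets $2m-2$ and every other candidate, including $c^*$, gets $m-2$.
\item For $\Delta_c<0$, take $c^*\succ\omega\succ c$ and $\mathrm{rev}(\omega)\succ c^*\succ c$, each of mass $|\Delta_c|/m$. Per unit, everyone except $c$ gets $m$ and $c$ gets $0$.
\end{itemize}

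Alternatively, keep your gadget and cancel its uniform side effect. Give the gadget for $c'$ a signed mass $w_{c'}$: positive means your gadget with $c'$ over $c^*$, negative means the mirrored one with mass $|w_{c'}|$. It shifts $\mathrm{sc}(c^*)-\mathrm{sc}(c)$ by $-2w_{c'}$ if $c=c'$ and by $-w_{c'}$ otherwise. Solving the resulting linear system gives $w_c:=\Delta_c-\frac{1}{m}\sum_{c'\neq c^*}\Delta_{c'}$.

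With either correction, the rest of your argument goes through as in the paper: costs are unchanged since padding cannot move at finite cost, the co-winner condition is equivalent to the margins, and you reach a contradiction with Theorem~\ref{thm:swap-bribery-hardness}.
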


Note that $|d_{\rho,c}| \le m-1$, and that separating over $K$ means, given $(\alpha, \vezeta)$, either finding a $c$ with $\zeta_c < 0$, or finding a ranking $\rho \in \mathcal{R}$ with $\alpha + \sum_{c \neq c^*} d_{\rho,c}\, \zeta_{c} > \pi(\rho)$, or asserting that neither exists.

The reduction behind Lemma~\ref{lem:separation-np-hard} pins $c^*$ last. Then $d_{\rho,c_j}$ is the position of $c_j$ in $\rho$ up to a constant, so a dual constraint of $K$ is a weighted sum of the opponents' positions. Separating over $K$ \emph{is} therefore the NP-hard problem of finding a linear extension of a partial order that minimizes $\sum_j w_j \cdot (\text{position of } j)$~\cite{Lawler1978sequencing,DBLP:journals/ior/LenstraK78}.
	Two things do not line up. Separation maximizes where scheduling minimizes, and the dual insists on $\vezeta \ge \vezero$, so the weights cannot simply be negated. Instead they enter complemented, $\zeta_{c_j} := W - w_j$ for $W$ above every $w_j$, which flips the sense. The pinning and complementing operations both induce constants in the constraints, but those are identical for every ordering. We can thus handle them using $\alpha$, the one coordinate the dual leaves free in sign. We also use $\alpha$ to encode the scheduling threshold $\theta$.
	Finally, a bribery instance has no notion of precedence, so we make the forbidden swaps prohibitively expensive; the constraint of a ranking that is not a legal schedule is then satisfied outright, because its cost alone exceeds everything the profit can supply.
	Together, a constraint is violated exactly when its ranking is a legal schedule that comes in under $\theta$.

\begin{proof}[Proof of Lemma~\ref{lem:separation-np-hard}]
	We reduce from the unit-processing-time scheduling problem $1|\text{prec},\, p_j = 1|\sum_j w_j C_j$, equivalently \textsc{Minimum Weighted Linear Extension}, which is NP-hard~\cite{Lawler1978sequencing,DBLP:journals/ior/LenstraK78}:
	\niceproblem{($1|\text{prec},\, p_j = 1|\sum_j w_j C_j$)}
	{A set of jobs $J=\{1,\dots,n\}$, a partial order $\prec$ on $J$, weights $\vew \in \Z_{\geq 0}^n$, and a threshold value $\theta \in \Z$.}
	{A linear extension $\sigma$ of $\prec$ such that $\sum_{j\in J} w_j\,\mathrm{pos}_\sigma(j) \ < \theta$.}
	(The conventional decision threshold $\sum_{j \in J} w_j\,\mathrm{pos}_\sigma(j) \le B$ is captured by $\theta := B+1$, as the objective is integral.)

	Starting from an instance of ($1|\text{prec},\, p_j = 1|\sum_j w_j C_j$), we define an instance of the separation problem as follows.
	First, introduce a candidate $c_j$ for each $j \in J$, and an extra candidate $c^*$. Therefore, $C=\{c_j\mid j \in J\}\cup \{c^*\}$.
	Let $\rho^*$ be any ranking of $C$ such that $c^*$ is ranked last in $\rho^*$ and, for any $i\prec j$, $c_i \pref_{\rho^*} c_j$.
	The single bribable type starts from $\rho^*$, so every cost $\pi(\rho)$ below is the swap cost of reaching $\rho$ from $\rho^*$.

	For each $j\in J$, define $\zeta_{c_j}= W - w_j$, where $W=\max_{j \in J} \{w_j+1\}$.
	This way we guarantee that $\zeta_{c_j}> 0$, for all $c_j \neq c^*$.
	Set $\alpha = \theta - W \sum_{i=1}^n i + (n+1) \sum_{j \in J }(W-w_j)$, and let $H := 1 + |\alpha| + (m-1)\sum_{c \neq c^*} \zeta_c$ be a \emph{prohibitive} cost: since $|d_{\rho,c}| \le m-1$ and $\zeta_c > 0$, every $\rho \in \mathcal{R}$ satisfies $\alpha + \sum_{c \neq c^*} d_{\rho,c} \zeta_c \le |\alpha| + (m-1)\sum_{c \neq c^*} \zeta_c = H - 1 < H$.
	Define the swap cost function $\delta$ as follows:
	\begin{itemize}
		\item for every $j \in [n]$, $\delta(c_j,c^*) = H$,
		\item for every $i, j \in [n]$ with $i \prec j$, set $\delta(c_i,c_j) = H$,
		\item otherwise $\delta(c_i, c_j) = 0$.
	\end{itemize}
	The above costs enforce that for every $\rho \in \mathcal{R}$, $\pi(\rho) \ge H$ if either $c^*$ has been moved from the last position, or if $\rho$ violates the precedence constraints (any such $\rho$ inverts at least one pair of cost $H$); otherwise $\pi(\rho) = 0$.
	In particular, all costs are finite rationals, as the framework of Section~\ref{sec:oracle-to-dual} requires of the bribable type.

	We will need the following claim:
	\begin{claim} \label{claim:inequility:NP_hard:separation}
		Let $\sigma$ be an ordering of $J$ and $\rho\in \mathcal{R}$ a ranking of $C$ such that:
		\begin{itemize}
			\item $\textrm{rank}(c^*,\rho)=n+1$, and
			\item $c_j\pref_{\rho} c_i$ if and only if $j\prec_{\sigma} i$.
		\end{itemize}
		Then we have that:
		\begin{align*}
			\sum_{c \neq c^*} d_{\rho,c} \zeta_c  =  \theta -\alpha - \sum_{j \in J } \textrm{pos}_{\sigma} (j) w_j
		\end{align*}
	\end{claim}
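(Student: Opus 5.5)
The claim is a direct computation: I will express each margin coefficient $d_{\rho,c_j}$ through the position of job $j$ in $\sigma$, expand the weighted sum, and compare the result with the definition of $\alpha$. No earlier result is needed beyond the Borda margin formula $d_{\rho,c}=\textrm{rank}(c,\rho)-\textrm{rank}(c^*,\rho)$ from Section~\ref{sec:oracle-to-dual}.

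\emph{Step 1: the coefficients.} Since $\textrm{rank}(c^*,\rho)=n+1$, the candidates $c_1,\dots,c_n$ fill positions $1,\dots,n$ of $\rho$. By the second hypothesis they appear in the order $\sigma$, so $\textrm{rank}(c_j,\rho)=\mathrm{pos}_\sigma(j)$. Hence
\[
d_{\rho,c_j}=\mathrm{pos}_\sigma(j)-(n+1).
\]

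\emph{Step 2: expand the sum.} Substituting $\zeta_{c_j}=W-w_j$ gives
\[
\sum_{j\in J}\bigl(\mathrm{pos}_\sigma(j)-(n+1)\bigr)(W-w_j)
= W\sum_{j\in J}\mathrm{pos}_\sigma(j)-\sum_{j\in J}\mathrm{pos}_\sigma(j)\,w_j-(n+1)\sum_{j\in J}(W-w_j).
\]

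\emph{Step 3: remove the dependence on $\sigma$.} Because $\sigma$ is an ordering of $J$, the map $j\mapsto\mathrm{pos}_\sigma(j)$ is a bijection onto $[n]$. Therefore $\sum_{j}\mathrm{pos}_\sigma(j)=\sum_{i=1}^n i$, and this term does not depend on $\sigma$. This is the one point that needs care: it is exactly what lets the paper absorb the constant into $\alpha$.

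\emph{Step 4: compare with $\alpha$.} From the definition of $\alpha$,
\[
\theta-\alpha = W\sum_{i=1}^n i-(n+1)\sum_{j\in J}(W-w_j).
\]
This coincides with the $\sigma$-independent part of the expansion in Step 2. What remains is $-\sum_j \mathrm{pos}_\sigma(j)\,w_j$, which proves the stated identity.

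I expect no real obstacle. The only things to check are the sign conventions in $d_{\rho,c}$, namely that $c^*$ is ranked last so every $d_{\rho,c_j}$ is negative, and the permutation-sum identity in Step 3.
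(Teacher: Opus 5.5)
Your proposal is correct and follows the paper's proof essentially step for step. You identify $\textrm{rank}(c_j,\rho)=\mathrm{pos}_\sigma(j)$, expand using $\zeta_{c_j}=W-w_j$, replace $\sum_j \mathrm{pos}_\sigma(j)$ by $\sum_{i=1}^n i$, and cancel the $\sigma$-independent terms against the definition of $\alpha$.
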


	\begin{proofclaim}
		Compute $\sum_{c \neq c^*} d_{\rho,c} \zeta_c$ as follows:

		\begin{align*}
			 & \sum_{c \neq c^*} d_{\rho,c} \zeta_c  =                                                                                         \\
			 & = \sum_{c_j \neq c^*} \Big(\textrm{rank}(c_j,\rho) - n-1 \Big) (W-w_j)                                                           \\
			 & = \sum_{j \in J } \textrm{pos}_{\sigma} (j) (W-w_j) - (n+1) \sum_{j \in J } (W-w_j)                                              \\
			 & = - \sum_{j \in J } \textrm{pos}_{\sigma} (j) w_j + W \sum_{j \in J } \textrm{pos}_{\sigma} (j) \\
			 & \qquad - (n+1) \sum_{j \in J } (W-w_j) \\
			 & = - \sum_{j \in J } \textrm{pos}_{\sigma} (j) w_j + W \sum_{i=1}^n i - (n+1) \sum_{j \in J }(W-w_j)
		\end{align*}
		Finally, since, $\alpha = \theta - W \sum_{i=1}^n i + (n+1) \sum_{j \in J }(W-w_j)$:
		\begin{align*}
			\sum_{c \neq c^*} d_{\rho,c} \zeta_c  =  \theta -\alpha - \sum_{j \in J } \textrm{pos}_{\sigma} (j) w_j
		\end{align*}
	\end{proofclaim}

	Now, we will prove that there is an ordering of $J$ that extends $\prec$ and respects the threshold $\theta$ if and only if there is an ordering $\rho$ of $C$ such that \[\alpha + \sum_{c \neq c^*} d_{\rho,c}\zeta_c- \pi(\rho) > 0 \enspace .\]

	We first prove the following claim:
	\begin{claim}
			Let $\sigma$ be an ordering of $J$ that extends $\prec$.
			Suppose that $\sum_{j\in J} w_j \mathrm{pos}_{\sigma}(j) < \theta$.
			Let $\rho \in \mathcal{R}$ be a ranking over $C$ such that:
		\begin{itemize}
			\item $\textrm{rank}(c^*,\rho)=n+1$, and
			\item $c_j\pref_{\rho} c_i$ if and only if $j\prec_{\sigma} i$.
		\end{itemize}
		Then, we have that: $$\alpha +\sum_{c \neq c^*}d_{\rho,c} \zeta_c - \pi(\rho) >0$$
	\end{claim}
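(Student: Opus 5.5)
The plan is to show that the cost of $\rho$ is zero and then read off the profit from Claim~\ref{claim:inequility:NP_hard:separation}.

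First, compute $\pi(\rho)$. By construction, $\pi(\rho)$ is the sum of $\delta$ over the pairs inverted between $\rho^*$ and $\rho$. Only two kinds of pairs carry positive cost: pairs $(c_j,c^*)$, and pairs $(c_i,c_j)$ with $i\prec j$.

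\begin{itemize}
\item \emph{Pairs $(c_j,c^*)$.} Both $\rho^*$ and $\rho$ place $c^*$ last, since $\textrm{rank}(c^*,\rho)=n+1$. So no such pair is inverted.
\item \emph{Pairs $(c_i,c_j)$ with $i\prec j$.} Here $\rho^*$ has $c_i\pref_{\rho^*}c_j$. Because $\sigma$ extends $\prec$, we have $i\prec_\sigma j$, so $c_i\pref_\rho c_j$ by the second hypothesis. Hence these pairs are not inverted either.
\end{itemize}

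Every inverted pair therefore has $\delta=0$, and $\pi(\rho)=0$.

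Next, apply Claim~\ref{claim:inequility:NP_hard:separation}. Its hypotheses are exactly the two conditions on $\rho$, so
\[
\alpha+\sum_{c\neq c^*}d_{\rho,c}\zeta_c=\theta-\sum_{j\in J}\mathrm{pos}_\sigma(j)\,w_j .
\]
This quantity is strictly positive by the assumption $\sum_{j} w_j\,\mathrm{pos}_\sigma(j)<\theta$. Subtracting $\pi(\rho)=0$ leaves it strictly positive, which is the desired inequality.

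The only step needing care is the cost computation: one has to match the orientation in $\rho^*$ against that in $\rho$ through the fact that $\sigma$ extends $\prec$. Everything else is substitution.
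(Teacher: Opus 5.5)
Your proof is correct and follows the paper's argument: you show $\pi(\rho)=0$ and then substitute the identity from Claim~\ref{claim:inequility:NP_hard:separation}. The only difference is that you spell out why no cost-$H$ pair is inverted relative to $\rho^*$, which the paper compresses into ``by the definition of $\rho$''; that makes the cost step more transparent.
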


	\begin{proofclaim}
		By the definition of $\rho$, we have that $\pi(\rho)=0$. Thus, we need to show that:
		\[\alpha +\sum_{c \neq c^*} d_{\rho,c} \zeta_c >0 \enspace .\]
		Then, by Claim~\ref{claim:inequility:NP_hard:separation}:
		\[
			\sum_{c \neq c^*} d_{\rho,c} \zeta_c  =  \theta -\alpha - \sum_{j \in J } \textrm{pos}_{\sigma} (j) w_j > -\alpha,
		\]
		as desired.
	\end{proofclaim}

	For the reverse direction, we will prove the following:
	\begin{claim}
		Let $\rho \in \mathcal{R}$ be a ranking of $C$ such that \[
			\alpha +\sum_{c \neq c^*}d_{\rho,c} \zeta_c - \pi(\rho) >0 \enspace .\]
		Let $\sigma$ be the ranking of $J$ where $j\prec_{\sigma} i$ iff $c_j\pref_{\rho} c_i$. Then, $\sigma$ extends $\prec$ and
		$\sum_{j\in J} w_j \mathrm{pos}_{\sigma}(j) < \theta$.
	\end{claim}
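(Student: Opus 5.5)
The argument has two steps. First we use the prohibitive cost $H$ to show that a violating $\rho$ must have $\pi(\rho)=0$. Then we read the scheduling inequality off Claim~\ref{claim:inequility:NP_hard:separation}.

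\emph{Step 1: $\pi(\rho)=0$.} Suppose instead that $\pi(\rho)\neq 0$. Every positive swap cost equals $H$, so $\rho$ inverts at least one pair of cost $H$ and hence $\pi(\rho)\ge H$. The construction of $H$ gives $\alpha+\sum_{c\ne c^*}d_{\rho,c}\zeta_c\le H-1$ for every $\rho\in\mathcal{R}$. We would then have $\alpha+\sum_{c\ne c^*}d_{\rho,c}\zeta_c-\pi(\rho)\le -1<0$, contradicting the hypothesis. So $\pi(\rho)=0$, and $\rho$ inverts no pair of cost $H$ relative to $\rho^*$.

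\emph{Step 2: structure of $\rho$.} Since $\rho$ inverts no pair $(c_j,c^*)$ and $c^*$ is last in $\rho^*$, the candidate $c^*$ stays last, so $\textrm{rank}(c^*,\rho)=n+1$. Now take $i\prec j$. In $\rho^*$ we have $c_i\pref_{\rho^*}c_j$, and this pair has cost $H$, so it is not inverted and $c_i\pref_\rho c_j$. By the definition of $\sigma$ this means $i\prec_\sigma j$, so $\sigma$ is a linear extension of $\prec$.

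\emph{Step 3: the threshold.} The pair $(\sigma,\rho)$ now satisfies both hypotheses of Claim~\ref{claim:inequility:NP_hard:separation}. We also need $\textrm{pos}_\sigma(j)=\textrm{rank}(c_j,\rho)$, which holds because $c^*$ is last. Substituting into the violated inequality with $\pi(\rho)=0$ gives $\alpha+\theta-\alpha-\sum_{j\in J}w_j\,\textrm{pos}_\sigma(j)>0$, that is, $\sum_{j\in J} w_j\,\textrm{pos}_\sigma(j)<\theta$.

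The only delicate point is Step 1, namely checking that the bound $H-1$ holds for every ranking and not only for those with $c^*$ last. It does, because $|d_{\rho,c}|\le m-1$ and $\zeta_c>0$. Steps 2 and 3 are routine.
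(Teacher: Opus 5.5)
Your proof is correct and follows the paper's argument. The prohibitive cost $H$ forces $\pi(\rho)=0$, which keeps $c^*$ last and makes $\sigma$ a linear extension of $\prec$; Claim~\ref{claim:inequility:NP_hard:separation} then turns the violated inequality into $\sum_{j\in J} w_j\,\mathrm{pos}_\sigma(j)<\theta$ (your remark that $\mathrm{pos}_\sigma(j)=\textrm{rank}(c_j,\rho)$ is already built into that claim, so it is harmless but not needed).
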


	\begin{proofclaim}
		Since \[\alpha +\sum_{c \neq c^*}d_{\rho,c} \zeta_c - \pi(\rho) >0\] we have that $\pi(\rho) = 0$: otherwise $\pi(\rho) \ge H$, while $\alpha + \sum_{c \neq c^*} d_{\rho,c} \zeta_c \le H - 1$, so the inequality would not hold.
		Therefore, by construction of $\pi$, $\textrm{rank}(c^*,\rho) = n+1$.
		For the same reason, we can conclude that $\sigma$ must respect $\prec$ as otherwise $\pi(\rho) \ge H$.
		Now, we can apply Claim~\ref{claim:inequility:NP_hard:separation}, and thus:
		\begin{align*}
			\alpha +\sum_{c \neq c^*}d_{\rho,c} \zeta_c - \pi(\rho) >0                      & \implies \\
			\alpha + \Big(\theta-\alpha -\sum_{j \in J } \textrm{pos}_{\sigma} (j) w_j\Big)- 0 >0 & \implies \\
			\theta >\sum_{j \in J } \textrm{pos}_{\sigma} (j) w_j
		\end{align*}
		Since, $\sigma$ extends $\prec$ and respects the threshold $\theta$ it is indeed the desired ordering of $J$.
	\end{proofclaim}

	Put together, the claims above complete the reduction from ($1|\text{prec},\, p_j = 1|\sum_j w_j C_j$) to the separation problem over $K$.
\end{proof}

\begin{proof}[Proof of Theorem~\ref{thm:swap-bribery-hardness}]
	A polynomial-time algorithm for the problem in particular would decide budgets on single-type instances for which some finite-cost move achieves the margins, and would thus be the algorithm $\mathcal{A}$ of Lemma~\ref{lem:bribery-to-separation} in the Borda instantiation.
	It would therefore yield a polynomial-time separation algorithm over $K$, contradicting Lemma~\ref{lem:separation-np-hard}.
\end{proof}

	The proof below simulates the margins with unbribable ``padding'' voters. For each $c\neq c^*$, the pair of types associated with $c$ contributes $\Delta_c$ to $\mathrm{sc}_P(c)-\mathrm{sc}_P(c^*)$ and contributes zero to every other score difference relative to $c^*$.

\begin{proof}[Proof of Corollary~\ref{cor:margin-to-borda}]
	Let $\mathcal{I} = \big(C, \{(\sigma_i, \pi_i)\}_{i \in [\tau]}, \vemu, \veDelta\big)$ be an instance of \textrm{Borda Margin}-\textsc{Swap Bribery}$_\infty$.
	We construct an equivalent instance $\mathcal{I}'$ of Borda-\textsc{Swap Bribery}$_\infty$.
	First, copy the original voter types and their populations.
	Then, for each candidate $c\neq c^*$ with $\Delta_c \neq 0$, fix an ordering $c_1,\ldots,c_{m-2}$ of the candidates in $C\setminus \{c^*,c\}$ and create two new voter types, both of population $\frac{|\Delta_c|}{m}$:
	\begin{itemize}
		\item if $\Delta_c > 0$: $\rho^c_1$ is the ranking $c \prec c^* \prec c_1\prec\ldots \prec c_{m-2}$, and $\rho^c_2$ is the ranking $c\prec c_{m-2}\prec \ldots \prec c_1\prec c^*$;
		\item if $\Delta_c < 0$: $\rho^c_1$ is the ranking $c^*\prec c_1\prec\ldots \prec c_{m-2}\prec c$, and $\rho^c_2$ is the ranking $c_{m-2}\prec\ldots \prec c_1\prec c^* \prec c$.
	\end{itemize}
	All pairwise swap costs of the new types are $+\infty$; hence any ranking other than the type's initial one has cost $+\infty$, and a finite-cost move leaves the padding in place.
	The construction takes time polynomial in the encoding length of $\mathcal{I}$.

	Finite-cost moves in $\mathcal{I}'$ therefore coincide with finite-cost moves in $\mathcal{I}$ (extended by leaving the padding unchanged), and corresponding moves have equal costs; it remains to compare the winning conditions.
	Let $\mathrm{sc}_P$ denote the Borda score contributed by the padding types alone.
	We claim that
	\begin{equation} \label{eq:borda-padding-invariant}
		\mathrm{sc}_P(c) - \mathrm{sc}_P(c^*) = \Delta_c \qquad \text{for every } c \neq c^* .
	\end{equation}
	Consider the pair of types associated with $c$, per unit of population.
	If $\Delta_c > 0$, then $\rho^c_1$ gives $c$ exactly $1$ point more than $c^*$ and $\rho^c_2$ gives $c$ exactly $m-1$ points more, so the pair contributes $\frac{\Delta_c}{m}\big(1 + (m-1)\big) = \Delta_c$ to the difference $\mathrm{sc}_P(c) - \mathrm{sc}_P(c^*)$.
	If $\Delta_c < 0$, then symmetrically $\rho^c_1$ gives $c^*$ exactly $m-1$ points more than $c$ and $\rho^c_2$ gives $c^*$ exactly $1$ point more, so the pair contributes $\frac{-\Delta_c}{m} \cdot m = -\Delta_c$ to $\mathrm{sc}_P(c^*) - \mathrm{sc}_P(c)$ -- again matching~\eqref{eq:borda-padding-invariant}.
	Finally, the pair associated with any $c' \notin \{c, c^*\}$ contributes equally to $c$ and $c^*$: summed over its two rankings, every candidate outside $\{c'\}$ receives the same total score per unit of population ($m-2$ in the case $\Delta_{c'} > 0$ and $m$ in the case $\Delta_{c'} < 0$), as the second ranking reverses the block containing them.
	This proves~\eqref{eq:borda-padding-invariant}.

	Consequently, for any finite-cost move and every $c \neq c^*$, the score difference of $c^*$ over $c$ in $\mathcal{I}'$ equals its score difference in $\mathcal{I}$ minus $\Delta_c$.
	Hence, $c^*$ is a Borda winner in $\mathcal{I}'$ if and only if the corresponding move makes the score of $c^*$ at least $\Delta_c$ higher than that of $c$ for every $c \neq c^*$ in $\mathcal{I}$; that is, if and only if it makes $c^*$ a Borda Margin winner in $\mathcal{I}$.
	As corresponding moves have equal costs, the two instances are equivalent.
\end{proof}

\paragraph{$k$-Approval}

With general costs, $k$-Approval-\textsc{Swap Bribery}$_\infty$ is also intractable. To prove this, we are using the same template as for Borda-\textsc{Swap Bribery}$_\infty$. In this case, however, we show that the pricing problem is $W[1]$-hard when parameterized by $k$:

\begin{theorem} \label{thm:k-approval-separation-hardness}
    $k$-\textsc{Approval-Swap Pricing} is $W[1]$-hard parameterized by $k$, even when $\CandRank{\sigma}{c^*}=1$, and the pairwise swap costs are all $0$ or $1$.
\end{theorem}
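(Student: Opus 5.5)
The plan is a parameterized reduction from \textsc{Clique} with parameter $k'$. I will build a pricing instance whose parameter is $k:=1+k'+\binom{k'}{2}$, which depends only on $k'$. Following the proof of Theorem~\ref{thm:k-approval-unit-separation}, it suffices to optimize over top-$k$ sets $S$, since $\hat\rho_S$ is the cheapest ranking with top set $S$ (Claim~\ref{claim:minimim:cost:rankings:S}). The objective is $\sum_{c\in S}w_c-\pi(\hat\rho_S)$, where $w_{c^*}=\sum_{c\neq c^*}\zeta_c$ and $w_c=-\zeta_c$ otherwise. By \eqref{eq:k-approval-additive-cost} with general $\delta$, the cost $\pi(\hat\rho_S)$ is the sum of $\delta(c,c')$ over pairs where $c$ precedes $c'$ in $\sigma$, $c\notin S$ and $c'\in S$. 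If $c^*\in S$, the objective equals $Z-\sum_{c\in S\setminus\{c^*\}}\zeta_c-\pi(\hat\rho_S)$ with $Z:=\sum_{c\ne c^*}\zeta_c$. So the pricing problem becomes: choose $k-1$ opponents to promote while minimizing the sum of their $\zeta$-values plus the inversion cost.

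\textbf{Construction.} Given a graph $G=(V,E)$, I create one candidate per vertex and one per edge. The ranking $\sigma$ is $c^*$ first, then all vertex candidates in any order, then all edge candidates in any order. The swap costs are $\delta(v,e)=1$ when $v$ is an endpoint of $e$, and $0$ for every other pair. Hence an edge candidate placed in $S$ pays exactly the number of its endpoints left outside $S$. Vertex candidates pay nothing, since their only predecessors are $c^*$ and other vertices, all at cost $0$. I set $\zeta_v=1$ for vertices and $\zeta_e=\tfrac12$ for edges; the rational values are allowed.

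\textbf{Analysis.} First I dispose of sets with $c^*\notin S$. For them the objective is $-\sum_{c\in S}\zeta_c-\pi\le 0$, whereas any $S\ni c^*$ has value at least $Z-\tfrac32(k-1)>0$ once $|V|$ and $|E|$ are large enough. This can be ensured by padding with isolated vertices if needed.

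For $S\ni c^*$, let $S$ contain $x$ vertices and $y=k-1-x$ edges. An edge with both endpoints in $S$ contributes $\tfrac12$. Any other chosen edge contributes at least $\tfrac12+1=\tfrac32$. At most $\binom{x}{2}$ edges can be of the first kind, with equality exactly when the chosen vertices form a clique whose edges are all chosen. So the minimized quantity is at least
\[
f(x)=x+\tfrac12\min\{y,\tbinom{x}{2}\}+\tfrac32\max\{0,\,y-\tbinom{x}{2}\}.
\]
Target value: $f(k')=k'+\tfrac12\binom{k'}2$. The pricing optimum reaches $Z-f(k')$ if and only if $G$ has a $k'$-clique. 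The threshold can be encoded for the separation question via $\alpha$, although the statement concerns the optimization problem and the value comparison suffices.

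\textbf{Main obstacle.} The step needing care is the arithmetic claim that $f(x)>f(k')$ for all $x\neq k'$, and that $x=k'$ attains $f(k')$ only through a clique. For $x\ge k'$ the minimum is $\tfrac12(k-1+x)$, which is increasing in $x$. For $x<k'$ I will show that raising $x$ by one strictly decreases $f$. This holds because each added vertex costs $1$, but it converts at least one $\tfrac32$-edge into a $\tfrac12$-edge and removes one further $\tfrac32$ slot, a net saving of at least $\tfrac32$.

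Everything is polynomial, the swap costs lie in $\{0,1\}$, $c^*$ has rank $1$ in $\sigma$, and $k$ is a function of $k'$. This yields $W[1]$-hardness.
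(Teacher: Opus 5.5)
Your reduction is correct, but it follows a different route from the paper's.

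\textbf{The paper's route.} The paper uses only vertex candidates and sets $k=r+1$. It puts swap cost $1$ exactly on pairs of adjacent vertex candidates and chooses $\zeta_{c_i}=B-P_i$, where $P_i$ is the number of neighbours of $v_i$ preceding it in $\sigma$. This weight cancels the inversion cost that an approved candidate pays for its earlier neighbours. As a result, the pricing objective of $\rho_U$ is \emph{exactly} $Y-rB+|E(G[U])|$. Pricing thus literally becomes densest-$r$-subgraph, and no balancing inequality is needed.

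\textbf{Your route.} You use the standard vertex/edge selection gadget: uniform weights $\zeta_v=1$ and $\zeta_e=\tfrac12$, costs only on vertex--incident-edge pairs, and a counting argument through $f(x)$ that forces exactly $k'$ vertices and $\binom{k'}{2}$ edges. This makes it transparent why a clique is optimal, and the weights do not depend on the vertex order. The price is $\Theta(|V|+|E|)$ candidates and a quadratic parameter $k=1+k'+\binom{k'}{2}$. The paper's parameter-linear reduction is tighter: under ETH it would rule out $m^{o(k)}$-time pricing, matching the $m^{\Oh(k)}$ enumeration over $\mathcal{T}^\sigma$. Yours only rules out $m^{o(\sqrt{k})}$. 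For $W[1]$-hardness as stated, both suffice.

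\textbf{Two slips to fix.} Neither affects validity.
\begin{itemize}
\item It is not true that \emph{every} $S\ni c^*$ has value at least $Z-\tfrac32(k-1)$. With $x=0$, every approved edge has no approved endpoint and costs $\tfrac52$. However, \emph{some} such $S$ reaches this value, for example $c^*$ together with $k-1$ vertex candidates (after padding so that $|V|\ge k-1$ and $Z>\tfrac32(k-1)\ge f(k')$). That is all your threshold argument needs, since sets avoiding $c^*$ have negative value.
\item The saving from $x=0$ to $x=1$ is only $\tfrac12$, not $\tfrac32$, because the first vertex creates no cheap edge slot. The sign is still right: for $0\le x<k'$ one has $f(x)=\tfrac32(k-1)-\tfrac12x-\binom{x}{2}$, so $f(x)-f(x+1)=x+\tfrac12>0$.
\end{itemize}
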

\begin{proof}[Proof of Theorem~\ref{thm:k-approval-separation-hardness}]
    We present a parameterized reduction from \textsc{Clique}. Let $(G,r)$ be an instance of
    \textsc{Clique}, where $G=(V,E)$ and $r\geq 2$. By adding one isolated vertex if necessary,
    we may assume that $|V|>r$; this does not change whether $G$ has a clique of size $r$.
    Let $V=\{v_1,\ldots,v_n\}$.

    We construct an instance of $k$-\textsc{Approval-Swap Pricing} as follows.
    Create a distinguished candidate $c^*$ and, for each vertex $v_i$, a candidate $c_i$.
    Let the starting ranking be
    \[
        \sigma = c^* \prec c_1 \prec c_2 \prec \cdots \prec c_n,
    \]
    and set $k=r+1$.

    Next, we define the pairwise swap costs:
    \begin{itemize}
        \item $\delta(c^*,c_i)=0$, for each $i\in [n]$,
        \item $\delta(c_i,c_j)=1$, if $i<j$ and $v_i v_j\in E(G)$,
        \item $\delta(c_i,c_j)=0$, if $i<j$ and $v_i v_j\notin E(G)$.
    \end{itemize}

    For each $i\in [n]$, define
    $P_i=\sum_{j<i}\delta(c_j,c_i)$,
    and let
    $B=1+\sum_{i=1}^n P_i$.
    Since $B>P_i$ for every $i\in [n]$, the numbers
    $\zeta_{c_i}=B-P_i$
    are all positive.

    For every set $U\subseteq V(G)$ of size exactly $r$, let $S_U$ be the set of candidates $\{c^*\}\cup \{c_i\mid v_i\in U\}$,
    and let $\rho_U := \hat\rho_{S_U,\sigma}$ denote the stable ranking with top-$k$ set $S_U$.

    We first show that it is safe to restrict our attention to rankings whose top-$k$ set contains $c^*$. Additionally, every such ranking in $\mathcal{T}^\sigma$ is $\rho_U$ for some $U$.

    \begin{claim} \label{claim:k:Approval:Separation:hardness-p:ranked:first}
        Every optimal ranking approves $c^*$, and hence equals $\rho_U$ for a set $U \subseteq V(G)$ of size $r$.
    \end{claim}

    \begin{proofclaim}
        Consider any ranking $\rho'\in \mathcal{T}^\sigma$. If $I_{c^*,\rho'}=0$, then
        \begin{align*}
            & -\pi(\rho')+\sum_{c \neq c^*} (I_{c^*,\rho'}-I_{c,\rho'})\zeta_c
            \\ &  =-\pi(\rho')-\sum_{c \neq c^*} I_{c,\rho'}\zeta_c \\ & \leq 0.
        \end{align*}
        On the other hand, for the starting ranking $\sigma$ we have $\pi(\sigma)=0$ and
        \[
            -\pi(\sigma)+\sum_{c \neq c^*} (I_{c^*,\sigma}-I_{c,\sigma})\zeta_c
            =\sum_{i=k}^{n} \zeta_{c_i} > 0,
        \]
        because $n>r$ and every $\zeta_{c_i}$ is positive. Hence every optimal ranking approves $c^*$.

        Since every ranking in $\mathcal{T}^\sigma$ preserves the starting order within its top-$k$ block, and $c^*$ is ranked first in $\sigma$, every ranking in $\mathcal{T}^\sigma$ that approves $c^*$ also ranks $c^*$ first. Therefore, such a ranking is uniquely determined by its remaining $r$ approved candidates; that is, it is $\rho_U$ for some $U \subseteq V(G)$ with $|U|=r$.
    \end{proofclaim}

    So it suffices to compare the rankings $\rho_U$, where $U\subseteq V(G)$ and $|U|=r$.
    We now compute the cost of $\rho_U$.

    \begin{claim} \label{claim:k:Approval:Separation:hardness-relation:cost:set}
        For every $U\subseteq V(G)$ with $|U|=r$, we have
        \[
            \pi(\rho_U)= \sum_{v_i\in U} P_i - \sum_{\{v_i,v_j\}\subseteq U}\delta(c_i,c_j).
        \]
    \end{claim}

    \begin{proofclaim}
        In the ranking $\rho_U$, the only inversions with respect to the starting ranking $\sigma$
        are pairs $(c_j,c_i)$ such that $j<i$, $v_j\notin U$, and $v_i\in U$.
        Thus
        \[
            \pi(\rho_U)=\sum_{v_i\in U}\sum_{\substack{j<i\\ v_j\notin U}}\delta(c_j,c_i).
        \]
        On the other hand, by the definition of $P_i$,
        \begin{align*}
            \sum_{v_i\in U} P_i
            = &
            \sum_{v_i\in U}\sum_{j<i}\delta(c_j,c_i)
            = \\
            = & \sum_{v_i\in U}\sum_{\substack{j<i\\ v_j\notin U}}\delta(c_j,c_i)
            +
            \sum_{\{v_i,v_j\}\subseteq U}\delta(c_i,c_j).
        \end{align*}

        Rearranging proves the claim.
    \end{proofclaim}

    We are now ready to evaluate the objective. Let
    $Y=\sum_{c \neq c^*} \zeta_c$.
    For any set $U\subseteq V(G)$ of size $r$, using
    Claim~\ref{claim:k:Approval:Separation:hardness-relation:cost:set}, we obtain
    \begin{align*}
        &-\pi(\rho_U)+\sum_{c \neq c^*} (I_{c^*,\rho_U}-I_{c,\rho_U})\zeta_c = \\
        &Y-\sum_{v_i\in U} \zeta_{c_i} - \pi(\rho_U) = \\
        &Y-\sum_{v_i\in U} (B-P_i) - \left(\sum_{v_i\in U} P_i - \sum_{\{v_i,v_j\}\subseteq U}\delta(c_i,c_j)\right)=\\
        &Y-rB+\sum_{\{v_i,v_j\}\subseteq U}\delta(c_i,c_j).
    \end{align*}
    Since $\delta(c_i,c_j)=1$ exactly when $v_i v_j\in E(G)$, the final term is precisely the
    number of edges induced by $U$. Therefore,
    \[
        -\pi(\rho_U)+\sum_{c \neq c^*} (I_{c^*,\rho_U}-I_{c,\rho_U})\zeta_c
        =
        Y-rB+|E(G[U])|.
    \]
    Hence, this value is equal to $Y-rB+\binom{r}{2}$ if $U$ is a clique of size $r$, and is
    strictly smaller otherwise.

    Consequently, $G$ has a clique of size $r$ if and only if the optimum value of the constructed
    $k$-\textsc{Approval-Swap Pricing} instance is at least
    $Y-rB+\binom{r}{2}$.
    This is a parameterized reduction with $k=r+1$, so
    $k$-\textsc{Approval-Swap Pricing} is $W[1]$-hard when parameterized by $k$.
\end{proof}

The bridge of Lemma~\ref{lem:bribery-to-separation} carries this hardness of pricing to the bribery problem itself, and a padding argument with unbribable voter types again removes the margins.
\begin{theorem} \label{k-approval:hardness}
    Unless $FPT = W[1]$, $k$-Approval-\textsc{Swap Bribery}$_\infty$ cannot be solved in time $f(k) \poly(L)$ for any computable $f$, where $L$ is the encoding length of the input.
\end{theorem}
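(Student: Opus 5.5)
The plan mirrors the Borda route: Theorem~\ref{thm:k-approval-separation-hardness}, then the bridge of Lemma~\ref{lem:bribery-to-separation}, then a padding argument that removes the margins. Suppose an algorithm $\mathcal{A}'$ solves $k$-Approval-\textsc{Swap Bribery}$_\infty$ in time $f(k)\poly(L)$.

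\emph{Step 1: from margins to plain bribery.} First I turn $\mathcal{A}'$ into a decision algorithm $\mathcal{A}$ for the single-type margin LP $P(\veb)$ with $\mathcal{Q}=\mathcal{T}^{\sigma_1}$. Given $(\mu,\veDelta)$ and a budget, I build a $k$-Approval-\textsc{Swap Bribery}$_\infty$ instance. It keeps the single bribable type, with its original swap costs and mass $\mu$. By Claim~\ref{k:approval:XP:claim:output::types}, restricting its destinations to $\mathcal{T}^{\sigma_1}$ loses nothing, so the bribery optimum equals the LP optimum. It also gets unbribable padding types with all swap costs $+\infty$, so their contribution to scores is fixed.

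For each $c\neq c^*$ with $\Delta_c>0$, I add mass $\Delta_c$ of a ranking that approves $c$ but not $c^*$. For $\Delta_c<0$, I add mass $|\Delta_c|$ of a ranking that approves $c^*$ but not $c$. This contributes exactly $\Delta_c$ to $\mathrm{sc}_P(c)-\mathrm{sc}_P(c^*)$.

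The difficulty is that such a ranking approves $k-1$ other candidates, which perturbs their differences. To fix this, I introduce fresh dummy candidates and fill the remaining approved slots with them. I also make the bribable type's costs of swapping any dummy into its top-$k$ equal to $+\infty$, so the dummies never gain score from the bribable type. There are $\Oh(km)$ dummies.

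Then I add enough mass of further padding rankings approving $c^*$ together with $k-1$ dummies, so that no dummy can outscore $c^*$. This extra mass raises $c^*$ uniformly relative to every real candidate. I compensate by adding the same mass of rankings approving each real opponent $c$ alongside dummies. Alternatively, I add a constant to every $\Delta_c$ beforehand; this is harmless because the margin instance may have arbitrary rational $\veDelta$.

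I check that the resulting winning condition is exactly ``margin at least $\Delta_c$ for all $c$''. The parameter stays $k$, and the size is $\poly(L)$.

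\emph{Step 2: the bridge.} With $\mathcal{A}$ running in time $f(k)\poly(L)$, Lemma~\ref{lem:bribery-to-separation} gives separation over $K$ in time $f(k)\poly(L)$. Separation over $K$ for a given $(\alpha,\vezeta)$ with $\vezeta\ge\vezero$ decides whether $\max_{\rho\in\mathcal{T}^{\sigma}} \bigl(\sum_c d_{\rho,c}\zeta_c-\pi(\rho)\bigr) > -\alpha$.

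To decide whether the pricing optimum is at least $Y-rB+\binom r2$, I set $\alpha:=-(Y-rB+\binom r2)+\tfrac12$. The pricing values in the Theorem~\ref{thm:k-approval-separation-hardness} instance are integers, so this threshold is exact.

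\emph{Step 3: the hardness conclusion.} The instance from Theorem~\ref{thm:k-approval-separation-hardness} has $k=r+1$, $0/1$ costs, and polynomial encoding length. Composing the reductions therefore gives an FPT algorithm for \textsc{Clique}, which implies $\mathrm{FPT}=W[1]$.

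The main obstacle is Step 1: the padding must control the dummy candidates while keeping $k$ unchanged. I also need to confirm that $K$ is well-described for the instance class used, which is immediate since the coefficients are bounded by $1$ and the costs are polynomially encoded.
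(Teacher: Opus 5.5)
Your argument is correct. Steps~2 and~3 are exactly the paper's: the bridge of Lemma~\ref{lem:bribery-to-separation}, then one separation query at $\alpha=-(\Theta-\tfrac12)$ with the $\vezeta$ of Theorem~\ref{thm:k-approval-separation-hardness}.

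You differ from the paper only in the padding. The paper's Claim~\ref{claim:k-approval-margin-padding} adds no candidates. For each opponent $c$ it adds $m-1$ unbribable types that cyclically rotate the remaining candidates: either $c$ first with population $\Delta_c/(m-k)$, or $c$ last with population $|\Delta_c|/k$. Then $c^*$ and every third candidate are approved equally often, so only the $c^*$--$c$ difference changes. This leaves $C$, $c^*$, $k$ and every bribable type untouched, and it works for any number of bribable types.

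Your fresh-dummy gadget avoids that counting, but it pays elsewhere:
\begin{itemize}
\item It must modify the bribable type: dummies are appended below $\sigma_1$ with $+\infty$ dummy--real swap costs. Since $k<|C|$, no finite-cost destination then approves a dummy, and the optimum is exactly that of $P(\veb)$.
\item It needs a uniform shift for negative margins: mass $Z:=\max_c\max(0,-\Delta_c)$ of ``$c^*$ plus dummies'' and mass $\Delta_c+Z$ of ``$c$ plus dummies'' for each $c$.
\item It needs a separate argument that no dummy beats $c^*$.
\end{itemize}

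That last point is where your sketch is loose. Adding more $c^*$-mass and compensating on every $c$ does not by itself keep the dummies down, because the compensating rankings raise their own dummies by the same amount. The correct argument uses fresh dummies per padding ranking:
\begin{itemize}
\item A dummy in a ``$c^*$ plus dummies'' ranking scores $Z$, which is at most $c^*$'s score.
\item A dummy in the ranking for $c$ scores exactly $\Delta_c+Z$.
\item Any move meeting the margins gives $c^*$ at least $\Delta_c$ from the bribable mass, because $c$'s score there is nonnegative. Hence $c^*$'s final score is at least $Z+\Delta_c$.
\end{itemize}
So the dummy constraints are implied by the real margin constraints. The reverse direction (winner in the padded instance implies margins in the original) ignores the dummies anyway.
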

The proof is a parameterized Turing reduction: an $f(k)\poly(L)$-time bribery algorithm would give an FPT algorithm for \textsc{Clique}.

\begin{proof}[Proof of Theorem~\ref{k-approval:hardness}]
    Suppose, for contradiction, that $k$-Approval-\textsc{Swap Bribery}$_\infty$, or its budget-decision version, is solvable in time $f(k)\poly(L)$.
    By Claim~\ref{claim:k-approval-margin-padding} below, which preserves $C$, $c^*$, $k$, and all budgets while increasing the encoding length only polynomially, this yields an algorithm $\mathcal{A}$ of the same running-time form for (the budget-decision version of) $k$-Approval Margin-\textsc{Swap Bribery}$_\infty$ with a single voter type.
    The margin LP of such an instance is exactly the primal LP of Section~\ref{sec:oracle-to-dual} in its $k$-Approval instantiation, so Lemma~\ref{lem:bribery-to-separation} yields an algorithm for the separation problem over $K$ running in time $f(k)\poly(L)$ -- every call shares the LP data, hence the parameter $k$.
    Finally, separation over $K$ decides the hard instances constructed in the proof of Theorem~\ref{thm:k-approval-separation-hardness}: there all objective values $-\pi(\rho) + \sum_{c \neq c^*} (I_{c^*,\rho} - I_{c,\rho}) \zeta_c$ are integers, and $G$ has a clique of size $r$ if and only if their maximum is at least the integer $\Theta := Y - rB + \binom{r}{2}$; querying separation at the dual point $\vey = (\alpha, \vezeta)$ given by $\alpha := -(\Theta - \frac{1}{2})$ and the $\vezeta$ constructed there returns a violated constraint if and only if some $\rho \in \mathcal{T}^{\sigma_1}$ has objective value at least $\Theta$.
    Since the reduction is polynomial-time with $k = r+1$ and every step preserves $k$, this decides \textsc{Clique} in time $f(r+1)\poly$, placing the $W[1]$-hard \textsc{Clique} problem in FPT.
    It remains to prove the following claim.
\begin{claim} \label{claim:k-approval-margin-padding}
    Let $1\le k<m$, and let $\mathcal{I}= \big( C, \{(\sigma_i,\pi_i)\}_{i \in [\tau]}, \vemu, \veDelta \big)$ be an instance of $k$-Approval Margin-\textsc{Swap Bribery}$_\infty$.
    We can construct, in time $\poly(m,\tau,L)$, an instance $\mathcal{I}'$ of $k$-Approval \textsc{Swap Bribery}$_\infty$ with the following property.
    For every budget $\beta \in \mathbb{Q}$, the instance $\mathcal{I}$ admits a move of cost at most $\beta$ making $c^*$ a $k$-Approval Margin winner if and only if $\mathcal{I}'$ admits a move of cost at most $\beta$ making $c^*$ a $k$-Approval winner.
\end{claim}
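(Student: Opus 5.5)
The plan follows the padding argument of Corollary~\ref{cor:margin-to-borda}. Copy all original types and populations of $\mathcal{I}$. Then add finitely many \emph{unbribable} types, meaning types with all swap costs $+\infty$, so that every finite-cost move leaves them in place. Choose their masses so that their total $k$-Approval contribution $\mathrm{sc}_P$ satisfies $\mathrm{sc}_P(c)-\mathrm{sc}_P(c^*)=\Delta_c$ for every $c\neq c^*$. Once this invariant holds, the final step is verbatim as in Corollary~\ref{cor:margin-to-borda}. Finite-cost moves of $\mathcal{I}'$ correspond bijectively to those of $\mathcal{I}$ at equal cost, and the score difference of $c^*$ over each $c$ shifts by exactly $-\Delta_c$. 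Hence ``$c^*$ is a $k$-Approval winner in $\mathcal{I}'$'' is equivalent to ``the margins $\veDelta$ are met in $\mathcal{I}$'' for every budget $\beta$. The sets $C$, the candidate $c^*$ and the parameter $k$ are untouched.

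The only real work is building the padding, since a single $k$-Approval ballot affects many score differences at once. I will use ballots that are described only by their approved $k$-set; the order within each block is irrelevant. For each $c\neq c^*$, the plan is to find a nonnegative combination of ballots whose vector of differences $(\mathrm{sc}(c')-\mathrm{sc}(c^*))_{c'\neq c^*}$ equals $\pm\vece_c$. The simplest building blocks are the following two ballots.
\begin{itemize}
\item The ballot $N_c$ approves $c^*$ and some $k-1$ candidates other than $c$. It is available when $k-1\le m-2$, which holds because $k<m$. Its difference vector is $-\vece_c-\veone_X$, where $X$ is the set of its other approved opponents.
\item The ballot $P_c$ approves $c$ and $k-1$ opponents but not $c^*$. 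Its difference vector is $\vece_c+\veone_X$.
\end{itemize}

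To cancel the spillover $\veone_X$, I average over a cyclic family of $X$'s within $C\setminus\{c^*,c\}$. This turns the spillover into a uniform vector $\lambda\veone_{C\setminus\{c^*,c\}}$ with $\lambda=(k-1)/(m-2)$. To remove the resulting uniform term, I use two further families of ballots.
\begin{itemize}
\item The cyclic family approving $k$ of the $m-1$ opponents and not $c^*$ gives $\frac{k}{m-1}\veone$.
\item The cyclic family approving $c^*$ and $k-1$ opponents gives $-\frac{m-k}{m-1}\veone$.
\end{itemize}
Both coefficients are nonzero because $1\le k<m$.

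From these four families a small linear system yields each $\pm\vece_c$ with nonnegative rational masses. The masses have polynomial encoding length and use $O(m)$ rankings per gadget. Scaling each gadget by $|\Delta_c|$ and summing over $c$ gives the invariant. The main obstacle is exactly this nonnegativity bookkeeping, together with the degenerate boundary cases. For instance, when $k=m-1$ we have $\lambda=1$, and the cyclic trick then fails to separate $c$ from the rest. In that case I would fall back on the direct ballot ``approve everything except $c$''. It contributes exactly $-\vece_c$. The ballot ``approve everything except $c^*$'' contributes $+\veone$, and combining it with the $-\vece_{c'}$ ballots yields $+\vece_c$.

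Finally, I will check that the construction runs in time $\poly(m,\tau,L)$. The unbribable types use infinite costs, which the model allows, and Lemma~\ref{lem:remove-infinite-costs} can remove them afterwards if needed.
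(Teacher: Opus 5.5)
Your overall strategy is the paper's: copy the original types, add unbribable padding realizing $\mathrm{sc}_P(c)-\mathrm{sc}_P(c^*)=\Delta_c$, and conclude as in Corollary~\ref{cor:margin-to-borda}. The gap is in the gadget, where the difference vector of $N_c$ is miscomputed. Because $c^*$ is approved, every opponent approved alongside it gets difference $0$ and every unapproved opponent gets $-1$. Writing $Y:=C\setminus\{c^*,c\}$, $N_c$ therefore contributes $-\veone+\veone_X=-\vece_c-\veone_{Y\setminus X}$, not $-\vece_c-\veone_X$. After cyclic averaging the spillover is $(1-\lambda)\veone_Y$, and cancelling it with the $\frac{k}{m-1}\veone$ family leaves only $-\lambda\vece_c$, where $\lambda=\frac{k-1}{m-2}$.

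This vanishes at $k=1$. For plurality, $N_c$ is just ``approve $c^*$'', with vector $-\veone$. For $m\ge 3$, your four families for the single candidate $c$ span only vectors $a\vece_c+b\veone$ with $a\ge 0$, so they never produce $-\vece_c$. Hence the degenerate case for negative margins is $k=1$, which you do not treat. At $k=m-1$, on the other hand, $N_c$ is ``approve everything but $c$'' and gives $-\vece_c$ directly. That is exactly what your fallback asserts, and it contradicts your stated formula. Your positive side is right: $P_c$ plus the $-\frac{m-k}{m-1}\veone$ family gives $(1-\lambda)\vece_c$, degenerate only at $k=m-1$, and your fallback there works.

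The hole is easy to patch. For $k=1$, take ``approve $c^*$'' plus one ``approve $c'$'' ballot for each $c'\in Y$; this gives $-\veone+\veone_Y=-\vece_c$.

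The paper avoids all of this bookkeeping by putting $c^*$ into the cyclic average.
\begin{itemize}
\item For $\Delta_c>0$, it fixes $c$ first and runs the other $m-1$ candidates, $c^*$ included, through all $m-1$ cyclic rotations in positions $2,\dots,m$, each with mass $\Delta_c/(m-k)$.
\item For $\Delta_c<0$, it fixes $c$ last and rotates the others through positions $1,\dots,m-1$, each with mass $|\Delta_c|/k$.
\end{itemize}
Every candidate other than $c$, including $c^*$, then gets the same padding score, so the spillover cancels in all differences. Each gadget yields exactly $\pm|\Delta_c|\vece_c$ with no correction families and no case split, uniformly for all $1\le k<m$.
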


\begin{proofclaim}
We start with a copy of each voter type $i\in [\tau]$. Then, for each $c \neq c^*$ such that $\Delta_c > 0$, we create $m-1$ new voter types $(\rho^c_i, \pi^c_i)_{i \in [m-1]}$, each with population $\frac{\Delta_c}{m-k}$, as follows. First, we fix an ordering $\{c_1,\ldots,c_{m-2}\}$ of the candidates in $C \setminus \{c,c^*\}$. Then, for any $i < m-1$, we define $\rho^c_i$ to be the ranking
\[
c \prec c_i \prec \cdots \prec c_{m-2} \prec c^* \prec c_1 \prec \cdots \prec c_{i-1},
\]
and $\rho^c_{m-1}$ to be the ranking
\[
c \prec c^* \prec c_1 \prec \cdots \prec c_{m-2}.
\]
All pairwise swap costs of each new type are $+\infty$, so that $\pi^c_i(\rho^c_i) = 0$ and $\pi^c_i(\rho)=+\infty$ for every $\rho \neq \rho^c_i$: the padding cannot be bribed at finite cost.

Note that adding the types $(\rho^c_i,\pi^c_i)_{i \in [m-1]}$ gives $\frac{(m-1)\Delta_c}{m-k}$ points to candidate $c$ (since $c$ is ranked first in all the new types) and $\frac{(k-1)\Delta_c}{m-k}$ points to any other candidate (since each of them is ranked within the first $k$ positions in exactly $k-1$ types). Thus, adding these voters gives $c$ exactly $\Delta_c$ more points than $c^*$, while the point difference between $c^*$ and $c'$, for any $c' \in C \setminus \{c,c^*\}$, is unchanged.

Now, for each $c \neq c^*$ such that $\Delta_c < 0$, we create $m-1$ new voter types $(\rho^c_i, \pi^c_i)_{i \in [m-1]}$, each with population $\frac{|\Delta_c|}{k}$, as follows. First, we fix an ordering $\{c_1,\ldots,c_{m-2}\}$ of the candidates in $C \setminus \{c,c^*\}$. Then, for any $i < m-1$, we define $\rho^c_i$ to be the ranking
\[
c_i \prec \cdots \prec c_{m-2} \prec c^* \prec c_1 \prec \cdots \prec c_{i-1} \prec c,
\]
and $\rho^c_{m-1}$ to be the ranking
\[
c^* \prec c_1 \prec \cdots \prec c_{m-2} \prec c.
\]
Again, all pairwise swap costs of each new type are $+\infty$.

Adding this set of voters gives $0$ points to candidate $c$ (since $c$ is ranked last in all the new types) and $k|\Delta_c|/k=|\Delta_c|$ points to any other candidate (since each of them is ranked within the first $k$ positions in exactly $k$ types). Thus, adding these voters gives $c^*$ exactly $|\Delta_c|$ more points than $c$, while the point difference between $c^*$ and $c'$, for any $c' \in C \setminus \{c,c^*\}$, is unchanged.

After repeating the previous steps for all candidates, we end up with at most $(m-1)^2$ new types, each constructed in polynomial time; call the result $\mathcal{I}'$.
In summary, writing $\mathrm{sc}_P$ for the $k$-Approval score contributed by the padding types alone, the construction guarantees
\[
	\mathrm{sc}_P(c) - \mathrm{sc}_P(c^*) = \Delta_c \qquad \text{for every } c \neq c^* .
\]

Since the padding types cannot be moved at finite cost, the finite-cost moves in $\mathcal{I}'$ coincide with the finite-cost moves in $\mathcal{I}$ (extended by leaving the padding in place), and corresponding moves have equal costs.
For any such move and every $c \neq c^*$, the score difference of $c^*$ over $c$ in $\mathcal{I}'$ equals its score difference in $\mathcal{I}$ minus $\Delta_c$; hence $c^*$ is a $k$-Approval winner in $\mathcal{I}'$ if and only if the same move makes $c^*$ a $k$-Approval Margin winner in $\mathcal{I}$.
Together with the equality of costs, this proves the claim.
\end{proofclaim}
\end{proof}

\paragraph{Removing infinite swap costs}
Infinite swap costs are a convenience, not an extension of the model.
In the discrete setting they are removed by charging $\beta+1$ per forbidden swap, which no bribery within budget $\beta$ can afford.
That argument fails here, because a move may shift an arbitrarily small amount of population: mass $\epsilon$ off an unbribable type costs only $\epsilon M$.

The replacement uses the following standard LP argument.
Consider the polytope of moves satisfying the mass-conservation and margin constraints, without imposing a budget constraint.
This polytope does not depend on the move costs.
If the instance obtained by replacing $+\infty$ with a finite cost has a solution within budget, then minimizing its cost over this polytope yields a vertex solution within budget.
Standard vertex-encoding bounds give a positive lower bound on every nonzero coordinate of such a vertex, depending only on the original instance.
We choose the replacement cost so that moving even this much mass to a formerly forbidden destination exceeds the budget.

\begin{lemma} \label{lem:remove-infinite-costs}
	Let $\mathcal{I} = \big(C, c^*, \ves, \{(\sigma_i,\delta_i)\}_{i \in [\tau]}, \vemu, \veDelta\big)$
	be an instance of $\mathcal{X}$~Margin-\textsc{Swap Bribery}$_\infty$ for a positional scoring rule
	$\mathcal{X}$ with score vector $\ves = (s_1 \ge \dots \ge s_m) \in \mathbb{Q}^m$, let
	$\beta \in \mathbb{Q}_{\ge 0}$ be a budget, and let $N$ be the encoding length of $(\mathcal{I}, \beta)$.
	For $M \in \mathbb{Q}_{>0}$ let $\mathcal{I}_M$ be the instance obtained from $\mathcal{I}$ by replacing
	every $+\infty$ entry of every $\delta_i$ by $M$ (and recomputing the $\pi_i$).
	Then
	\[
		M := 1 + \beta\, 2^{10N^2}
	\]
	is computable in time $\poly(N)$, has encoding length $\Oh(N^2)$, and
	$\mathcal{I}$ admits a move of cost at most $\beta$ whose output society achieves the margins
	$\veDelta$ if and only if $\mathcal{I}_M$ does.
\end{lemma}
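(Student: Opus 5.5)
The proof has three parts: the two directions of the equivalence, and then the encoding-length arithmetic behind the choice of $M$.

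\emph{Easy direction.} Every finite-cost move of $\mathcal{I}$ sends mass only to rankings $\rho$ with $\pi_i(\rho)<+\infty$. Such a $\rho$ inverts no pair whose $\delta_i$ was $+\infty$. Its cost in $\mathcal{I}_M$ is therefore identical to its cost in $\mathcal{I}$, and the output society is the same. So a solution of $\mathcal{I}$ within budget $\beta$ is a solution of $\mathcal{I}_M$ within budget $\beta$.

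\emph{Converse.} Suppose $\mathcal{I}_M$ admits a move of cost at most $\beta$. The feasible moves form the polytope
\[
P=\{\vex\ge\vezero:\ \textstyle\sum_\rho x_{i\to\rho}=\mu_i\ (i\in[\tau]),\ \sum_{i,\rho}(s_{\textrm{rank}(c^*,\rho)}-s_{\textrm{rank}(c,\rho)})x_{i\to\rho}\ge\Delta_c\ (c\ne c^*)\}.
\]
This set is independent of costs and bounded, since total mass is fixed. Minimizing the linear cost of $\mathcal{I}_M$ over $P$ therefore attains its optimum at a vertex $\vex^*$, and the cost of $\vex^*$ is at most $\beta$.

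A vertex is determined by a nonsingular subsystem of at most $\tau+m-1$ rows restricted to its support columns. The entries are the integer score differences, $0/1$ mass coefficients, and the rational data $\vemu,\veDelta$, all of encoding length at most $N$. Cramer's rule, with Hadamard-type determinant bounds, then shows that every nonzero coordinate of $\vex^*$ is at least $2^{-p(N)}$ for some explicit quadratic $p$. I would aim for $p(N)\le 10N^2-1$, following \cite[Section~6.2]{GLS}.

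Assume some $x^*_{i\to\rho}>0$ with $\rho$ inverting a formerly infinite pair. Then $\pi_i(\rho)\ge M$, so the cost of $\vex^*$ is at least $M2^{-p(N)}>\beta$ for our choice of $M$, a contradiction. Hence $\vex^*$ uses only finite-cost destinations of $\mathcal{I}$, where the costs agree, and it is a solution of $\mathcal{I}$ within budget. When $\beta=0$ we get $M=1$, and the argument still works because any positive mass on a forbidden destination costs more than $0$.

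\emph{Main obstacle.} The difficult step is making the vertex lower bound rigorous in $N$ alone. The columns are indexed by all $m!$ rankings, so the bound must not depend on the number of columns, only on the rows and entry sizes. The basis is at most $(\tau+m-1)\times(\tau+m-1)$, and each column entry is a score difference of encoding length at most the encoding length of $\ves$, which is at most $N$. The determinant of the basis matrix then has encoding length $\Oh(N^2)$, and clearing the denominators of the right-hand side adds $\Oh(N)$ more. Both estimates need to be checked carefully to confirm that the constant $10$ suffices.

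Finally, $M=1+\beta2^{10N^2}$ is clearly computable in $\poly(N)$ time and has encoding length $\Oh(N^2)$.
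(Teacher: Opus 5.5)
Your proposal is correct and follows the paper's proof essentially step for step. The forward direction uses that costs agree on allowed destinations. The converse takes an optimal vertex of the cost-independent bounded polytope $P$, whose nonzero coordinates are at least $2^{-10N^2}$ because only the $\tau+m-1\le 2N$ rows (not the $m!$ columns) enter the bound, so any mass on a formerly forbidden ranking already costs more than $\beta$. The constant you flag as the main obstacle is routine: after clearing denominators (including those of the rational scores, so the differences are not integers a priori), the basis is $k\times k$ with $k\le 2N$ and integer entries of absolute value at most $2^{4N}$. Hadamard's bound then gives $|\det A_B|\le (2N)^N2^{8N^2}\le 2^{10N^2}$, while the Cramer numerator of a nonzero coordinate is a nonzero integer.
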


\begin{proof}[Proof of Lemma~\ref{lem:remove-infinite-costs}]
	Write $\pi_i$ and $\pi^M_i$ for the move cost functions of $\mathcal{I}$ and $\mathcal{I}_M$,
	$\mathcal{A}_i := \{\rho \in \mathcal{R} : \pi_i(\rho) < +\infty\}$ for the \emph{allowed}
	destinations of type $i$ and $\mathcal{F}_i := \mathcal{R} \setminus \mathcal{A}_i$ for the
	forbidden ones, and let
	$d_{\rho,c} := s_{\rank(c^*,\rho)} - s_{\rank(c,\rho)}$.
	A move satisfies the required margins precisely when it belongs to the polytope
	\[
		P:=\left\{\vex\ge\vezero:
		\begin{array}{ll}
			\displaystyle\sum_{\rho\in\mathcal R}x_{i\to\rho}=\mu_i
			& (i\in[\tau]),\\[2mm]
			\displaystyle\sum_{i\in[\tau]}\sum_{\rho\in\mathcal R}
			d_{\rho,c}x_{i\to\rho}\ge\Delta_c
			& (c\ne c^*)
		\end{array}\right\}.
	\]
	The polytope $P$ is independent of $M$ and is bounded, because its nonnegative variables satisfy
	$\sum_\rho x_{i\to\rho}=\mu_i$ for every type $i$.

	We first record a standard bound on its vertices.
	There are $\tau+m-1$ constraints besides nonnegativity.
	Since the input explicitly lists the $m$ entries of the score vector and the $\tau$ voter types,
	we have $m,\tau\le N$, and hence $\tau+m-1\le 2N$.
	Therefore every vertex of $P$ has at most
	$2N$ positive coordinates.
	Restricting to these coordinates therefore gives a rational system of dimension at most $2N$.
	After clearing the denominators of the input rationals, the coefficients and right-hand sides of
	this system are integers of absolute value at most $2^{4N}$.
	The standard vertex-encoding bound for rational polyhedra \cite[Section~6.2]{GLS} therefore implies that
	for every vertex $\vex$ of $P$,
	\begin{equation}\label{eq:rm-inf-vertex-bound}
		x_{i\to\rho}=0\quad\text{or}\quad x_{i\to\rho}\ge 2^{-10N^2}
		\qquad(i\in[\tau],\ \rho\in\mathcal R).
	\end{equation}
	The constant is deliberately loose; only a bound of polynomial encoding length is needed.

	The number $M$ is obtained from $\beta$ and the $(10N^2+1)$-bit integer $2^{10N^2}$ by one
	multiplication and an addition. Thus it is computable in polynomial time, has encoding length
	$\Oh(N^2)$, and is positive.

	For the forward implication, a move $\vex$ of cost at most $\beta$ in $\mathcal{I}$ is supported
	on allowed columns, on which $\pi^M_i = \pi_i$.
	Thus $\vex$ is a move of $\mathcal{I}_M$ of the same cost with the same output society.

	Conversely, suppose that $\mathcal{I}_M$ admits a move in $P$ of cost at most $\beta$.
	Minimizing $\sum_{i\in[\tau]}\sum_{\rho\in\mathcal R}\pi^M_i(\rho)x_{i\to\rho}$ over the nonempty bounded polytope $P$
	yields an optimal vertex $\vex^*$ of cost at most $\beta$.
	If $\vex^*$ is supported only on allowed destinations, then it has the same cost in $\mathcal{I}$
	and is the required solution there.
	Otherwise, $x^*_{i\to\rho}>0$ for some $i$ and $\rho\in\mathcal F_i$.
	Such a ranking inverts at least one pair whose cost was $+\infty$ and is now $M$; since all swap
	costs are nonnegative, $\pi^M_i(\rho)\ge M$.
	Together with~\eqref{eq:rm-inf-vertex-bound}, this gives
	\[
		\sum_{j,\sigma}\pi^M_j(\sigma)x^*_{j\to\sigma}
		\ge Mx^*_{i\to\rho}
		\ge (1+\beta2^{10N^2})2^{-10N^2}>\beta,
	\]
	contrary to the choice of $\vex^*$.
\end{proof}

The magnitude of $M$ cannot be lowered to $\poly(N)$, and no single $M$ serves every budget.
Indeed, take $\ves = (1,0)$, one unbribable type $\sigma_1$ with $c \succ c^*$ of mass $1$, $\Delta_c = -1 + 2^{-L}$ and $\beta = 1$: then $\mathcal{I}$ is a no-instance while $\mathcal{I}_M$ is a yes-instance for every $M \le 2^{L+1}$.

\paragraph{The remaining case: uniform costs for Borda}

A natural and seemingly small generalization of unit costs are \emph{uniform} costs, where each type $i$ has a single per-swap cost $q_i > 0$.
For $k$-Approval they are already settled: uniform costs are additively separable, with $a^i_j := q_i$ and $b^i_{j'} := 0$, so Theorem~\ref{thm:k-approval-unit-separation} applies.
For Borda they are open, and the pricing problem of a type $i$ boils down to the following.
\niceproblem{\textsc{Inv-Perm-Max}}
{A vector $\vey \in \mathbb{Q}^n$.}
{A permutation $\rho$ of $[n]$ maximizing $\rho^T \vey - \mathrm{inv}(\rho)$, where $\mathrm{inv}(\rho) = |\{(i,j) \mid i < j,\ \rho(i) > \rho(j)\}|$.}

The correspondence is as follows.
Relabel the candidates in the order of $\sigma_i$ and set $n := m$, so that a ranking $\rho$ becomes a permutation of $[n]$ and $\pi_i(\rho) = q_i \cdot \mathrm{inv}(\rho)$.
Set $y_c := \zeta_c / q_i$ for every opponent $c$, and $y_{c^*} := -\big(\sum_{c \neq c^*} \zeta_c\big) / q_i$.
Then the pricing objective of type $i$ is $q_i \big(\rho^T \vey - \mathrm{inv}(\rho)\big)$.
Note that $\vey$ has one negative coordinate, which is why \textsc{Inv-Perm-Max} allows an arbitrary $\vey \in \mathbb{Q}^n$.

\begin{openproblem} \label{op:uniform}
	Is \textsc{Inv-Perm-Max} solvable in polynomial time?
\end{openproblem}

Despite the simplicity of its statement, \textsc{Inv-Perm-Max} resisted many proof attempts by both humans and AI.
At its core, one wants to place large values of $\rho$ to those indices where $\vey$ is also large, but these transfers have to be paid for, one unit per swap.
Various natural local improvement algorithms get stuck in local optima.
On the other hand, the structure is quite restrictive, and all our attempts to construct a hardness gadget have also failed.

\section{Future Work}\label{sec:future-work}

This case study is an invitation to a new scientific program.
The society continuum is not specific to bribery: it applies to any problem whose input is a population of discrete agents who can be meaningfully partitioned into types.
For each such problem one can ask the same questions. Does the continuum make it tractable? Does it embed its discrete version, as homogeneous rules like Kemeny and Slater do? Or is it hard for new, continuum-specific reasons?
A few directions strike us as particularly promising.

\paragraph{The rest of the bribery landscape}
First, how do the question marks in Table~\ref{tab:landscape} resolve?
For example, discrete Dodgson-\textsc{Bribery} lives \emph{above} NP, since already Dodgson winner determination is $\mathrm{P}^{\mathrm{NP}}_{||}$-complete~\cite{DBLP:journals/jacm/HemaspaandraHR97}; does its continuous variant drop to NP-completeness or below?
Next: candidate control, further cost models, and further rules await.
Copeland's rule is a test of our framework's limits: its winner regions are unions of exponentially many polyhedra, so neither Theorem~\ref{thm:poly-lp} nor the Configuration LP approach applies, at least not in any obvious way.

\paragraph{Discrete consequences}

For certain problems, LP/IP proximity results~\cite{DBLP:journals/mp/KnopKLMO23} apply and already decide the fate of all but $f(m,\tau)$ voters.
This yields both kernels~\cite{fomin_lokshtanov_saurabh_zehavi_2019} (effective preprocessing algorithms) and additive $\mathrm{OPT} + f(m,\tau)$ approximations, which may be quite useful when $n \gg m, \tau$.
When does this transfer work, when does it stop, and why?

\paragraph{Beyond voting}
Consider the society continuum model in the following subareas of computational social choice: matching under preferences, coalition formation, and fair division.
The discrete versions of problems there are defined over discrete agent populations, and studying the continuous variant is meaningful and, as far as we know, computationally wide open.

\section*{Disclosure on the Use of AI Tools}
We have used OpenAI Codex and Claude Code (versions ranging from January--July 2026) as writing assistants, for literature search, and as interactive assistants in developing, checking, and adversarially verifying the proofs.
All definitions, statements, and proofs were reviewed by the authors, who take full responsibility for the content.
\section*{Funding}
Martin Koutecký is partially supported by Charles University project UNCE 24/SCI/008, by the ERC-CZ project LL2406 of the Ministry of Education of Czech Republic, and by the project 25-17221S of GA ČR.
Nikolaos Melissinos is partially supported by Charles University projects UNCE 24/SCI/008 and PRIMUS 24/SCI/012, and by the project 25-17221S of GA ČR. 
Tung Anh Vu and Lluís Sabater are partially supported by the project 25-17221S of GA ČR.

\bibliographystyle{ACM-Reference-Format}
\bibliography{references}

@book{GreenGerber2019GOTV,
  author =        {Donald P. Green and Alan S. Gerber},
  edition =       {4},
  publisher =     {Brookings Institution Press},
  title =         {Get Out the Vote: How to Increase Voter Turnout},
  year =          {2019},
  isbn =          {9780815736943},
}

@book{Issenberg2012VictoryLab,
  author =        {Sasha Issenberg},
  publisher =     {Crown},
  title =         {The Victory Lab: The Secret Science of Winning
                   Campaigns},
  year =          {2012},
}

@misc{Cassidy2016Remain,
  author =        {John Cassidy},
  journal =       {The New Yorker},
  month =         jun,
  title =         {Why the Remain Campaign Lost the Brexit Vote},
  year =          {2016},
  url =           {https://www.newyorker.com/news/john-cassidy/why-the-remain-
                  campaign-lost-the-brexit-vote},
}

@book{pathria2016statistical,
  author =        {Pathria, Raj Kumar and Beale, Paul D.},
  edition =       {Third},
  publisher =     {Academic Press},
  title =         {Statistical Mechanics},
  year =          {2011},
}

@article{castellano2009statistical,
  author =        {Castellano, Claudio and Fortunato, Santo and
                   Loreto, Vittorio},
  journal =       {Reviews of modern physics},
  number =        {2},
  pages =         {591},
  publisher =     {APS},
  title =         {Statistical physics of social dynamics},
  volume =        {81},
  year =          {2009},
}

@book{galam2012sociophysics,
  author =        {Galam, S.},
  publisher =     {Springer New York},
  series =        {Understanding Complex Systems},
  title =         {Sociophysics: A Physicist's Modeling of
                   Psycho-political Phenomena},
  year =          {2012},
  isbn =          {9781461420316},
  url =           {https://books.google.cz/books?id=Tqm2P34aK-gC},
}

@book{pentland2014social,
  author =        {Pentland, Alex},
  publisher =     {Penguin},
  title =         {Social physics: How good ideas spread-the lessons
                   from a new science},
  year =          {2014},
}

@article{acemoglu2011opinion,
  author =        {Acemoglu, Daron and Ozdaglar, Asuman},
  journal =       {Dynamic Games and Applications},
  pages =         {3--49},
  publisher =     {Springer},
  title =         {Opinion dynamics and learning in social networks},
  volume =        {1},
  year =          {2011},
}

@article{gomes2010discrete,
  author =        {Gomes, Diogo A and Mohr, Joana and
                   Souza, Rafael Rigao},
  journal =       {Journal de math{\'e}matiques pures et appliqu{\'e}es},
  number =        {3},
  pages =         {308--328},
  publisher =     {Elsevier},
  title =         {Discrete time, finite state space mean field games},
  volume =        {93},
  year =          {2010},
}

@book{saari1994GeometryOfVoting,
  author =        {Donald G. Saari},
  publisher =     {Springer},
  title =         {Geometry of Voting},
  year =          {1994},
}

@article{hochbaumShamir1991HM,
  author =        {Dorit S. Hochbaum and Ron Shamir},
  journal =       {Operations Research},
  number =        {4},
  pages =         {648--653},
  title =         {Strongly Polynomial Algorithms for the High
                   Multiplicity Scheduling Problem},
  volume =        {39},
  year =          {1991},
  doi =           {10.1287/opre.39.4.648},
}

@article{fitzsimmonsHemaspaandra2019HM,
  author =        {Zack Fitzsimmons and Edith Hemaspaandra},
  journal =       {Autonomous Agents and Multi-Agent Systems},
  pages =         {383--402},
  title =         {High-multiplicity election problems},
  volume =        {33},
  year =          {2019},
  doi =           {10.1007/s10458-019-09410-4},
}

@inproceedings{elkind2009swapbribery,
  author =        {Edith Elkind and Piotr Faliszewski and
                   Arkadii Slinko},
  booktitle =     {Algorithmic Game Theory (SAGT 2009)},
  pages =         {299--310},
  publisher =     {Springer},
  series =        {Lecture Notes in Computer Science},
  title =         {Swap Bribery},
  volume =        {5814},
  year =          {2009},
  doi =           {10.1007/978-3-642-04645-2_27},
}

@article{DBLP:journals/iandc/BredereckCFNN16,
  author =        {Robert Bredereck and Jiehua Chen and
                   Piotr Faliszewski and Andr{\'{e}} Nichterlein and
                   Rolf Niedermeier},
  journal =       {Inf. Comput.},
  pages =         {140--164},
  title =         {Prices matter for the parameterized complexity of
                   shift bribery},
  volume =        {251},
  year =          {2016},
  bibsource =     {dblp computer science bibliography, https://dblp.org},
  doi =           {10.1016/j.ic.2016.08.003},
  url =           {https://doi.org/10.1016/j.ic.2016.08.003},
}

@inproceedings{hemaspaandraSchnoor2016dichotomy,
  author =        {Edith Hemaspaandra and Henning Schnoor},
  booktitle =     {{ECAI} 2016 -- 22nd European Conference on Artificial
                   Intelligence},
  pages =         {1071--1079},
  publisher =     {{IOS} Press},
  series =        {Frontiers in Artificial Intelligence and
                   Applications},
  title =         {Dichotomy for Pure Scoring Rules under Manipulative
                   Electoral Actions},
  volume =        {285},
  year =          {2016},
  doi =           {10.3233/978-1-61499-672-9-1071},
}

@inproceedings{BaumeisterHR19,
  author =        {Dorothea Baumeister and Tobias Hogrebe and Lisa Rey},
  booktitle =     {The Thirty-Third {AAAI} Conference on Artificial
                   Intelligence, {AAAI} 2019, The Thirty-First
                   Innovative Applications of Artificial Intelligence
                   Conference, {IAAI} 2019, The Ninth {AAAI} Symposium
                   on Educational Advances in Artificial Intelligence,
                   {EAAI} 2019, Honolulu, Hawaii, USA, January 27 -
                   February 1, 2019},
  pages =         {1764--1771},
  publisher =     {{AAAI} Press},
  title =         {Generalized Distance Bribery},
  year =          {2019},
  bibsource =     {dblp computer science bibliography, https://dblp.org},
  doi =           {10.1609/AAAI.V33I01.33011764},
  url =           {https://doi.org/10.1609/aaai.v33i01.33011764},
}

@article{dornSchlotter2012multivariate,
  author =        {Britta Dorn and Ildik{\'o} Schlotter},
  journal =       {Algorithmica},
  number =        {1},
  pages =         {126--151},
  title =         {Multivariate Complexity Analysis of Swap Bribery},
  volume =        {64},
  year =          {2012},
}

@article{elkindFaliszewskiSlinko2012HammingDR,
  author =        {Edith Elkind and Piotr Faliszewski and
                   Arkadii M. Slinko},
  journal =       {Social Choice and Welfare},
  number =        {4},
  pages =         {891--905},
  title =         {Rationalizations of Condorcet-consistent rules via
                   distances of {H}amming type},
  volume =        {39},
  year =          {2012},
  doi =           {10.1007/s00355-011-0555-0},
}

@article{faliszewskiKarpovObraztsova2022groupSeparable,
  author =        {Piotr Faliszewski and Alexander Karpov and
                   Svetlana Obraztsova},
  journal =       {Autonomous Agents and Multi-Agent Systems},
  number =        {1},
  pages =         {18},
  title =         {The complexity of election problems with
                   group-separable preferences},
  volume =        {36},
  year =          {2022},
  doi =           {10.1007/s10458-022-09549-7},
}

@article{young1977ExtendingCondorcet,
  author =        {H. Peyton Young},
  journal =       {Journal of Economic Theory},
  number =        {2},
  pages =         {335--353},
  title =         {Extending Condorcet's Rule},
  volume =        {16},
  year =          {1977},
}

@article{fishburn1977Condorcet,
  author =        {Peter C. Fishburn},
  journal =       {SIAM Journal on Applied Mathematics},
  number =        {3},
  pages =         {469--489},
  title =         {Condorcet Social Choice Functions},
  volume =        {33},
  year =          {1977},
}

@article{rotheSpakowskiVogel2003YoungExactComplexity,
  author =        {J{\"o}rg Rothe and Holger Spakowski and
                   J{\"o}rg Vogel},
  journal =       {Theory of Computing Systems},
  number =        {4},
  pages =         {375--386},
  title =         {Exact Complexity of the Winner Problem for {Y}oung
                   Elections},
  volume =        {36},
  year =          {2003},
}

@article{bartholdiToveyTrick1989Dodgson,
  author =        {Bartholdi, III, John J. and Tovey, Craig A. and
                   Trick, Michael A.},
  journal =       {Social Choice and Welfare},
  number =        {2},
  pages =         {157--165},
  title =         {Voting Schemes for Which It Can Be Difficult to Tell
                   Who Won the Election},
  volume =        {6},
  year =          {1989},
}

@article{DBLP:journals/jacm/HemaspaandraHR97,
  author =        {Edith Hemaspaandra and Lane A. Hemaspaandra and
                   J{\"o}rg Rothe},
  journal =       {J. {ACM}},
  month =         nov,
  number =        {6},
  pages =         {806--825},
  title =         {Exact Analysis of Dodgson Elections: Lewis Carroll's
                   1876 Voting System Is Complete for Parallel Access to
                   {NP}},
  volume =        {44},
  year =          {1997},
}

@article{schlotterFaliszewskiElkind2017campaign,
  author =        {Ildik{\'o} Schlotter and Piotr Faliszewski and
                   Edith Elkind},
  journal =       {Algorithmica},
  number =        {1},
  pages =         {84--115},
  title =         {Campaign Management under Approval-Driven Voting
                   Rules},
  volume =        {77},
  year =          {2017},
  doi =           {10.1007/s00453-015-0064-0},
}

@article{erdelyiFellowsRotheSchend2015bucklin,
  author =        {Gabor Erd{\'e}lyi and Michael R. Fellows and
                   J{\"o}rg Rothe and Lena Schend},
  journal =       {Journal of Computer and System Sciences},
  number =        {4},
  pages =         {632--660},
  title =         {Control complexity in {B}ucklin and fallback voting:
                   {A} theoretical analysis},
  volume =        {81},
  year =          {2015},
  doi =           {10.1016/j.jcss.2014.11.002},
}

@article{DBLP:journals/tcs/HemaspaandraSV05,
  author =        {Edith Hemaspaandra and Holger Spakowski and
                   J{\"{o}}rg Vogel},
  journal =       {Theor. Comput. Sci.},
  number =        {3},
  pages =         {382--391},
  title =         {The complexity of Kemeny elections},
  volume =        {349},
  year =          {2005},
  bibsource =     {dblp computer science bibliography, https://dblp.org},
  doi =           {10.1016/j.tcs.2005.08.031},
  url =           {https://doi.org/10.1016/j.tcs.2005.08.031},
}

@inproceedings{DBLP:conf/stacs/Lampis22,
  author =        {Michael Lampis},
  booktitle =     {39th International Symposium on Theoretical Aspects
                   of Computer Science, {STACS} 2022, March 15-18, 2022,
                   Marseille, France (Virtual Conference)},
  editor =        {Petra Berenbrink and Benjamin Monmege},
  pages =         {45:1--45:14},
  publisher =     {Schloss Dagstuhl - Leibniz-Zentrum f{\"{u}}r
                   Informatik},
  series =        {LIPIcs},
  title =         {Determining a Slater Winner Is Complete for Parallel
                   Access to {NP}},
  volume =        {219},
  year =          {2022},
  bibsource =     {dblp computer science bibliography, https://dblp.org},
  doi =           {10.4230/LIPIcs.STACS.2022.45},
  url =           {https://doi.org/10.4230/LIPIcs.STACS.2022.45},
}

@article{faliszewskiReischRotheSchend2015bucklin,
  author =        {Piotr Faliszewski and Yannick Reisch and
                   J{\"o}rg Rothe and Lena Schend},
  journal =       {Autonomous Agents and Multi-Agent Systems},
  number =        {6},
  pages =         {1091--1124},
  title =         {Complexity of Manipulation, Bribery, and Campaign
                   Management in {B}ucklin and Fallback Voting},
  volume =        {29},
  year =          {2015},
  doi =           {10.1007/s10458-014-9277-x},
}

@article{brandtBrillHemaspaandraHemaspaandra2015bypassing,
  author =        {Felix Brandt and Markus Brill and Edith Hemaspaandra and
                   Lane A. Hemaspaandra},
  journal =       {Journal of Artificial Intelligence Research},
  pages =         {439--496},
  title =         {Bypassing Combinatorial Protections: Polynomial-Time
                   Algorithms for Single-Peaked Electorates},
  volume =        {53},
  year =          {2015},
  doi =           {10.1613/jair.4647},
}

@inproceedings{fitzsimmonsHemaspaandraHooverNarvaez2019control,
  author =        {Zack Fitzsimmons and Edith Hemaspaandra and
                   Alexander Hoover and David E. Narv{\'a}ez},
  booktitle =     {Proceedings of the 33rd {AAAI} Conference on
                   Artificial Intelligence ({AAAI} 2019)},
  pages =         {1933--1940},
  title =         {Very Hard Electoral Control Problems},
  year =          {2019},
}

@article{betzlerDorn2010possibleWinner,
  author =        {Nadja Betzler and Britta Dorn},
  journal =       {Journal of Computer and System Sciences},
  number =        {8},
  pages =         {812--836},
  title =         {Towards a Dichotomy for the Possible Winner Problem
                   in Elections Based on Scoring Rules},
  volume =        {76},
  year =          {2010},
  doi =           {10.1016/j.jcss.2010.05.003},
}

@article{baumeisterRothe2012finalStep,
  author =        {Dorothea Baumeister and J{\"o}rg Rothe},
  journal =       {Information Processing Letters},
  number =        {5},
  pages =         {186--190},
  title =         {Taking the Final Step to a Full Dichotomy of the
                   Possible Winner Problem in Pure Scoring Rules},
  volume =        {112},
  year =          {2012},
  doi =           {10.1016/j.ipl.2011.11.016},
}

@book{saari1995BasicGeometry,
  author =        {Donald G. Saari},
  publisher =     {Springer},
  title =         {Basic Geometry of Voting},
  year =          {1995},
}

@book{saari2001DecisionsElections,
  author =        {Donald G. Saari},
  publisher =     {Cambridge University Press},
  title =         {Decisions and Elections: Explaining the Unexpected},
  year =          {2001},
}

@inproceedings{mccabeDansted2008DodgsonAbsurdity,
  author =        {John C. McCabe-Dansted},
  booktitle =     {Proceedings of the 2nd International Workshop on
                   Computational Social Choice (COMSOC 2008)},
  title =         {Dodgson's Rule Approximations and Absurdity},
  year =          {2008},
}

@incollection{meskanenNurmi2008DistRational,
  author =        {Tuukka Meskanen and Hannu Nurmi},
  booktitle =     {Power, Freedom, and Voting},
  editor =        {Matthew Braham and F. Steffen},
  publisher =     {Springer},
  title =         {Closeness Counts in Social Choice},
  year =          {2008},
}

@inproceedings{xiaConitzerProcaccia2010scheduling,
  author =        {Xia, Lirong and Conitzer, Vincent and
                   Procaccia, Ariel D.},
  booktitle =     {Proceedings of the 11th {ACM} Conference on
                   Electronic Commerce ({EC} '10)},
  pages =         {275--284},
  title =         {A Scheduling Approach to Coalitional Manipulation},
  year =          {2010},
}

@inproceedings{freemanBrillConitzer2015tiebreaking,
  author =        {Freeman, Rupert and Brill, Markus and
                   Conitzer, Vincent},
  booktitle =     {Proceedings of the 14th International Conference on
                   Autonomous Agents and Multiagent Systems ({AAMAS}
                   '15)},
  pages =         {1401--1409},
  title =         {General Tiebreaking Schemes for Computational Social
                   Choice},
  year =          {2015},
}

@inproceedings{meir2015plurality,
  author =        {Meir, Reshef},
  booktitle =     {Proceedings of the 29th {AAAI} Conference on
                   Artificial Intelligence ({AAAI} '15)},
  pages =         {2103--2109},
  title =         {Plurality Voting under Uncertainty},
  year =          {2015},
}

@article{GLS1981,
  author =        {Gr\"otschel, Martin and Lov\'asz, L\'aszl\'o and
                   Schrijver, Alexander},
  journal =       {Combinatorica},
  number =        {2},
  pages =         {169--197},
  title =         {The ellipsoid method and its consequences in
                   combinatorial optimization},
  volume =        {1},
  year =          {1981},
}

@inproceedings{JainMahdianSalavatipour2003packing,
  author =        {Jain, Kamal and Mahdian, Mohammad and
                   Salavatipour, Mohammad R.},
  booktitle =     {Proceedings of the 14th Annual {ACM-SIAM} Symposium
                   on Discrete Algorithms (SODA '03)},
  pages =         {266--274},
  title =         {Packing {S}teiner trees},
  year =          {2003},
}

@inproceedings{Xu2016security,
  author =        {Xu, Haifeng},
  booktitle =     {Proceedings of the 2016 {ACM} Conference on Economics
                   and Computation ({EC} '16)},
  pages =         {497--514},
  title =         {The Mysteries of Security Games: Equilibrium
                   Computation Becomes Combinatorial Algorithm Design},
  year =          {2016},
}

@inproceedings{bhaskarChengKoSwamy2016signaling,
  author =        {Bhaskar, Umang and Cheng, Yu and Ko, Young Kun and
                   Swamy, Chaitanya},
  booktitle =     {Proceedings of the 2016 {ACM} Conference on Economics
                   and Computation ({EC} '16)},
  title =         {Hardness Results for Signaling in {B}ayesian Zero-Sum
                   and Network Routing Games},
  year =          {2016},
}

@book{HandbookComSoC,
  editor =        {Felix Brandt and Vincent Conitzer and Ulle Endriss and
                   J{\'{e}}r{\^{o}}me Lang and Ariel D. Procaccia},
  publisher =     {Cambridge University Press},
  title =         {Handbook of Computational Social Choice},
  year =          {2016},
  bibsource =     {dblp computer science bibliography, https://dblp.org},
  doi =           {10.1017/CBO9781107446984},
  isbn =          {9781107446984},
  url =           {https://doi.org/10.1017/CBO9781107446984},
}

@inproceedings{DBLP:conf/atal/KnopKM18,
  author =        {Dušan Knop and Martin Kouteck{\'{y}} and
                   Matthias Mnich},
  booktitle =     {Proceedings of the 17th International Conference on
                   Autonomous Agents and MultiAgent Systems, {AAMAS}
                   2018, Stockholm, Sweden, July 10-15, 2018},
  editor =        {Elisabeth Andr{\'{e}} and Sven Koenig and
                   Mehdi Dastani and Gita Sukthankar},
  pages =         {256--264},
  publisher =     {International Foundation for Autonomous Agents and
                   Multiagent Systems Richland, SC, {USA} / {ACM}},
  title =         {A Unifying Framework for Manipulation Problems},
  year =          {2018},
  bibsource =     {dblp computer science bibliography, https://dblp.org},
  url =           {http://dl.acm.org/citation.cfm?id=3237427},
}

@article{McGarvey1953,
  author =        {David C. McGarvey},
  journal =       {Econometrica},
  number =        {4},
  pages =         {608--610},
  title =         {A Theorem on the Construction of Voting Paradoxes},
  volume =        {21},
  year =          {1953},
}

@inproceedings{XiaConitzer2008GSR,
  author =        {Lirong Xia and Vincent Conitzer},
  booktitle =     {Proceedings of the 9th {ACM} Conference on Electronic
                   Commerce ({EC})},
  pages =         {109--118},
  title =         {Generalized Scoring Rules and the Frequency of
                   Coalitional Manipulability},
  year =          {2008},
}

@article{MosselProcacciaRacz2013smooth,
  author =        {Elchanan Mossel and Ariel D. Procaccia and
                   Mikl{\'o}s Z. R{\'a}cz},
  journal =       {Journal of Artificial Intelligence Research},
  pages =         {923--951},
  title =         {A Smooth Transition from Powerlessness to Absolute
                   Power},
  volume =        {48},
  year =          {2013},
}

@inproceedings{XiaConitzer2009FLC,
  author =        {Lirong Xia and Vincent Conitzer},
  booktitle =     {Proceedings of the 21st International Joint
                   Conference on Artificial Intelligence ({IJCAI})},
  pages =         {336--341},
  title =         {Finite Local Consistency Characterizes Generalized
                   Scoring Rules},
  year =          {2009},
}

@book{GLS,
  author =        {Gr\"otschel, Martin and Lov\'asz, L\'aszl\'o and
                   Schrijver, Alexander},
  edition =       {Second},
  pages =         {xii+362},
  publisher =     {Springer-Verlag, Berlin},
  series =        {Algorithms and Combinatorics},
  title =         {Geometric algorithms and combinatorial optimization},
  volume =        {2},
  year =          {1993},
}

@inproceedings{elkindFaliszewski2010campaign,
  author =        {Edith Elkind and Piotr Faliszewski},
  booktitle =     {Internet and Network Economics - 6th International
                   Workshop, {WINE} 2010},
  pages =         {473--482},
  publisher =     {Springer},
  series =        {Lecture Notes in Computer Science},
  title =         {Approximation Algorithms for Campaign Management},
  volume =        {6484},
  year =          {2010},
  doi =           {10.1007/978-3-642-17572-5_40},
}

@book{HardyLittlewoodPolya1952,
  author =        {G. H. Hardy and J. E. Littlewood and G. P{\'o}lya},
  edition =       {2nd},
  publisher =     {Cambridge University Press},
  title =         {Inequalities},
  year =          {1952},
}

@article{MegiddoChandrasekaran1989perturbation,
  author =        {Nimrod Megiddo and R. Chandrasekaran},
  journal =       {Operations Research Letters},
  number =        {6},
  pages =         {305--308},
  title =         {On the $\varepsilon$-Perturbation Method for Avoiding
                   Degeneracy},
  volume =        {8},
  year =          {1989},
  doi =           {10.1016/0167-6377(89)90014-X},
}

@book{BertsimasTsitsiklis1997,
  address =       {Belmont, MA},
  author =        {Dimitris Bertsimas and John N. Tsitsiklis},
  publisher =     {Athena Scientific},
  title =         {Introduction to Linear Optimization},
  year =          {1997},
  isbn =          {978-1-886529-19-9},
}

@article{Lawler1978sequencing,
  author =        {Eugene L. Lawler},
  journal =       {Annals of Discrete Mathematics},
  pages =         {75--90},
  title =         {Sequencing Jobs to Minimize Total Weighted Completion
                   Time Subject to Precedence Constraints},
  volume =        {2},
  year =          {1978},
}

@article{DBLP:journals/ior/LenstraK78,
  author =        {Jan Karel Lenstra and A. H. G. Rinnooy Kan},
  journal =       {Oper. Res.},
  number =        {1},
  pages =         {22--35},
  title =         {Complexity of Scheduling under Precedence
                   Constraints},
  volume =        {26},
  year =          {1978},
  bibsource =     {dblp computer science bibliography, https://dblp.org},
  doi =           {10.1287/OPRE.26.1.22},
  url =           {https://doi.org/10.1287/opre.26.1.22},
}

@article{DBLP:journals/mp/KnopKLMO23,
  author =        {Dušan Knop and Martin Kouteck{\'{y}} and Asaf Levin and
                   Matthias Mnich and Shmuel Onn},
  journal =       {Math. Program.},
  number =        {1},
  pages =         {199--227},
  title =         {High-multiplicity N-fold {IP} via configuration {LP}},
  volume =        {200},
  year =          {2023},
  bibsource =     {dblp computer science bibliography, https://dblp.org},
  doi =           {10.1007/s10107-022-01882-9},
  url =           {https://doi.org/10.1007/s10107-022-01882-9},
}

@book{fomin_lokshtanov_saurabh_zehavi_2019,
  author =        {Fomin, Fedor V. and Lokshtanov, Daniel and
                   Saurabh, Saket and Zehavi, Meirav},
  publisher =     {Cambridge University Press},
  title =         {Kernelization: Theory of Parameterized Preprocessing},
  year =          {2019},
  doi =           {10.1017/9781107415157},
}

@inproceedings{DornSchlotter2010COMSOC,
  author    = {Britta Dorn and Ildik{\'o} Schlotter},
  title     = {Multivariate Complexity Analysis of Swap Bribery},
  booktitle = {Proceedings of the 3rd International Workshop on Computational Social Choice (COMSOC 2010)},
  year      = {2010},
  url       = {https://comsoc-community.org/assets/proceedings/comsoc-2010/Dorn.pdf}
}

\end{document}